\title{Simulating Parity Reasoning (extended version)\thanks{The original version of the paper has been accepted to 19th International Conference on Logic for Programming Artificial Intelligence and Reasoning, LPAR-19. The extended version contains proofs and an additional section ``Connection to Treewidth''}}
\author{Tero Laitinen \and Tommi Junttila \and Ilkka Niemel\"a}
\institute{Aalto University\\
Department of Information and Computer Science\\
PO Box 15400, FI-00076 Aalto, Finland\\
\email{{Tero.Laitinen,Tommi.Junttila,Ilkka.Niemela}@aalto.fi}
}
\newcommand{\tmpfname}[1]{f_{#1}}
\newcommand{\tmpf}[2]{\tmpfname{#2}(#1)}
\newcommand{\Models}{\ensuremath{\models}}
\newcommand{\NotModels}{\ensuremath{\not\models}}
\newcommand{\Equal}{\equiv}
\newcommand{\defrule}[2]{
\begin{tabular}{@{}c@{}}
        \ensuremath{#1} \\
    \hline
        \ensuremath{#2} 
    \end{tabular}}
\newcommand{\inferencerule}[3]{\defrule{{#1}\hspace{5mm}{#2}}{#3}}
\newcommand{\simplification}[3]{\ensuremath{#1 \left[ \substitution{#2}{#3}\right] } }
\newcommand{\substitution}[2]{#1/#2}
\newcommand{\SUBST}{{\ensuremath{\textup{\small\textsf{Subst}}}}}
\newcommand{\UP}{\ensuremath{\textup{\textsf{\small{UP}}}}}
\newcommand{\UPderiv}{\ensuremath{\mathrel{\vdash_{\textup{\textsf{\scriptsize{UP}}}}}}}
\newcommand{\NotUPderiv}{\ensuremath{\mathrel{\not\vdash_{\textup{\textsf{\scriptsize{UP}}}}}}}
\newcommand{\parity}[1]{\ensuremath{p_{#1}}}
\newcommand{\Parityp}{\ensuremath{p'}}
\newcommand{\Paritypp}{\ensuremath{p''}}
\newcommand{\igraph}{G}
\newcommand{\vertices}{V}
\newcommand{\edges}{E}
\newcommand{\Vertex}{v}
\newcommand{\Vertexp}{v'}
\newcommand{\Vertexpp}{v''}
\newcommand{\TA}{\tau}
\newcommand{\Cut}{W}
\newcommand{\Vars}{X}
\newcommand{\CutA}{V_{\textup{a}}}
\newcommand{\CutB}{V_{\textup{b}}}
\newcommand{\unitruleP}{\ensuremath{\oplus\mbox{-Unit}^+}}
\newcommand{\unitruleN}{\ensuremath{\oplus\mbox{-Unit}^-}}
\newcommand{\eqvruleP}{\ensuremath{\oplus\mbox{-Eqv}^+}}
\newcommand{\eqvruleN}{\ensuremath{\oplus\mbox{-Eqv}^-}}
\newcommand{\CExpl}[1]{\mathit{Expl}(#1)}
\newcommand{\PExpl}[1]{\mathit{Expl_{\X}}(#1)}
\newcommand{\cnf}[1]{\operatorname{cnf}(#1)}
\newcommand{\set}[1]{\left\{ #1 \right\}}
\newcommand{\Set}[1]{\set{#1}}
\newcommand{\Setdef}[2]{\left\{{#1}\mid{#2}\right\}}
\newcommand{\Tuple}[1]{\ensuremath{\left\langle{#1}\right\rangle}}
\newcommand{\Card}[1]{\ensuremath{{\left|{#1}\right|}}}
\newcommand{\Tool}[1]{\textsf{\small #1}}
\newcommand{\Minisat}{\Tool{minisat}}
\newcommand{\Booleans}{\mathbb{B}}
\newcommand{\T}{\top}
\newcommand{\F}{\bot}
\newcommand{\Implies}{\Rightarrow}
\newcommand{\X}{\oplus}
\newcommand{\XX}{\,{\oplus}\,}
\newcommand{\Def}{\mathrel{:=}}
\newcommand{\VarsOf}[1]{\operatorname{vars}(#1)}
\newcommand{\ClausesOf}[2]{\operatorname{clauses}(#1, #2)}
\newcommand{\LitsOf}[1]{\operatorname{lits}(#1)}
\newcommand{\SetLC}[1]{\sum_{\XC \in #1} \XC}
\newcommand{\LCVarsGone}[1]{\operatorname{vars}^{\times}(#1)}
\newcommand{\orpart}{\phi_\textup{or}}
\newcommand{\xorpart}{\phi_\textup{xor}}
\newcommand{\xorclauses}{\phi_\textup{xor}}
\newcommand{\XC}{D}
\newcommand{\XCp}{D'}
\newcommand{\XCB}{E}
\newcommand{\ECsys}{\ensuremath{\textup{\textsc{EC}}}}
\newcommand{\IL}{\hat{l}}
\newcommand{\AL}{l}
\newcommand{\Labfunc}{L}
\newcommand{\Lab}[1]{\Labfunc(#1)}
\newcommand{\While}{\KW{while}}
\newcommand{\For}{\KW{for}}
\newcommand{\If}{\KW{if}}
\newcommand{\Return}{\KW{return}}
\newcommand{\KW}[1]{\textsf{\footnotesize{}{#1}}}
\begin{document}

\maketitle

\begin{abstract} 
Propositional satisfiability (SAT) solvers, which typically operate using
conjunctive normal form (CNF), have been successfully applied in many domains. 
However, in some application areas such as circuit verification, bounded model
checking, and logical cryptanalysis, instances can have many parity (xor)
constraints which may not be handled efficiently if translated to CNF. 
Thus, extensions to the CNF-driven search with various parity 
reasoning engines ranging from equivalence reasoning to incremental
Gaussian elimination have been proposed.
This paper studies how stronger parity reasoning techniques in the DPLL(XOR)
framework can be simulated by simpler systems: resolution, unit
propagation, and parity explanations. 
Such simulations are interesting, for example, for developing the next
generation SAT solvers capable of handling parity constraints efficiently.

%
%

%
\end{abstract}


\section{Introduction}

Propositional satisfiability (SAT) solver technology has developed rapidly
providing a powerful solution technique in many industrial application domains
(see e.g.~\cite{Handbook:CDCL}).
The efficiency of SAT solvers is partly due to efficient data structures and
algorithms that allow very efficient Boolean constraint propagation and
conflict-driven clause learning in conjunctive normal form (CNF).
Straightforward Tseitin-translation~\cite{tseitin} of a problem instance to CNF
may result in poor performance, especially in the case of parity (xor)
constraints, that can be abundant in applications such as circuit
verification, bounded model checking, and logical cryptanalysis.
Although pure parity constraints (linear arithmetic modulo two) can be
efficiently solved with Gaussian elimination, they can be very difficult for
resolution \cite{Urquhart:JACM1987} and thus for state-of-the-art
conflict-driven clause learning (CDCL) satisfiability solvers as their
underlying proof system is equivalent to
resolution~\cite{PipatsrisawatDarwiche:AI2011}.
Due to this inherent hardness of parity constraints, several approaches to
combining CNF-level and xor-constraint reasoning have been proposed
~\cite{Li:AAAI2000,Li:IPL2000,BaumgartnerMassacci:CL2000,Li:DAM2003,HeuleMaaren:SAT2004,HeuleEtAl:SAT2004,Chen:SAT2009,SoosEtAl:SAT2009,LJN:ECAI2010,Soos,LJN:ICTAI2011,LJN:SAT2012,LJN:CP2012,LJN:ICTAI2012} (see~\cite{Weaver:2012:SAE:2520447} for an alternative state-based approach).
In these approaches, CNF-driven search has been extended with various parity
reasoning techniques, ranging from plain unit propagation via equivalence
reasoning to Gaussian elimination.
Stronger parity reasoning may prune the search space effectively but often at
the expense of high computational overhead, so resorting to simpler but more
efficiently implementable systems, e.g. unit propagation, may lead to better
performance.

In this paper, we study to what extent such simpler systems can simulate
stronger parity reasoning engines in the DPLL(XOR)
    framework~\cite{LJN:ECAI2010}. The DPLL(XOR), similar to the DPLL($T$)
    approach~\cite{NieuwenhuisEtAl:JACM06} to Satisfiability Modulo Theories,
    is a framework to integrate a parity reasoning engine to a CDCL SAT solver.
The aim is to offer generalizable results that provide a foundation for
developing techniques to handle xor-constraints in next generation SAT solvers.
Instead of developing yet another propagation engine and assessing it through
an experimental comparison we believe that useful insights can be acquired by
considering unanswered questions on how some existing propagation engines and
proof systems relate to each other on a more fundamental level.
Several experimental studies have already shown that SAT solvers extended with
different parity reasoning engines can outperform unmodified solvers on some
instance families, so we focus on more general results on the relationships
between resolution, unit propagation, equivalence reasoning, parity
explanations, and Gauss-Jordan elimination, which is a complete parity
reasoning technique.

We show that resolution can simulate equivalence reasoning efficiently, which
raises a question whether significant reductions in solving time can be gained
by integrating specialized equivalence reasoning in a SAT solver since in
theory it does not strengthen the underlying proof system of the SAT solver. 
In practice, though, the performance of the SAT solver is largely governed by variable selection and other heuristics that are likely to be non-optimal, which may justify the pragmatic use of equivalence reasoning.

Although equivalence reasoning alone is not enough to cross the ``exponential gap'' between resolution and Gauss-Jordan elimination, another light-weight parity reasoning technique comes intriguingly close at simulating complete parity reasoning. We show that parity explanations, an efficiently implementable conflict explanation
technique, on nondeterministic unit propagation derivations can simulate
Gauss-Jordan elimination on a restricted yet practically relevant class of
xor-constraint conjunctions. 
Choosing assumptions and unit propagation steps nondeterministically may not be
possible in an actual implementation with greedy propagation strategies.
However, we present further experimental results indicating that the
simulation may still work in an actual implementation to some degree
provided that parity explanations are stored as learned xor-constraints
as described in~\cite{LJN:SAT2012}.

Additional xor-constraints can also be added to the formula in a preprocessing
step in order to enable unit propagation to deduce more implied literals, which
has the benefit of not requiring modifications to the SAT solver.
We present a translation that enables unit propagation to simulate parity
reasoning systems stronger than equivalence reasoning through the use of
additional xor-constraints on auxiliary variables.
The translation takes into account the structure of the original conjunction of
xor-constraints and can produce compact formulas for sparsely connected
instances.
Using the translation to simulate full Gauss-Jordan elimination with plain unit
propagation requires an exponential number of additional xor-constraints in the
worst case, but we show that the translation is polynomial for instance
families of bounded treewidth.
Recently, it has been shown in~\cite{Kullmann:Sep2013} that a conjunction of
xor-constraints does not have a polynomial-size ``arc consistent''
CNF-representation, which implies it is not feasible to simulate Gauss-Jordan
elimination by unit propagation in the general case. 
On many instances, though, better solver performance can be obtained by
simulating a weaker parity reasoning system as it reduces the size of the
translation substantially.
By applying our previous results on detecting whether unit propagation or
equivalence reasoning is enough to deduce all implied literals, the size of the
translation can be optimized further.
The experimental evaluation on a challenging benchmark set suggests that the
translation can lead to significant reduction in the solving time for some
instances.

%
%

%
%
%
%
The proofs of lemmas and theorems are in the appendix.

\section{Preliminaries}

\newcommand{\Var}{x}
\newcommand{\AnotherVar}{y}
\newcommand{\ThirdVar}{z}

Let $\Booleans = \Set{\F,\T}$ be the set of truth values ``false'' and ``true''.
A literal is a Boolean variable $x$ or its negation $\neg x$
(as usual, $\neg \neg x$ will mean $x$),
and a clause is a disjunction of literals.
If $\phi$ is any kind of formula or equation,
(i) $\VarsOf{\phi}$ is the set of variables occurring in it,
(ii) $\LitsOf{\phi} = \Setdef{x,\neg x}{x\in\VarsOf{\phi}}$ is the set of literals over $\VarsOf{\phi}$,
and
(iii) a truth assignment for $\phi$ is a, possibly partial,
function $\TA : \VarsOf{\phi} \to \Booleans$.
A truth assignment satisfies (i) a variable $x$ if $\TA(x)=\T$,
(ii) a literal $\neg x$ if $\TA(x)=\F$, and
(iii) a clause $(l_1 \lor .. \lor l_k)$ if it satisfies at least one literal $l_i$ in the clause.

\paragraph{Resolution.}
\newcommand{\ResDer}{\pi}
\newcommand{\RP}{\pi_{r}}
\newcommand{\RC}{\hat{C}}
\newcommand{\cnfformula}{\phi}
Given two clauses, $x \lor C$ and ${\neg x} \lor D$
for arbitrary disjunctions of literals $C$ and $D$,
their resolvent is $C \lor D$.
Given a CNF formula $\cnfformula$, 
a resolution derivation on $\cnfformula$ is a
finite sequence $\ResDer = \RC_1 \RC_2 ... \RC_m$ of clauses such that
for all $1 \le i \le m$ it holds that either
(i) $\RC_i$ is a clause in $\cnfformula$,
or
(ii) $\RC_i$ is the resolvent of two clauses, $\RC_j$ and $\RC_k$, in $\ResDer$ with $1 \le j,k < i$.
A clause $C$ is resolution derivable from $\cnfformula$
if there is resolution derivation on $\cnfformula$ including $C$.
The formula $\cnfformula$ is unsatisfiable if and only if
the empty clause is resolution derivable from $\cnfformula$.

\paragraph{Xor-constraints.}
An \emph{xor-constraint} is an equation of the form $x_1 \X ... \X x_k \Equal p$,
where the $x_i$s are Boolean variables
and
$p \in \Booleans$ is the parity.\footnote{The correspondence of xor-constraints to the ``xor-clause'' representation used e.g.~in \cite{LJN:ECAI2010,LJN:ICTAI2011,LJN:SAT2012} is straightforward: $x_1 \X ... \X x_k \Equal \T$ corresponds to the xor-clause $(x_1 \X ... \X x_k)$ and $x_1 \X ... \X x_k \Equal \F$ to $(x_1 \X ... \X x_k \X \T)$.}
We implicitly assume that duplicate variables are always removed from
the equations,
e.g.~$x_1 \X x_2 \X x_1 \X x_3 \Equal \T$ is always simplified into
$x_2 \X x_3 \Equal \T$.
If the left hand side does not have variables, then it equals to $\F$;
the equation $\F \Equal \T$ is a contradiction and $\F \Equal \F$ a tautology.
We identify the xor-constraint $\Var \Equal \T$ with the literal $\Var$,
$\Var \Equal \F$ with $\neg\Var$,
$\F \Equal \F$ with $\T$,
and
$\T \Equal \F$ with $\F$.
A truth assignment $\TA$ satisfies an xor-constraint $x_1 \X ... \X x_k \Equal p$ if $\TA(x_1) \X ... \X \TA(x_k) = p$.
We use $\simplification{\XC}{x}{Y}$ to denote the xor-constraint
obtained from $\XC$ by substituting the variable $x$ in it with $Y$.
%
%
For instance,
$\simplification{(x_1 \X x_2 \X x_3 \Equal \T)}{x_1}{x_2 \oplus \T} =
{x_2 \oplus \T \oplus x_2 \oplus x_3 \Equal \T} =
{x_3 \Equal \F}$.
The straightforward CNF translation of an xor-constraint $\XC$ is denoted by $\cnf{\XC}$;
for instance,
$\cnf{x_1 \X x_2 \X x_3 \Equal \F} =
 (\neg x_1 \lor \neg x_2 \lor \neg x_3) \land 
 (\neg x_1 \lor x_2 \lor x_3) \land 
 (x_1 \lor \neg x_2 \lor x_3) \land 
 (x_1 \lor x_2 \lor \neg x_3)$.
\newcommand{\LinComb}{+}
\newcommand{\BigLinComb}{\sum}
We define the linear combination of two xor-constraints,
$\XC = (x_1 \X ... \X x_k \Equal p)$
and
$\XCB = (y_1 \X ... \X y_l \Equal q)$,
by
$\XC \LinComb \XCB =
 (x_1 \X ... \X x_k \X y_1 \X ... \X y_l \Equal {p \X q})$.
An xor-constraint $\XCB = (\Var_1 \X ... \X \Var_k \Equal \parity{})$ with $k \ge 1$
is a \emph{prime implicate} of a satisfiable xor-constraint conjunction $\xorclauses$
if
(i) $\xorclauses \Models \XCB$
but
(ii)
$\xorclauses \NotModels \XCB'$
for all xor-constraints $\XCB'$
for which $\VarsOf{\XCB'}$ is a proper subset of $\VarsOf{\XCB}$.

A \emph{cnf-xor formula} is a conjunction $\orpart \land \xorpart$,
where
$\orpart$ is a conjunction of clauses
and
$\xorpart$ is a conjunction of xor-constraints.
A truth assignment satisfies $\orpart \land \xorpart$ if it satisfies
every clause and xor-constraint in it.

%
%
\subsection{DPLL(XOR) and Xor-Reasoning Modules}

We are interested in solving the satisfiability of cnf-xor formulas of
the form $\orpart \land \xorpart$ defined above.
%
%
Similarly to the DPLL($T$) approach for Satisfiability Modulo
Theories, see e.g.~\cite{NieuwenhuisEtAl:JACM06,Handbook:SMT},
the DPLL(XOR) approach \cite{LJN:ECAI2010} for solving cnf-xor formulas
consists of
(i) a conflict-driven clause learning (CDCL) SAT solver that takes care of solving the CNF-part $\orpart$,
and
(ii) an \emph{xor-reasoning module} that handles the xor-part $\xorpart$.
The CDCL solver is the master process,
responsible of guessing values for the variables according to some heuristics
(``branching''),
performing propagation in the CNF-part, conflict analysis, restarts etc.
The xor-reasoning module receives variable values,
called xor-assumptions,
from the CDCL solver and
checks
(i) whether the xor-part can still be satisfied under the xor-assumptions,
and
(ii) whether some variable values, called xor-implied literals,
are implied by the xor-part and the xor-assumptions.
These checks can be incomplete,
like in~\cite{LJN:ECAI2010,LJN:ICTAI2011} for the satisfiability and
in~\cite{LJN:ECAI2010,LJN:ICTAI2011,SoosEtAl:SAT2009} for the implication checks,
as long as the satisfiability check is complete when all the variables have values.
%
%
The very basic interface for an xor-reasoning module can consist of the following methods:
\begin{itemize}
\item
  $\operatorname{init}(\xorpart)$
  initializes the module with $\xorpart$.
  It may return ``unsat'' if it finds $\xorpart$ unsatisfiable,
  or a set of \emph{xor-implied literals},
  i.e.~literals $\IL$ such that $\xorpart \Models \IL$ holds.

\item
  $\operatorname{assume}(\AL)$ is used to communicate a new variable value
  $\AL$ deduced in the CNF solver part to the xor-reasoning module.
  This value, called \emph{xor-assumption} literal $\AL$,
  is added to the list of current xor-assumptions.
  If $[\AL_1,...,\AL_k]$ are the current xor-assumptions,
  the module then tries to 
  (i) deduce whether $\xorpart \land \AL_1 \land ... \land \AL_k$ became unsatisfiable, i.e.~whether an \emph{xor-conflict} was encountered,
  and if this was not the case,
  (ii) find \emph{xor-implied literals},
  i.e.~literals $\IL$ for which $\xorpart \land \AL_1 \land ... \land \AL_k \Models \IL$ holds.
  The xor-conflict or the xor-implied literals are then returned to the CNF solver part so that it can start conflict analysis (in the case of xor-conflict) or
  extend its current partial truth assignment with the xor-implied literals.

  In order to facilitate conflict-driven backjumping and clause learning
  in the CNF solver part,
  the xor-reasoning module has to provide a clausal \emph{explanation} for each
  xor-conflict and xor-implied literal it reports.
  That is,
  \begin{itemize}
  \item if $\xorpart \land \AL_1 \land ... \land \AL_k$ is
  deduced to be unsatisfiable,
  then the module must report a (possibly empty) clause
  $({\neg l'_1} \lor ... \lor {\neg l'_m})$
  such that
  (i) each $l'_i$ is an xor-assumption or an xor-implied literal,
  and
  (ii) $\xorpart \land l'_1 \land ... \land l'_m$ is unsatisfiable
  (i.e.~$\xorpart \Models ({\neg l'_1} \lor ... \lor {\neg l'_m})$);
  and
  \item
  if it was deduced that
  $\xorpart \land \AL_1 \land ... \land \AL_k \Models \IL$ for some $\IL$,
  then the module must report a clause 
  $({\neg l'_1} \lor ... \lor {\neg l'_m} \lor \IL)$ such that
  (i) each $l'_i$ is an xor-assumption or an xor-implied literal reported earlier,
  and
  (ii)
  $\xorpart \land l'_1 \land ... \land l'_m \Models \IL$,
  i.e.~$\xorpart \Models ({\neg l'_1} \lor ... \lor {\neg l'_m} \lor \IL)$. 
  \end{itemize}
\item
  $\operatorname{backtrack}()$ retracts the latest xor-assumption
  and
  all the xor-implied literals deduced after it.
\end{itemize}
Naturally, variants of this interface are easily conceivable.
For instance, a larger set of xor-assumptions can be given with
the $\operatorname{assume}$ method at once instead of only one.

For xor-reasoning modules based on equivalence reasoning,
see~\cite{LJN:ECAI2010,LJN:ICTAI2011}.
The Gaussian and Gauss-Jordan elimination processes
in~\cite{SoosEtAl:SAT2009,Soos,HanJiang:CAV2012,LJN:ICTAI2012}
can also be easily seen as xor-reasoning modules.

\section{Equivalence Reasoning and Resolution}

We know that there exist infinite families of xor-constraint conjunctions $\xorpart$
whose CNF translations $\bigwedge_{\XC \in \xorclauses}\cnf{\XC}$
have no polynomial size resolution proofs \cite{Urquhart:JACM1987}.
On the other hand,
Gaussian elimination~\cite{Soos} can solve the satisfiability of xor-constraint conjunctions
in polynomial time
(and Gauss-Jordan~\cite{HanJiang:CAV2012,LJN:ICTAI2012}
 can detect all xor-implied literals as well).
As these elimination procedures can be computationally heavy,
more light-weight ``equivalence reasoning'' systems have been proposed~\cite{Li:IPL2000,HeuleEtAl:SAT2004,LJN:ECAI2010,LJN:ICTAI2011}.

Here we study how the equivalence reasoning systems \SUBST{}~\cite{LJN:ECAI2010} and \ECsys{}~\cite{LJN:ICTAI2011} relate to resolution.
These systems are equally powerful in detecting unsatisfiability and xor-implied literals (we'll use \SUBST{} due to its notational simplicity);
they are more powerful than unit propagation but weaker than
Gaussian/Gauss-Jordan elimination.
 %


\begin{figure}[tb]
  \centering
  \begin{tabular}{c@{\qquad}c@{\qquad}c@{\qquad}c}
    $\inferencerule{x \Equal \T}{\XC}{\simplification{\XC}{x}{\T}}$
    &
    $\inferencerule{x \Equal \F}{\XC}{\simplification{\XC}{x}{\F}}$
    &
    $\inferencerule{x \X y \Equal \F}{\XC}{\simplification{\XC}{x}{y}}$
    &
    $\inferencerule{x \X y \Equal \T}{\XC}{\simplification{\XC}{x}{y \X \T}}$
    \vspace{1mm}
    \\
    \unitruleP
    &
    \unitruleN
    &
    \eqvruleP
    &
    \eqvruleN
  \end{tabular}%
  \caption{Inference rules of \SUBST{}; $x$ and $y$ are variables, $\XC$ is an xor-constraint, and $x$ occurs in $\XC$.}
  \label{Fig:SUBST}%
\end{figure}

%
The \SUBST{} deduction system consists of the inference rules in Fig.~\ref{Fig:SUBST}.
Given a conjunction $\psi$ of xor-constraints,
a \emph{\SUBST-derivation} on it is a vertex-labeled directed acyclic graph
$\igraph = \Tuple{\vertices, \edges, \Labfunc}$ such that
for each vertex $\Vertex \in \vertices$ it holds that
(i) if $\Vertex$ has no incoming edges, then $\Lab{\Vertex}$ is an xor-constraint in $\psi$,
and
(ii) otherwise $\Vertex$ has two incoming edges, say from $\Vertexp$ and $\Vertexpp$, and $\Lab{\Vertex}$ is obtained from $\Lab{\Vertexp}$ and $\Lab{\Vertexpp}$ by applying one of the inference rules.
As an example, Fig.~\ref{Fig:SubstRes}(a) shows a \SUBST-derivation
on $(x \X y \X z \Equal \T) \land (x \X z \X w \Equal \F) \land (y \X w \X t \Equal \T) \land (x)$,
please ignore the ``Cut $\Cut$'' line for now.

\begin{figure}[tb]
  \centering
  \begin{tabular}{c@{\quad}c}
    \includegraphics[width=0.48\textwidth]{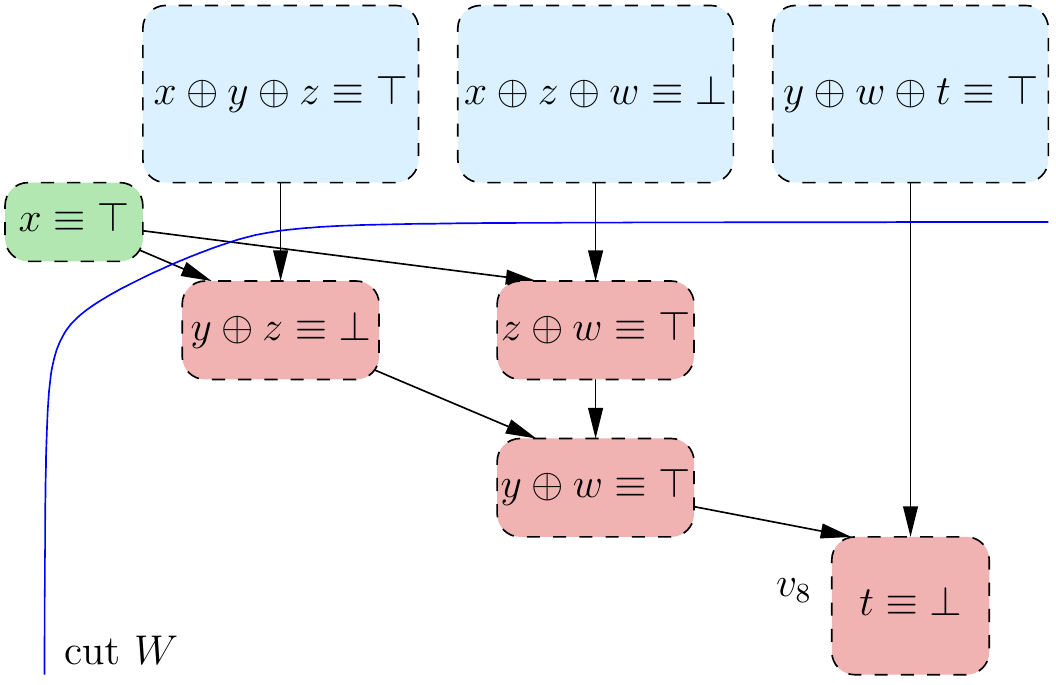}
    &
    \includegraphics[width=0.48\textwidth]{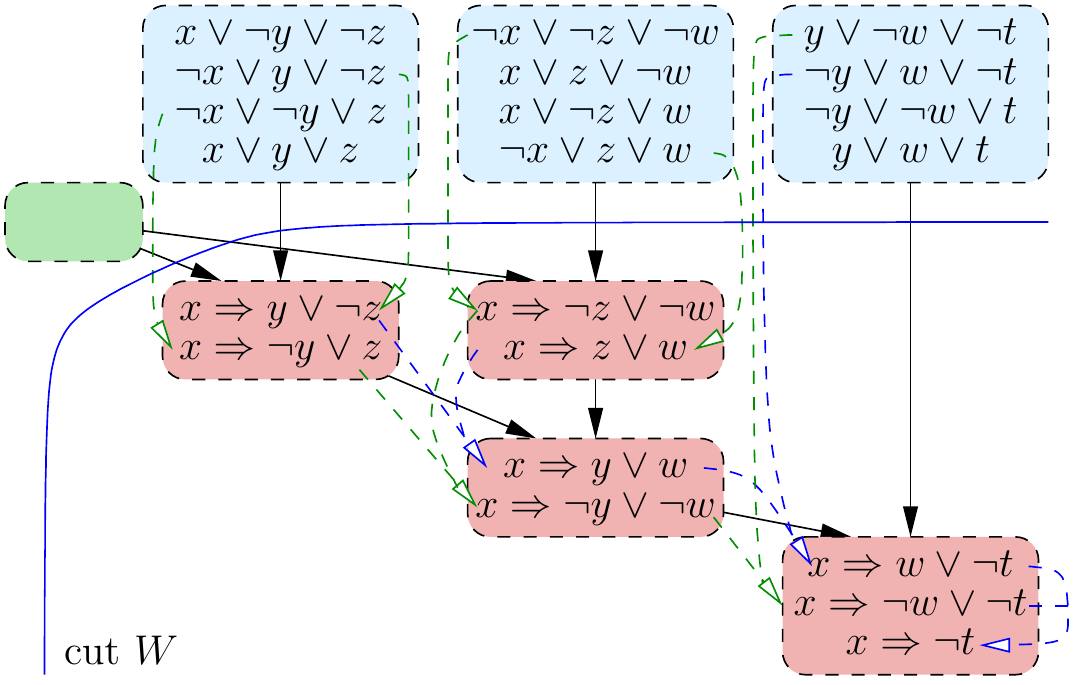}
    \\
    (a) a \SUBST{}-derivation
    &
    (b) the resolution derivation for an implicative 
    \\
    & explanation (the dotted arrows)
  \end{tabular}%
  \caption{\SUBST{}-derivations and resolution.}
  \label{Fig:SubstRes}
\end{figure}

If we can derive an xor-constraint $\XC$ with \SUBST{}, we can derive (in the CNF
translated instance) a CNF translation of $\XC$ with
resolution relatively compactly:
\begin{theorem}\label{Thm:ResDeriv}
  Assume a \SUBST-derivation $\igraph = \Tuple{\vertices,\edges,\Labfunc}$ on
  a conjunction $\psi$ of xor-constraints.
  There is a resolution derivation $\ResDer$ on
  $\bigwedge_{\XC \in \psi}\cnf{\XC}$
  such that
  (i)
  if $\Vertex \in \vertices$ and $\Lab{\Vertex} \neq \T$,
  then the clauses $\cnf{\Lab{\Vertex}}$ occur in $\ResDer$,
  and
  (ii)
  $\ResDer$ has at most $\Card{\vertices}2^{m-1}$ clauses,
  where $m$ is the number of variables in the largest xor-constraint in $\psi$.
\end{theorem}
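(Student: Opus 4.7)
My plan is to construct $\pi$ by induction on a topological order of the vertices of $G$, showing that each vertex $v$ contributes at most $2^{m-1}$ clauses to $\pi$ and that in particular all clauses of $\cnf{\Lab{v}}$ end up in $\pi$ (unless $\Lab{v}=\T$, in which case nothing needs to be added). For a source vertex $v$, the label $\Lab{v}$ is some $\XC \in \psi$, so its $\le 2^{m-1}$ CNF clauses are already clauses of $\bigwedge_{\XC\in\psi}\cnf{\XC}$ and can be taken as initial lines of $\pi$. For an internal vertex, I would show by case analysis on the four \SUBST{} rules that, given the CNF clauses of the premises (already in $\pi$ by induction), one can derive the CNF clauses of the conclusion by resolution, adding at most $2^{m-1}$ new clauses.

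The case analysis is the bulk of the argument. For \unitruleP{} with premise $x \Equal \T$ and xor-constraint $\XC$ on $k$ variables: the premise contributes the unit clause $(x)$; in $\cnf{\XC}$ exactly $2^{k-2}$ clauses contain the literal $\neg x$, and resolving each of them with $(x)$ yields precisely the $2^{k-2}$ clauses of $\cnf{\simplification{\XC}{x}{\T}}$. \unitruleN{} is symmetric. For \eqvruleP{} (premise $x\XX y\Equal\F$) in the typical case $y\notin\VarsOf{\XC}$: each clause in $\cnf{\XC}$ contains exactly one of $x,\neg x$, and a single resolution against either $({\neg x}\lor y)$ or $(x\lor{\neg y})$ produces the corresponding clause of $\cnf{\simplification{\XC}{x}{y}}$, giving $2^{k-1}$ new clauses. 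In the ``cancellation'' subcase $y\in\VarsOf{\XC}$, the conclusion has $k-2$ variables and its $2^{k-3}$ clauses must be extracted from the four-literal $(x,y)$-patterns in $\cnf{\XC}$: for each target clause $L$, the two relevant clauses $(x\lor y\lor L)$ and $({\neg x}\lor{\neg y}\lor L)$ (both present in $\cnf{\XC}$ by a parity count: substituting $x=y$ leaves the $z$-part parity unchanged) are first resolved against the equivalence clauses to obtain $(y\lor L)$ and $({\neg y}\lor L)$, and then resolved with each other to yield $L$. This costs three new clauses per target, i.e.\ $3\cdot 2^{k-3}\le 2^{m-1}$. \eqvruleN{} is handled identically up to signs. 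Degenerate cases where $\XC$ has only one or two variables, so that the conclusion becomes a unit literal or $\T$, fit the same scheme (and the conclusion $\T$ just requires nothing to be added).

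Summing over $\vertices$ gives $|\pi|\le\Card{\vertices}\cdot 2^{m-1}$, and condition (i) is immediate from the construction since we explicitly add every clause of $\cnf{\Lab{v}}$. The main technical obstacle is the cancellation subcase of the equivalence rule: one has to argue carefully that, for every target clause $L$ of the conclusion, the two ``diagonal'' clauses $(x\lor y\lor L)$ and $({\neg x}\lor{\neg y}\lor L)$ really are among the clauses of $\cnf{\XC}$, which is a small parity computation on the pattern of literals in a CNF expansion of an xor-constraint. Verifying that the per-vertex clause count never exceeds $2^{m-1}$ in any of the sub-cases (in particular $3\cdot 2^{k-3}\le 2^{m-1}$) is the only numerical check needed to close the bound.
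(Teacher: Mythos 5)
Your proposal is correct and follows essentially the same route as the paper's proof: a per-vertex induction over the derivation DAG with a case analysis on the four \SUBST{} rules, using a single resolution step per clause for the unit and non-cancelling equivalence cases, and the same three-step trick (resolve the two ``diagonal'' clauses $(x\lor y\lor L)$ and $({\neg x}\lor{\neg y}\lor L)$ against the equivalence clauses and then against each other) for the cancellation case, with the same $2^{m-1}$ per-vertex accounting. Your explicit parity argument for why both diagonal clauses occur in $\cnf{\XC}$ is a point the paper leaves as ``one can verify''.
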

A similar result is already observed in \cite{Li:IPL2000}
when restricted on binary and ternary xor-constraints.
Recalling that for each xor-constraint $\XC$ the CNF translation
$\cnf{\XC}$ is exponentially large in the number of variables in $\XC$,
we can say that resolution simulates \SUBST{}-derivations ``pseudo-linearly''.
Furthermore, the natural encodings in many application domains (e.g.\ logical cryptanalysis) seem to employ xor-constraints with only few (typically 3) variables only.

\subsection{Implicative Explanations}

In the DPLL(XOR) framework,
the clausal explanations for the xor-implied literals and xor-conflicts
are vital for the CDCL solver
when it performs its conflict analysis and clause learning.
%
%
We next show that the implicative explanation procedure described
in \cite{LJN:ECAI2010} can also be simulated with resolution,
and discuss the consequence of this result.
%

Like the conflict resolution methods in modern CNF-level CDCL solvers,
the explanation method is based on taking certain cuts in derivations.
Assume a \SUBST-derivation $\igraph = \Tuple{\vertices, \edges, \Labfunc}$
on $\xorclauses \land {\AL_1 \land ... \land \AL_k}$,
where $\xorclauses$ is a conjunction of xor-constraints and
$\AL_1,...,\AL_k$ are some xor-assumption literals.
For a non-input vertex $\Vertex \in \vertices$,
a \emph{cut for $\Vertex$} is a partitioning $(\CutA,\CutB)$ of $\vertices$
such that
(i) $\Vertex \in \CutB$, and
(ii) if $\Vertexp \in \vertices$ is an input vertex and
there is a path from $\Vertexp$ to $\Vertex$, then $\Vertexp \in \CutA$.
As an example, the line ``cut $\Cut$'' shows a cut for
the vertex $\Vertex_8$ in Fig.~\ref{Fig:SubstRes}(a).
The \emph{implicative explanation} of the vertex $\Vertex$ under the cut $\Cut$
is the conjunction $\CExpl{\Vertex,\Cut} = \tmpf{\Vertex}{\Cut}$,
there $\tmpfname{\Cut}$ is recursively defined as: 
\begin{itemize}
\item[E1]
  If $u$ is an input vertex with $\Lab{u} \in \xorclauses$,
  then $\tmpf{u}{\Cut} = \T$.
\item[E2]
  If $u$ is an input vertex with $\Lab{u} \in \Set{\AL_1,...,\AL_k}$,
  then $\tmpf{u}{\Cut} = \Lab{u}$.
\item[E3]
  If $u$ is a non-input vertex in $\CutA$,
  then $\tmpf{u}{\Cut} = \Lab{u}$.
\item[E4]
  If $u$ is a non-input vertex in $\CutB$,
  then
  $\tmpf{u}{\Cut} =
  {\tmpf{u_1}{\Cut} \land \tmpf{u_2}{\Cut}}$,
  where $u_1$ and $u_2$ are the source vertices of the two edges incoming to $u$.
\end{itemize}
If the cut is \emph{cnf-compatible},
meaning that all the vertices in $\CutA$ having an edge to a vertex in $\CutB$ are either (i) xor-constraints in $\xorclauses$ or (ii) unary xor-constraints,
then the explanation $\CExpl{\Vertex,\Cut}$ is a conjunction of literals
and
the clausal explanation of the xor-implied literal $\Lab{\Vertex}$
returned to the CDCL part is $\CExpl{\Vertex,\Cut} \Implies \Lab{\Vertex}$.
As an example,
for the vertex $\Vertex_8$ and cnf-compatible cut $\Cut$ in Fig.~\ref{Fig:SubstRes}(a),
we have $\CExpl{\Vertex_8,\Cut} = (x)$
and
the clausal explanation is thus $x \Implies {\neg t}$, i.e.,
$({\neg x} \lor {\neg t})$.

We now prove that
all such clausal explanations can in fact be derived with resolution from
the CNF translation of the original xor-constraints $\xorclauses$ only,
without the use of xor-assumptions.
To illustrate some parts of the construction,
Fig.~\ref{Fig:SubstRes}(b) shows how the clausal explanation
$x \Implies {\neg t}$ above can be derived.
%
%
\begin{theorem}\label{Thm:ResExp}
  Assume a \SUBST-derivation $\igraph = \Tuple{\vertices,\edges,\Labfunc}$ on
  $\xorpart \land \AL_1 \land \dots \land \AL_k$
  and
  a cnf-compatible cut $\Cut = (\CutA, \CutB)$.
  %
  There is a resolution derivation $\ResDer$ on
  $\bigwedge_{\XC \in \xorpart}\cnf{\XC}$
  such that
  (i)
  for each vertex $\Vertex \in \CutB$ with $\Lab{\Vertex} \neq \T$,
  $\ResDer$ includes all the clauses in $\Setdef{\CExpl{v,\Cut} \Implies C}{C \in \cnf{\Lab{v}}}$,
  and
  (ii)
  $\ResDer$ has at most $\Card{\vertices}2^{m-1}$ clauses,
  where $m$ is the number of variables in the largest xor-constraint in $\xorpart$.
\end{theorem}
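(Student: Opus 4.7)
My plan is to extend the construction from the proof of Theorem~\ref{Thm:ResDeriv} by ``tagging'' each derived clause with the negations of the hypothesis literals collected from $\CutA$. I would argue by induction along a topological ordering of $\CutB$, maintaining for every $v \in \CutB$ with $\Lab{v} \neq \T$ the invariant: for each $C \in \cnf{\Lab{v}}$, the clause $\CExpl{v, \Cut} \Implies C$, i.e.\ the disjunction $\neg h_1 \lor \ldots \lor \neg h_j \lor C$ where $\tmpf{v}{\Cut} = h_1 \land \ldots \land h_j$, occurs in $\ResDer$. For the argument it is convenient to associate to each $u \in \CutA$ a set of ``available clauses'' of the form $\neg H \lor C$ with $H = \tmpf{u}{\Cut}$ and $C \in \cnf{\Lab{u}}$: if $u$ is an input vertex with $\Lab{u} \in \xorpart$, then $H = \T$ and these clauses are already in the formula; by cnf-compatibility every other $\CutA$-vertex bordering $\CutB$ is a unary xor-constraint $\ell$, in which case $H = \ell$ and $\cnf{\Lab{u}} = \{\ell\}$, so the only available clause is the trivial tautology $\neg \ell \lor \ell$, which can be freely inlined into any subsequent resolution.

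For the inductive step, suppose $v \in \CutB$ is produced by applying one of the four \SUBST-rules to parents $u_1$ (the unit/equivalence premise) and $u_2$ (the xor-constraint premise). Let $H_i$ denote the set of literals making up $\tmpf{u_i}{\Cut}$, so $\CExpl{v,\Cut} = H_1 \cup H_2$ by rule E4. Mirroring the proof of Theorem~\ref{Thm:ResDeriv}, for each target $C' \in \cnf{\Lab{v}}$ I locate the (one or two) clauses $C_1 \in \cnf{\Lab{u_1}}$ and the (one or two) clauses $C_2 \in \cnf{\Lab{u_2}}$ whose resolution on the substituted variable(s) yields $C'$; I then apply the same resolutions to the hypothesis-tagged versions $\neg H_1 \lor C_1$ and $\neg H_2 \lor C_2$ guaranteed by the induction hypothesis (or by the base-case discussion above). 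Since the hypothesis prefixes play no role in selecting the pivot, they simply accumulate, and the resolvent equals $\neg H_1 \lor \neg H_2 \lor C' = \neg(H_1 \cup H_2) \lor C'$, i.e.\ exactly $\CExpl{v,\Cut} \Implies C'$.

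For the size bound, each $v \in \CutB$ introduces at most $2^{m-1}$ new clauses into $\ResDer$---the same per-vertex budget as in the proof of Theorem~\ref{Thm:ResDeriv}---so $\ResDer$ contains at most $\Card{\vertices} \cdot 2^{m-1}$ clauses overall. I expect the main obstacle to be the bookkeeping for the equivalence rules $\eqvruleP$ and $\eqvruleN$: in those cases $\cnf{\Lab{u_1}}$ contains two binary clauses rather than a single unit, and deriving each target $C' \in \cnf{\Lab{v}}$ may require two successive resolutions, one eliminating $x$ and one eliminating $y$, with the correct pairing of clauses from $\cnf{\Lab{u_1}}$ with clauses from $\cnf{\Lab{u_2}}$ read off from the polarities of $x$ and $y$ in $C'$. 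This pairing is exactly the one already used in Theorem~\ref{Thm:ResDeriv}, and once it is in place the only additional verification is the straightforward observation that the tags $\neg H_1, \neg H_2$ ride through both resolution steps unchanged.
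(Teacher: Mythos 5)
Your proposal is correct and follows essentially the same route as the paper: the paper also proceeds by induction over the derivation DAG, maintaining for each $\CutB$-vertex $v$ the hypothesis-tagged clause set $\Setdef{\CExpl{v,\Cut}\Implies C}{C\in\cnf{\Lab{v}}}$, packaging the observation that ``the tags ride through the resolutions unchanged'' into three auxiliary lemmas (one for the unit rules and two for the equivalence rules, covering exactly the two-step pairing you flag as the delicate point), and using cnf-compatibility to reduce the $\CutA$-border cases to original constraints or unary constraints whose negated tag already appears as a literal of the neighbouring CNF clause (so, as you note, no actual resolution with a tautology is needed). The per-vertex budget of $2^{m-1}$ clauses is also the paper's accounting.
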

As modern CDCL solvers can be seen as resolution proof producing engines
\cite{ZhangMalik:DATE2003,BeameKautzSabharwal:JAIR2004},
a DPLL(XOR) solver with \SUBST{} or \ECsys{} as the xor-reasoning module
can thus also be seen as such engine:
the clausal explanations used by the CDCL part can be first
obtained with resolution
and
then treated as normal clauses when producing the resolution proof
corresponding to the execution of the CDCL part.
%
%
And,
recalling that modern CDCL solvers can polynomially simulate resolution \cite{PipatsrisawatDarwiche:AI2011},
we have the following:
\begin{corollary}
For cnf-xor instances with fixed width xor-constraints,
the underlying proof system of a DPLL(XOR) solver using \SUBST{} or \ECsys{}
as the xor-reasoning module is polynomially equivalent to resolution.
\end{corollary}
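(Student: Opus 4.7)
The plan is to establish polynomial simulation in both directions, since ``polynomially equivalent'' amounts to mutual p-simulation of the two proof systems.

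For the direction from DPLL(XOR) to resolution, I would argue that any refutation produced by a DPLL(XOR) solver using \SUBST{} (or \ECsys{}) can be turned into a resolution refutation of the CNF translation $\orpart \land \bigwedge_{\XC \in \xorpart}\cnf{\XC}$ with only polynomial blow-up. At each point where the xor-reasoning module returns a clausal explanation for an xor-implied literal or xor-conflict, Theorem~\ref{Thm:ResExp} supplies a resolution derivation of that very clause from $\bigwedge_{\XC \in \xorpart}\cnf{\XC}$ of size at most $\Card{\vertices}\cdot 2^{m-1}$, where $m$ is the width of the largest input xor-constraint. Under the fixed-width assumption, $2^{m-1}=\BigOh(1)$, so each explanation costs at most a polynomial number of resolution steps in the corresponding \SUBST-derivation. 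Since modern CDCL engines can themselves be regarded as resolution-proof producers~\cite{ZhangMalik:DATE2003,BeameKautzSabharwal:JAIR2004}, I would first extract the CDCL-level resolution proof that treats every explanation as an axiom, and then splice in, for each explanation, the resolution derivation given by Theorem~\ref{Thm:ResExp}. The resulting proof uses only clauses from $\bigwedge_{\XC \in \xorpart}\cnf{\XC}$ as leaves and has total size polynomial in the DPLL(XOR) run.

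For the converse direction, I would observe that a DPLL(XOR) solver can choose never to use any information deduced by the xor-reasoning module beyond what the CNF encoding provides; in this degenerate mode it behaves as an ordinary CDCL solver running on $\orpart \land \bigwedge_{\XC \in \xorpart}\cnf{\XC}$. By the result of Pipatsrisawat and Darwiche~\cite{PipatsrisawatDarwiche:AI2011}, CDCL with a suitable branching/restart policy polynomially simulates general resolution, so any resolution refutation of the CNF encoding can be replayed as a DPLL(XOR) execution of polynomial size. Combining the two directions yields polynomial equivalence.

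The main obstacle I expect is a bookkeeping one rather than a conceptual one: pinning down exactly what the ``underlying proof system of a DPLL(XOR) solver'' is, so that ``simulation'' has a well-defined meaning on both sides. In particular, one must make sure that the $m$ appearing in Theorem~\ref{Thm:ResExp} always refers to the width of the original input xor-constraints and not of intermediate xor-constraints produced by substitutions in \SUBST{} (the width bound in the theorem is stated in terms of $\psi$, so this should go through, but it warrants explicit mention). A minor related point is that the xor-assumption literals $\AL_1,\dots,\AL_k$ fed to the xor-reasoning module during search are unit clauses already present as CDCL-derived facts, so no extra axioms beyond $\orpart\land\bigwedge_{\XC\in\xorpart}\cnf{\XC}$ need to be introduced when gluing the proofs together.
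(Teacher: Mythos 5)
Your proposal is correct and follows essentially the same route as the paper: the forward direction uses Theorem~\ref{Thm:ResExp} to replace each clausal explanation by a resolution derivation of size $\Card{\vertices}2^{m-1}$ (constant overhead per derivation vertex for fixed width $m$) spliced into the CDCL-produced resolution proof, and the converse direction invokes the Pipatsrisawat--Darwiche result that CDCL polynomially simulates resolution. Your added remarks on the width bound and on the xor-assumptions being CDCL-derived units are sound and only make explicit what the paper leaves implicit.
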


\section{Parity Explanations and Gauss-Jordan Elimination}

A key observation made in \cite{LJN:SAT2012} was that the inference rules
in Fig.~\ref{Fig:SUBST} (and some others, as explained in \cite{LJN:SAT2012})
could not only be read as ``the premises imply the consequence'' but also
as ``the linear combination of premises equals the consequence''.
This led to the introduction of an improved explanation method,
called parity explanations,
which can produce (i) smaller clausal explanations, and
(ii) new xor-constraints that are logical consequences of the original ones.
As shown in \cite{LJN:SAT2012},
even when applied on a very weak deduction system \UP{},
which only uses the unit propagation rules $\unitruleP$ and $\unitruleN$
in Fig.~\ref{Fig:SUBST},
the parity explanation method can quickly detect the unsatisfiability
of some instances whose CNF translations have no polynomial size resolution refutations \cite{Urquhart:JACM1987}.
We now strengthen this result and
prove that parity explanations on \UP-derivations can in fact produce
xor-constraints corresponding to the explanations produced by
Gauss-Jordan elimination,
provided that one can make the xor-assumptions suitably and
each variable in the xor-constraint conjunction occurs at most three times
(Thm.\ \ref{Thm:PexpSim} below).
%


Formally,
assume a \UP-derivation
$\igraph = \Tuple{\vertices, \edges, \Labfunc}$
for $\xorclauses \land \AL_1 \land ... \land \AL_k$.
For each non-input vertex $\Vertex$ of $\igraph$,
and
each cut $\Cut=(\CutA,\CutB)$ of $\igraph$ for $\Vertex$,
the \emph{parity explanation} of $\Vertex$ under $\Cut$
is $\PExpl{\Vertex,\Cut} = \tmpf{\Vertex}{\Cut}$,
there $\tmpfname{\Cut}$ is recursively defined as earlier
for $\CExpl{\Vertex,\Cut}$ except that the case ``E4'' is replaced by
\begin{itemize}
\item[E4]
  If $u$ is a non-input node in $\CutB$,
  then
  $\tmpf{u}{\Cut} =
  {\tmpf{u_1}{\Cut} \LinComb \tmpf{u_2}{\Cut}}$,
  where $u_1$ and $u_2$ are the source nodes of the two edges incoming to $u$.
\end{itemize}
As shown in \cite{LJN:SAT2012},
$\xorclauses \Models {\PExpl{\Vertex,\Cut} \LinComb \Lab{\Vertex}}$
and
the clausal explanation for $\Lab{\Vertex}$ can be obtained from
$\cnf{\PExpl{\Vertex,\Cut} \LinComb \Lab{\Vertex}}$.
As an example,
the parity explanation $\PExpl{\Vertex_8,\Cut}$ of the vertex $\Vertex_8$
in Fig.~\ref{Fig:SubstRes}(a) is $(\F \Equal \F)$, i.e.~$\T$,
and
indeed $(x \X y \X z \Equal \T)\land(x \X z \X w \Equal \F) \land (y \X w \X t \Equal \T) \Models (\F \Equal \F) \LinComb \Lab{\Vertex_8} = (t \Equal \F)$.
Note that 
$x$ does not occur in the parity explanation or
in the clausal explanation $(\neg t)$ returned.

For instances in which each variable occurs at most three times
we can prove that,
by selecting the xor-assumptions appropriately,
parity explanations can in fact produce all prime implicate xor-constraints:
\begin{theorem}\label{Thm:PexpSim}
  Let $\xorclauses$ be a conjunction of xor-constraints
  such that each variable occurs in at most three xor-constraints.
  
  If $\xorclauses$ is unsatisfiable,
  then there is a
  \UP{}-derivation on $\xorclauses \land y_1 \land ... \land y_m$
  with some $y_1,...,y_m \in \VarsOf{\xorclauses}$,
  a vertex $\Vertex$ with $\Lab{\Vertex} = (\F \Equal \T)$ in it,
  and
  a cut $\Cut$ for $\Vertex$ such that
  $\PExpl{\Vertex, \Cut} = (\F \Equal \F)$
  and thus
  $\PExpl{\Vertex, \Cut} \LinComb \Lab{\Vertex} = (\F \Equal \T)$.

  If $\xorclauses$ is satisfiable and
  $\xorclauses \Models (\Var_1 \X ... \X \Var_k \Equal \parity{})$,
  then there is a
  \UP{}-derivation on $\xorclauses \land (\Var_1\Equal\parity{1}) \land ... \land (\Var_k\Equal\parity{k}) \land y_1 \land ... \land y_m$
  with some $y_1,...,y_m \in \VarsOf{\xorclauses}\setminus\Set{\Var_1,...,\Var_k}$,
  a vertex $\Vertex$ with $\Lab{\Vertex} = (\F \Equal \T)$ in it,
  and
  a cut $\Cut$ for $\Vertex$ such that
  $\PExpl{\Vertex, \Cut} \LinComb \Lab{\Vertex} = (\Var'_1 \X ... \X \Var'_l \Equal \Parityp)$ for some $\Set{\Var'_1,...,\Var'_l} \subseteq \Set{\Var_1,...,\Var_k}$ and $\Parityp \in \Set{\F,\T}$
  such that
  $\xorclauses \Models (\Var'_1 \X ... \X \Var'_l \Equal \Parityp)$.
\end{theorem}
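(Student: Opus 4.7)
The plan is to handle both parts of the theorem uniformly by passing to a minimal unsatisfiable subset $S$ and then reproducing, with \UP{}-steps, the linear combination over $\mathrm{GF}(2)$ that witnesses the unsatisfiability of $S$; parity explanations then read off the desired consequence. For the satisfiable part I would first pick parities $\parity{1},\dots,\parity{k}$ with $\parity{1}\X\cdots\X\parity{k}\neq\parity{}$, so that $\xorclauses\land\bigwedge_{i=1}^k(\Var_i\Equal\parity{i})$ is unsatisfiable, take $S\subseteq\xorclauses\cup\Set{(\Var_i\Equal\parity{i})}_{i=1}^k$ minimal unsatisfiable, and let $T=\Set{i:(\Var_i\Equal\parity{i})\in S}$. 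The unsatisfiable part of the theorem is then the special case with no pre-assumptions and $T=\emptyset$.

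Since $S$ is minimal unsatisfiable its elements sum to $\F\Equal\T$ over $\mathrm{GF}(2)$, so every variable in $\VarsOf{S}$ occurs an even number of times in $S$; combined with the at-most-three hypothesis, every non-$\Var_i$ variable occurs exactly twice in $S$, and each $\Var_i$ occurs $0$, $2$, or $4$ times. I then view $S$ as a multi-hypergraph $H$ whose nodes are the constraints and whose (hyper)edges are the variables, and build the \UP{}-derivation by induction on $\Card{S}+\Card{\VarsOf{S}}$ using two reduction steps. Step~(a), \emph{parallel unit-reduction}: whenever a current constraint $U=(x\Equal q)$ is unit, \UP{}-combine $U$ with every other active constraint containing $x$ to produce the corresponding $C'=U+C$, and then discard $U$; this reduces $\Card{S}$ by one and removes $x$. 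Step~(b), \emph{assumption-reduction}: if no unit is present then every remaining constraint has at least two variables, so $H$ is a connected multigraph of minimum degree~$2$ and contains a cycle; pick a non-bridge variable $y$, introduce a fresh assumption vertex $V_y=(y\Equal q)$ (one of the $y_j$'s of the theorem), and \UP{}-combine it with both constraints containing $y$. Both steps preserve connectedness of $H$, the even-occurrence property, and the invariant that the sum of the current labels is $\F\Equal\T$; the recursion ends with a single vertex $v$ of label $\F\Equal\T$.

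Taking the cut $\Cut$ whose $\CutA$ is the set of input vertices, $\PExpl{v,\Cut}$ equals the \XX-sum of the assumption-input labels each counted with the parity of its number of directed paths to $v$. A straightforward induction on the construction shows that every constraint ever placed in the current set has odd path-count to $v$; thus a fresh $V_y$ produced by step~(b) has two children with path-count~$1$, contributes with multiplicity $2$, and cancels, while an assumption unit $V_u=(\Var_i\Equal\parity{i})\in S$ has odd out-degree $d_i-1\in\Set{1,3}$ and contributes $(\Var_i\Equal\parity{i})$. Therefore $\PExpl{v,\Cut}=\sum_{i\in T}(\Var_i\Equal\parity{i})$ and
\[
  \PExpl{v,\Cut}\LinComb\Lab{v}\ =\ \Bigl(\bigoplus_{i\in T}\Var_i\Bigr)\Equal\Bigl(\T\X\bigoplus_{i\in T}\parity{i}\Bigr),
\]
which equals $\sum_{C\in S\cap\xorclauses}C$ and is thus an xor-constraint over the required subset $\Set{\Var_i:i\in T}\subseteq\Set{\Var_1,\dots,\Var_k}$ that is implied by $\xorclauses$. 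The delicate point is step~(b) after a high-degree variable has been eliminated by~(a): the reduced $H$ could in principle contain bridges. This is handled by prioritising~(a) on every available unit, eliminating every $\Var_i$-hyperedge of size~$4$ first so that all remaining hyperedges have size~$2$; the minimum-degree-$2$ multigraph argument then guarantees a non-bridge, and minimality of $S$ rules out the degenerate tautological subset that could otherwise spoil the induction.
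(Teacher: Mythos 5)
Your proof is correct and arrives at the same combinatorial core as the paper's (minimal unsatisfiable subset, every variable occurring exactly twice, cycle-breaking assumptions that cancel in pairs), but it packages the argument differently. The paper first runs unit propagation to completion on \emph{all} available units --- including the assumptions $(\Var_i\Equal\parity{i})$ --- and only afterwards extracts a minimal subset $S''$ of the surviving binary-or-longer constraints summing to $\F\Equal\T$; there every variable occurs exactly twice, so the dual graph is an ordinary multigraph and the paper takes a spanning tree, assumes away the non-tree edges (each such assumption vertex has out-degree two, so its two occurrences cancel), and propagates from the leaves to a conflict. You instead fix the minimal unsatisfiable subset $S$ of $\xorclauses\cup\Set{(\Var_i\Equal\parity{i})}$ up front, which buys a uniform treatment of both halves of the theorem and the clean closed form $\PExpl{v,\Cut}\LinComb\Lab{v}=\sum_{C\in S\cap\xorclauses}C$, but forces you to handle variables of occurrence count four and to replace the paper's out-degree-two cancellation by an explicit path-count-parity computation (which is in fact the more robust of the two arguments). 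The connectivity worry you flag at the end is real but closes exactly as you indicate: a unit elimination with three targets can only happen in the initial unit phase, before any fresh assumption has been introduced, and if it disconnected the active set then one component would sum to a linear combination of a nonempty proper subset of $S$ equal to $\F\Equal\F$ or $\F\Equal\T$, contradicting minimality of $S$ either directly or via its complement; after that phase every surviving variable occurs exactly twice, so steps (a) and (b) manifestly preserve connectivity, every vertex reaches $v$, and each fresh assumption's two out-paths both arrive at $v$ and cancel. The paper's propagate-first ordering buys the avoidance of precisely this argument, at the cost of a less uniform two-case proof.
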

%
%

Now observe that the clausal explanations provided by
the complete Gauss-Jordan elimination propagation engine of~\cite{LJN:ICTAI2012}
are based on prime implicate xor-constraints (this follows from the fact that reduced row-echelon form matrices are used and the explanations are derived from the rows of such matrices).
As a consequence,
for instances in which each variable occurs at most three times,
parity explanations on \UP-derivations can in theory simulate the
complete Gauss-Jordan elimination propagation engine~\cite{LJN:ICTAI2012}
in the DPLL(XOR) framework
if we allow unlimited restarts in the CDCL part and xor-constraint learning \cite{LJN:SAT2012}:
we can first learn all the linear combinations that the Gauss-Jordan engine
would use to detect xor-implied literals and conflicts.

%
%
\subsection{Experimental Evaluation}
\label{Sec:PexpResults}

To evaluate the practical applicability of parity explanations further and to compare it to the xor-reasoning module using incremental Gauss-Jordan elimination presented in~\cite{LJN:ICTAI2012}, we used our
prototype solver based on \Minisat{} \cite{EenSorensson:2004} (version 2.0
core) extended with four different xor-reasoning modules: 
(i) $ \UP{}$ deduction system with implicative explanations,
(ii) $ \UP{}$ with parity explanations (\Tool{UP+PEXP}),
(iii) $ \UP{}$ with parity explanations and xor-constraint learning (\Tool{UP+PEXP+learn}) as
described in~\cite{LJN:SAT2012}, and
(iv) incremental Gauss-Jordan elimination with biconnected component decomposition (\Tool{UP+Gauss-Jordan}) as described in~\cite{LJN:ICTAI2012}.
We ran the solver configurations on two benchmark sets. 
The first benchmark set consists of instances in ``crafted'' and
``industrial/application'' categories of the SAT Competitions 2005, 2007, and
2009 as well as all the instances in the SAT Competition 2011 (see
    \url{http://www.satcompetition.org/}).
We applied the xor-constraint extraction algorithm described in~\cite{Soos} to
these CNF instances and found a large number of instances with xor-constraints.
To get rid of some ``trivial'' xor-constraints,
we eliminated unary clauses and binary xor-constraints from each instance
by unit propagation and substitution, respectively.
After this easy preprocessing, 474 instances (with some duplicates due to
overlap in the competitions) having xor-constraints remained.
In the second benchmark set we focus on the domain of logical
cryptanalysis by modeling a ``known cipher stream'' attack on stream cipher
Hitag2.
The task is to recover the full key when a small number of cipher stream bits
(33-38 bits, 51 instances / stream length) are given. In the attack, the IV and
a number of cipher stream bits are given.  There are only a few more 
generated cipher stream bits than key bits, so a number of keys probably produce
the same prefix of the cipher stream.

\begin{figure}[tb]
\includegraphics[width=0.33\textwidth]{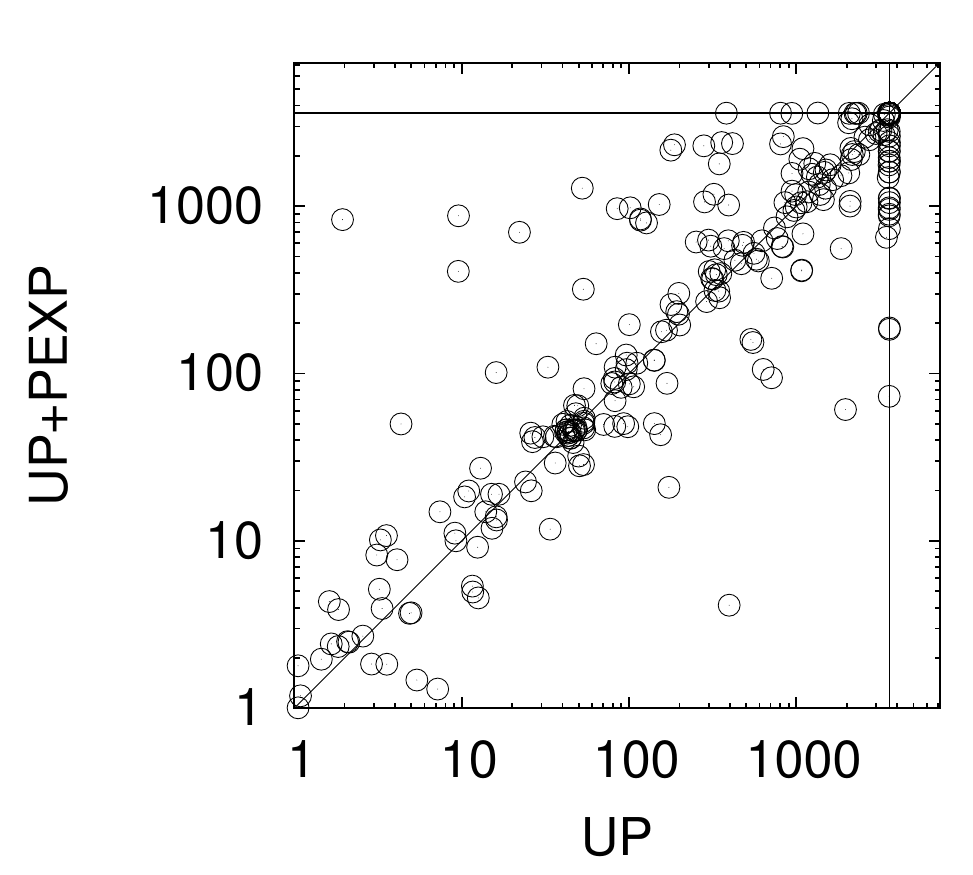}
\includegraphics[width=0.33\textwidth]{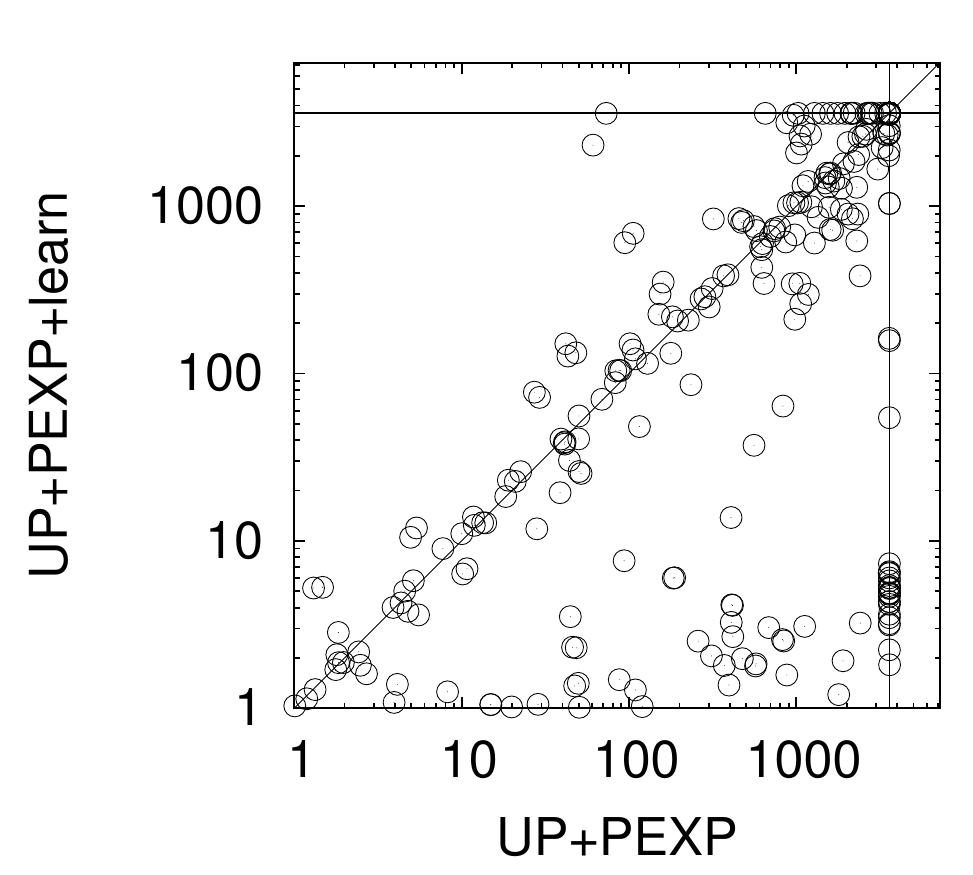}
\includegraphics[width=0.33\textwidth]{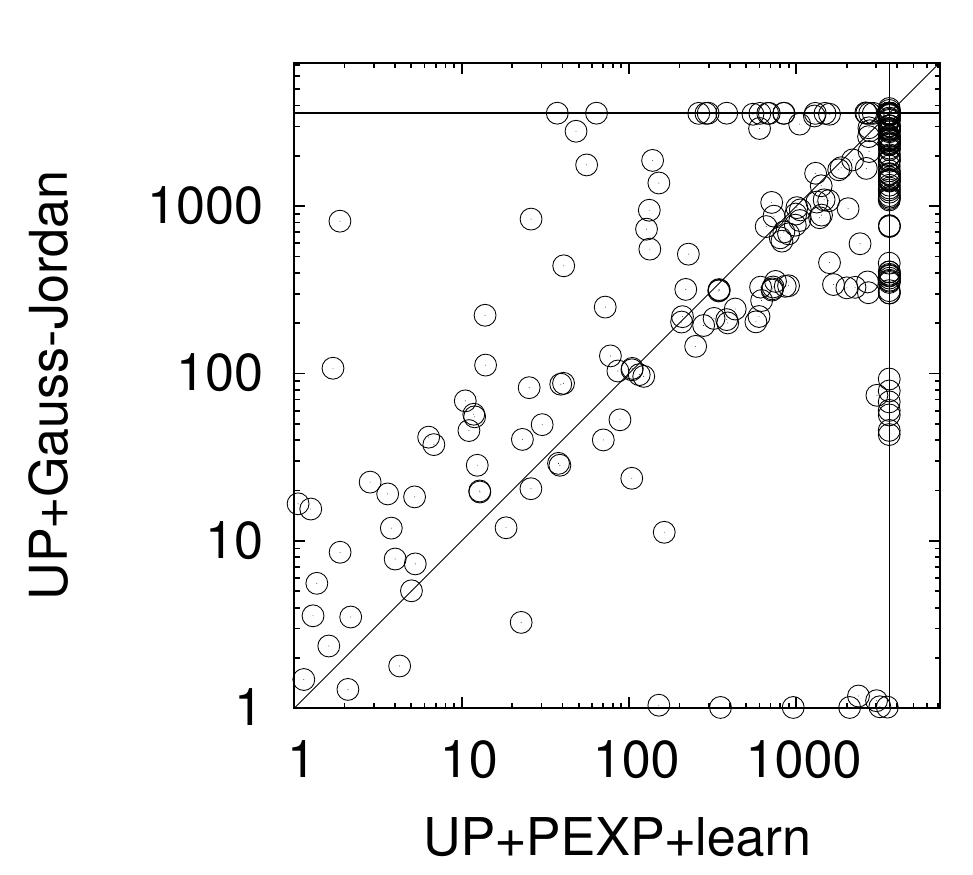}
  \caption{Comparing parity explanations and Gauss-Jordan elimination on SAT 05-11 instances}
  \label{Fig:PexpSatResults}
\end{figure}

\begin{figure}[tb]
\includegraphics[width=0.33\textwidth]{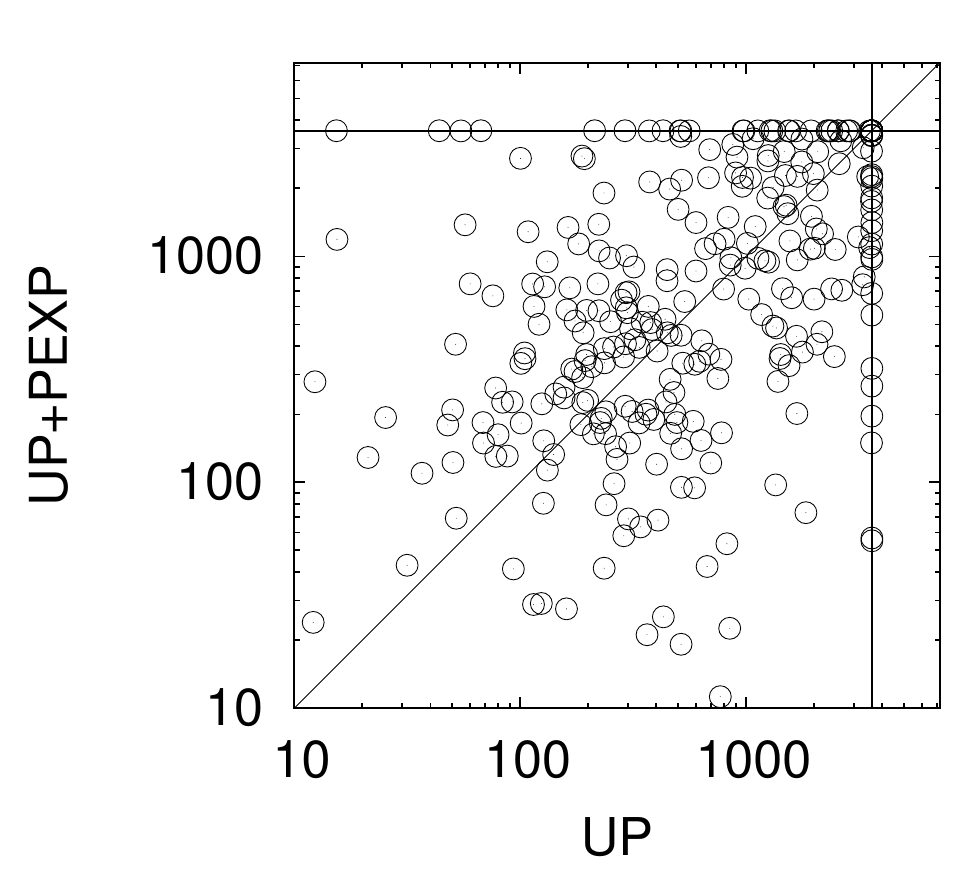}
\includegraphics[width=0.33\textwidth]{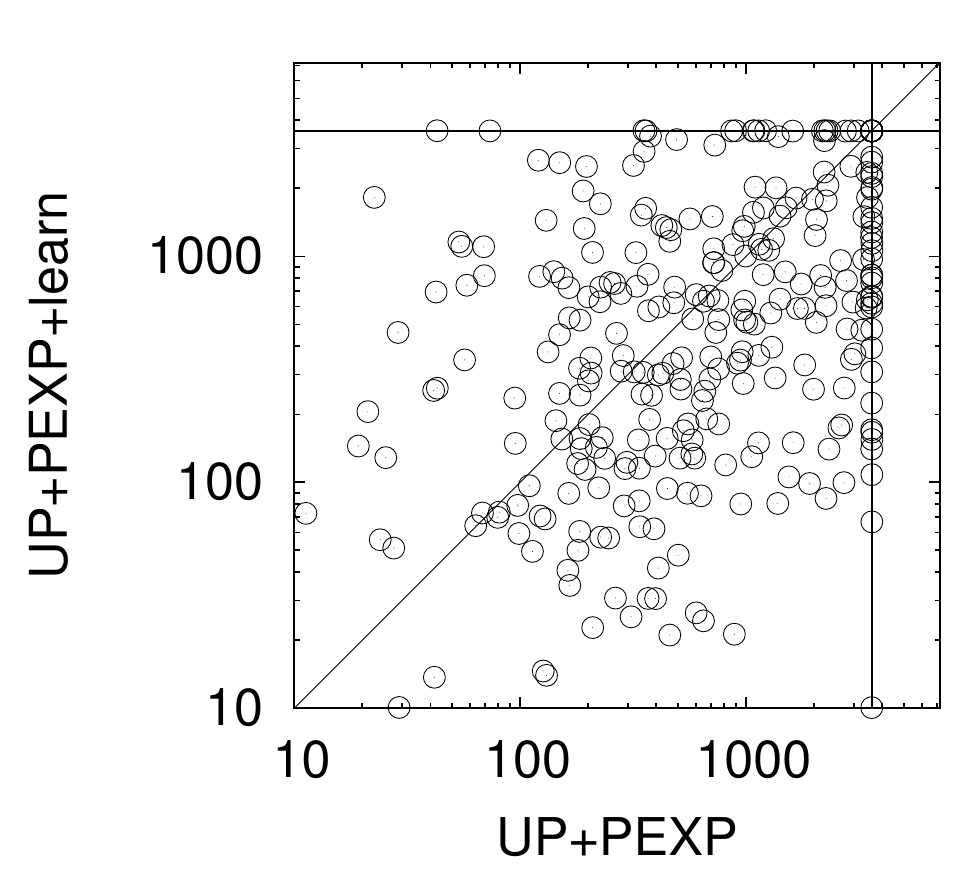}
\includegraphics[width=0.33\textwidth]{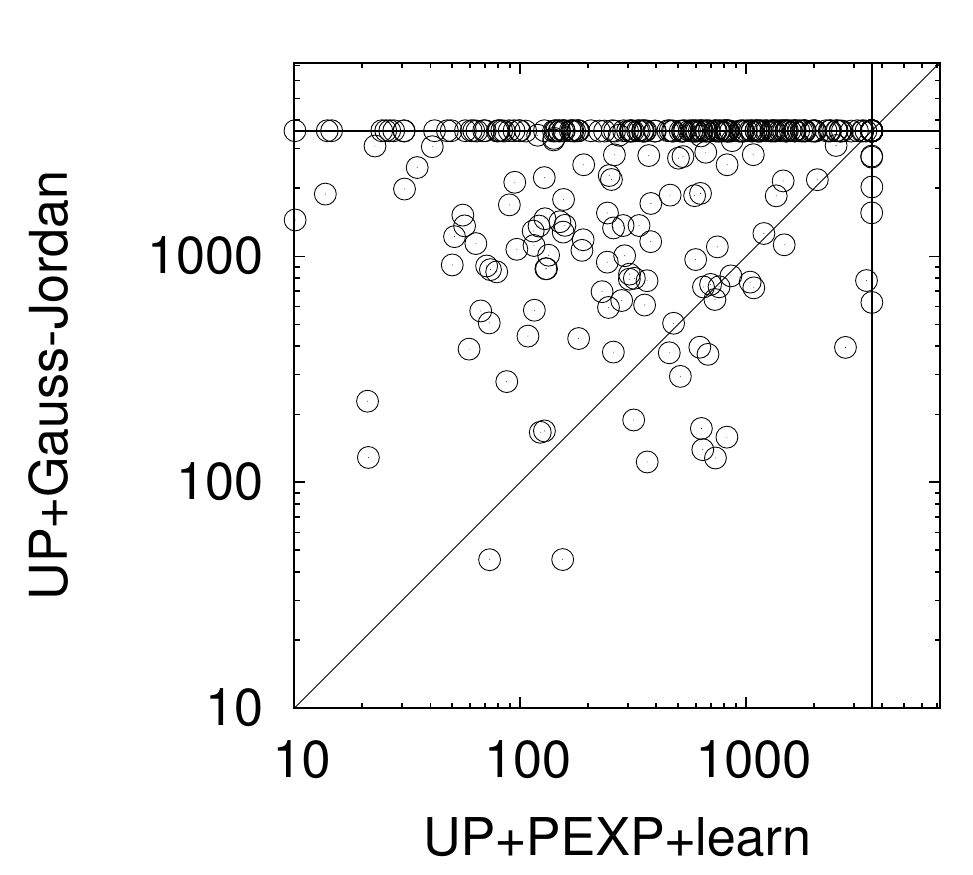}
 
  \caption{Comparing parity explanations and Gauss-Jordan elimination on Hitag2 instances}
  \label{Fig:PexpHitag2Results}
\end{figure}
\begin{figure}[tb]
  \centering
    \begin{tabular}{|r@{\ }|@{\ }c@{\ }|@{\ }c@{\ }|@{\ }c@{\ }|@{\ }c@{\ }|@{\ }c@{\ }|c@{\ }|c@{\ }|c@{\ }|c@{\ }|}
      \hline
      & \multicolumn{5}{c|}{SAT Competition} & Hitag2 & Grain & A5/1 & Trivium \\
      & 2005 & 2007 & 2009 & 2011 & all & & & & \\
      \hline\hline
      instances              & 123         & 100         & 140         & 111         & 474          & 301          & 357          & 640 & 1020  \\
      \hline 
      \Tool{UP}              & 79          & 66          & 82          & 41          & 268          & 264          & \textbf{305} & 605 & 879 \\
      \Tool{UP+PEXP}         & 78          & \textbf{70} & 85          & \textbf{48} & 281          & 257          & 301          & 610 & 867  \\
      \Tool{UP+PEXP+learn}   & 96          & 69          & \textbf{88} & \textbf{48} & \textbf{301} & \textbf{274} & 257          & 635 & \textbf{909}   \\
      \Tool{UP+Gauss-Jordan} & \textbf{97} & 61          & 82          & 39          & 279          & 115          & 84           & \textbf{640} & 880 \\ 
      \hline
    \end{tabular}
  \caption{Number of instances solved within the time limit (3600s)}
  \label{Fig:PexpNumSolved}
\end{figure}

The results for the SAT Competition benchmarks are shown in
Fig.~\ref{Fig:PexpSatResults} and the results for Hitag2 in
Fig.~\ref{Fig:PexpHitag2Results}.
The number of solved instances is shown in Fig.~\ref{Fig:PexpNumSolved}.
For both benchmark sets, parity explanations without learning do not seem to
reduce the number of decisions nor the solving time. 
However, storing parity explanations as learned xor-constraints results in a
significant reduction in the number of decisions and this is also reflected in
the solving time.
Most variables have at most three occurrences (98\% of variables in Hitag2, and
97\% in SAT instances), so in most cases a parity explanation that is
equivalent to the ``Gauss-Jordan explanation'' could be found using
nondeterministic unit propagation.
The SAT competition benchmarks has 64 instances consisting entirely of parity
constraints which were of course solved without branching by Gauss-Jordan
elimination. 
The results of the other instances that require searching on the CNF part
illustrate that when parity explanations are learned, many instances can be
solved much faster than with Gauss-Jordan elimination.
It remains open whether the theoretical power of parity explanations could be
exploited to an even higher degree by employing different propagation
heuristics.

We also evaluated the performance of the four xor-reasoning modules on three other
ciphers, Grain, A5/1, and Trivium, by encoding a similar ``known cipher stream'' attack
as with Hitag2 above. For Grain, the simplest method, plain unit propagation,
works the best. Gauss-Jordan elimination does not reduce the number of
decisions enough to compensate for the computational overhead of complete
parity reasoning. Parity explanations reduce the number of decisions slightly, 
but the small computational overhead is still too much. 
For A5/1, the solver using Gauss-Jordan elimination works the best. The solvers
using parity explanations perform better than plain unit propagation, too, but
not as well as the solver with Gauss-Jordan elimination.
For Trivium, the solver using parity explanations with learning solves the most
instances.
\newcommand{\LC}{+}

\section{Simulating Stronger Parity Reasoning with Unit Propagation}

An efficient translation for simulating equivalence reasoning with unit
propagation has been presented in our earlier work~\cite{LJN:CP2012}.
We now present a translation that adds redundant xor-constraints and auxiliary
variables in the problem guaranteeing that unit propagation is enough to always
deduce all xor-implied literals in the resulting xor-constraint conjunction.
The translation thus effectively simulates a complete parity reasoning
engine based on incremental Gauss-Jordan elimination presented
in~\cite{LJN:ICTAI2012,HanJiang:CAV2012}.
The translation can be seen as an arc-consistent encoding of the
xor-reasoning theory (also compare to the eager approach to SMT~\cite{Handbook:SMT}).
The translation is based on ensuring that each relevant linear combination of
original variables has a corresponding ``alias'' variable, and adding
xor-constraints that enable unit propagation to infer values of ``alias''
variables when corresponding linear combinations are implied.
The translation, which is exponential in the worst-case, can be made polynomial
by bounding the length of linear combinations to consider. While unit
propagation may not be able then to deduce all xor-implied literals, the
overall performance can be improved greatly. 

The redundant xor-constraint conjunction, called a {\it GE-simulation formula}
$\psi$, added to $\xorclauses$ by the translation should satisfy the following:
(i) the satisfying truth assignments of $\xorclauses$ are exactly the ones of
$\xorclauses \wedge \psi$ when projected to $\VarsOf{\xorclauses}$, and 
(ii) if $ \xorclauses$ is satisfiable and 
 $\xorclauses \wedge \AL_1 \wedge \dots \wedge \AL_k \Models \IL $, then $\IL$ is \UP{}-derivable from $ (\xorclauses \wedge \psi) \wedge \AL_1 \wedge \dots \wedge \AL_k$, and (iii) if $\xorclauses$ is unsatisfiable, then $(\xorclauses \wedge \psi) \UPderiv (\bot \equiv \top) $.

\newcommand{\PropTable}[1]{\ensuremath{\operatorname{ptable}(#1)}}
\newcommand{\PropTableName}{\ensuremath{\operatorname{ptable}}}
\newcommand{\HasPropTable}[2]{\ensuremath{#1 \subseteq_{\tiny{\operatorname{UP}}} #2}}
\newcommand{\eijtrans}[1]{\ensuremath{\textup{$Eq$}}(#1)}
\newcommand{\opttransname}{\ensuremath{\textup{$Eq$}^\star}}
\newcommand{\eijtransname}{\ensuremath{\textup{$Eq$}}}
\newcommand{\opttrans}[1]{\ensuremath{\textup{$Eq$}^\star}(#1)}
\newcommand{\getrans}[1]{\ensuremath{\textup{$k\mbox{-Ge}$}(#1)}}
\newcommand{\kgetrans}[2]{\ensuremath{\textup{$#1\mbox{-Ge}$}(#2)}}
\newcommand{\getransname}{\ensuremath{\textup{$k\mbox{-Ge}$}}}
\newcommand{\kgetransname}[1]{\ensuremath{\textup{$#1\mbox{-Ge}$}}}

\newcommand{\normalizename}{\ensuremath{\operatorname{3-xor}}}
\newcommand{\normalize}[1]{\normalizename(#1)} 

\begin{figure}[bt]
{\small
\begin{tabbing}
99.x\={mm}\={mm}\={mm}\={mm}\={mm}\=\kill
$\PropTable{Y, \xorclauses, k}$: start with $\xorclauses' = \xorclauses $ \\
1.\>\For{} each $Y' \subseteq Y$ such that $|Y'| \leq k$ and $ Y' \not = \emptyset$\\
2.\>\>\If{} there is no $a \in \VarsOf{\xorclauses'}$ such that $ (a \oplus Y' \equiv \bot) $ is in $\xorclauses'$ \\
3.\>\>\>$\xorclauses' \leftarrow \xorclauses' \wedge (a \oplus Y' \equiv \bot) $ where $a$ is a new ``alias'' variable for $Y'$ \\
4.\>\>\If{} $ (Y' \equiv \parity{}) $ is in $ \xorclauses'$ and $ (a \equiv \parity{}) $ is not in $ \xorclauses'$ where $\parity{} \in \set{\bot,\top}$\\
5.\>\>\>$\xorclauses' \leftarrow \xorclauses' \wedge (a \equiv \parity{}) $ \\
6.\>\For{} each pair of subsets $ Y_1, Y_2 \subseteq Y $ such that $|Y_1| \leq k $, $|Y_2| \leq k$, and $ Y_1 \not = Y_2$ \\
7.\>\>\If{} there is an ``alias'' variable $a_3 \in \VarsOf{\xorclauses'} $ such that $(a_3 \oplus (Y_1 \oplus Y_2) \equiv \bot) $ is in $\xorclauses'$ \\
8.\>\>\>$a_1 \leftarrow \mbox{ the ``alias'' variable } v \mbox{ such that } (v \oplus Y_1 \equiv \bot) \mbox { is in } \xorclauses'$ \\
9.\>\>\>$a_2 \leftarrow \mbox{ the ``alias'' variable } v \mbox{ such that } (v \oplus Y_2 \equiv \bot) \mbox { is in } \xorclauses'$ \\
10.\>\>\>\If{} $ (a_1 \oplus a_2 \oplus a_3 \equiv \bot) $ is not in $ \xorclauses'$\\
11.\>\>\>\>$\xorclauses' \leftarrow \xorclauses' \wedge (a_1 \oplus a_2 \oplus a_3 \equiv \bot) $ \\
12.\> \Return{} $\xorclauses' \setminus \xorclauses$ \\
\end{tabbing}%
}%
\vspace{-2mm}
\caption{The $\PropTableName$ translation}
\label{Fig:PropTable}
\end{figure}%

\newcommand{\vleft}{\ensuremath{V}}
\begin{figure}[bt]
{\small
\begin{tabbing}
99.x\={mm}\={mm}\={mm}\={mm}\={mm}\=\kill
$\getrans{\xorclauses}$:\; start with $ \xorclauses' = \xorclauses $ and $V = \VarsOf{\xorclauses}$ \\
1.\>\While{} ($V \not = \emptyset$):\\
2.\>\>Let $\ClausesOf{x}{\xorclauses'} = \Setdef{\XC}{\XC\mbox{ in } \xorclauses' \mbox{ and } x \in \VarsOf{\XC}}$ \\
3.\>\>Let $x$ be a variable in $V$ minimizing $|\VarsOf{\ClausesOf{x}{\xorclauses'}} \cap V|$ \\
4.\>\>$\xorclauses' \leftarrow \xorclauses' \wedge \PropTable{\VarsOf{\ClausesOf{x}{\xorclauses'}} \cap V, \xorclauses', k}$ \\
5.\>\>Remove $x$ from $V$ \\
6.\> \Return{} $\xorclauses' \backslash \xorclauses$%
\end{tabbing}%
}%
\vspace{-2mm}
\caption{The $\getransname$ translation}
\label{Fig:GETrans}
\end{figure}%

The translation \getransname{}, presented in Fig.~\ref{Fig:GETrans}, where $k$
stands for the maximum length of linear combinations to consider, ``eliminates''
each variable of the xor-constraint conjunction $\xorclauses$ at a time and
adds xor-constraints produced by the subroutine translation $
\PropTableName{}$, presented in Fig.~\ref{Fig:PropTable}.
Although the choice of variable to eliminate does not affect the correctness of
the translation, we employ a greedy heuristic to pick a variable that shares
xor-constraints with the fewest variables because the number of xor-constraints
produced in the subroutine \PropTableName{} is then the smallest.
The translation $\PropTable{Y, \psi, k}$ adds ``alias'' variables and
at most $O(2^{2k}) + |\xorclauses|$
%
xor-constraints to $\psi$ with the aim to simulate Gauss-Jordan row
operations involving at most $k$ variables in the
xor-constraints of the eliminated variable (the set $Y$) and no other
variables.
Provided that the maximum length of linear combinations to consider, the parameter $k$, is high
enough, the resulting xor-constraint conjunction
$\psi \wedge \PropTable{Y, \psi, k}$ has a {\it \UP{}-propagation table} for the set of variables $Y
\subseteq \VarsOf{\xorclauses}$, 
meaning that the following conditions hold for all $Y', Y_1, Y_2 \subseteq Y$:
\begin{itemize}
\item[PT1:] There is an ``alias'' variable for every non-empty subset of $Y$:
  if $Y'$ is a non-empty subset of $Y$,
  then there is a variable $a \in \VarsOf{\psi}$ such that
  $ (a \oplus Y' \equiv \bot) $ is in $\psi$,
  where $(a \oplus Y' \equiv \bot)$ for $ Y' = \Set{y_1', \dots, y_n'} $
  means $ (a \oplus y_1' \oplus \dots \oplus y_n' \equiv \bot) $. 
\item[PT2:] There is an xor-constraint for propagating the symmetric difference of any two subsets
of $Y$:
if $Y_1 \subseteq Y$ and $Y_2 \subseteq Y$,
   then there are variables $ a_1,a_2,a_3 \in \VarsOf{\psi}
  $ such that $ (a_1 \oplus Y_1 \equiv \bot),  (a_2 \oplus Y_2
           \equiv \bot ), (a_3 \oplus (Y_1 \oplus Y_2) \equiv \bot), $ and 
   $ (a_1 \oplus a_2 \oplus a_3 \equiv \bot) $
            are in $\psi $.
 
\item[PT3:] Alias variables of original xor-constraints having only variables of $Y$ are assigned: if $ (Y' \equiv \parity{})  $ is an xor-constraint in $\psi$ such that
$Y' \subseteq Y$, then there is a variable $ a \in \VarsOf{\psi}$ 
such that $ (a \oplus Y' \equiv \bot) $ it holds that
$ (a \equiv \parity{}) $ is in $ \psi $
   \end{itemize}

A \UP{}-propagation table for a set of variables $Y$ in $ \psi$ guarantees that
if some alias variables $a_1, \dots, a_n \in \VarsOf{\psi} $ 
binding the variable sets $ Y_1, \dots, Y_n \subseteq Y $ are assigned,
the alias variable $a \in \VarsOf{\psi}$
bound to the linear combination $ (Y_1 \oplus \dots \oplus Y_n) $ is \UP{}-deducible: $ \psi \wedge (a_1 \equiv \parity{1}) \wedge \dots \wedge (a_n \equiv \parity{n}) \UPderiv (a \equiv \parity{1} \oplus \dots \oplus \parity{n}) $.
Provided that sufficiently long linear combinations are considered (the
parameter $k$), \UP{}-propagation tables added by the $ \getransname$
enable unit propagation to always deduce all xor-implied literals, and thus
simulate a complete Gauss-Jordan propagation engine:

\begin{theorem}
\label{Thm:GESimulation}
If $\xorclauses$ is an xor-constraint conjunction, then $\getrans{\xorclauses}$
is a GE-simulation formula for $\xorclauses$ provided that 
$k =|\VarsOf{\xorclauses}|$.
\end{theorem}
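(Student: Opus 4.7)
My plan is to verify the three conditions (i)--(iii) in the definition of GE-simulation formula separately. Condition (i) I would address first via a conservative-extension argument: every xor-constraint produced by \getransname{} comes from one of three places inside \PropTableName{}, namely a definition $(a \oplus Y' \equiv \bot)$ for a fresh alias $a$ (Line 3), a unit $(a \equiv \parity{})$ (Line 5), or a ``triangle'' $(a_1 \oplus a_2 \oplus a_3 \equiv \bot)$ (Line 11). Only the first type introduces a new variable and it does so together with its defining equation; the other two are already linear (modulo 2) consequences of the original xor-constraints plus the alias definitions. Hence $\xorclauses \wedge \getrans{\xorclauses}$ is a conservative extension of $\xorclauses$ whose models coincide with those of $\xorclauses$ after projecting away the aliases.

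For (ii) and (iii) I would proceed in two stages. First, an auxiliary lemma shows that when $k \geq |Y|$ the call \PropTable{$Y, \xorclauses', k$} produces xor-constraints which, together with $\xorclauses'$, satisfy conditions PT1--PT3 for $Y$. This is essentially by construction: Lines 2--3 give PT1, Lines 4--5 give PT3, and Lines 6--11 give PT2 whenever the required alias for $Y_1 \oplus Y_2$ exists, which under $k \geq |Y|$ is always the case by PT1 (the degenerate $Y_1 \oplus Y_2 = \emptyset$ case is handled by the fact that aliases are keyed by subset, so $a_1$ and $a_2$ then coincide and the triangle collapses to a trivially satisfied identity). Second, a ``UP-computation'' lemma: if $\psi$ contains a UP-propagation table for $Y$ and aliases $a_1, \ldots, a_n$ bound to $Y_1, \ldots, Y_n \subseteq Y$ have assigned values $p_1, \ldots, p_n$, then UP on $\psi$ derives $p_1 \oplus \cdots \oplus p_n$ for the alias bound to $Y_1 \oplus \cdots \oplus Y_n$ (if non-empty), and otherwise derives $(\bot \equiv \top)$. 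I would prove this by induction on $n$, folding two aliases into one via a PT2 triangle at each step.

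With the two lemmas in hand, the main step is to argue that the propagation tables accumulated across all iterations of \getransname{} let UP replay any linear combination relevant to an xor-implied literal or to unsatisfiability. The starting observation is algebraic: every xor-implied literal $\IL$ under assumptions $\AL_1, \ldots, \AL_m$ equals a linear (modulo 2) combination of the original xor-constraints and assumption literals, and every unsatisfiable $\xorclauses$ admits such a combination equal to $(\bot \equiv \top)$. Seeding the alias values of the participating original xor-constraints via PT3 and applying the UP-computation lemma along this combination, UP assembles the target literal or contradiction from the available aliases. An outer induction over the iterations of \getransname{} guarantees that every alias required along the way is in fact present; the invariant I would maintain says that after iteration $i$ every linear combination of original xor-constraints whose variable support, projected past the already eliminated variables, lies inside some previously processed neighborhood has a matching alias in $\psi$.

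The hardest part is sustaining this invariant. Per-iteration propagation tables are local to a single neighborhood $Y_i$, whereas a generic linear combination may involve xor-constraints whose variable supports span several neighborhoods. I would handle this by exploiting two facts: (a) the choice $k = |\VarsOf{\xorclauses}|$ ensures each \PropTable{} call creates an alias for every non-empty subset of the current $Y_i$; and (b) when variable $x_i$ is about to be eliminated, every alias-definition xor-constraint produced in an earlier iteration that still mentions $x_i$ still lies in $\xorclauses'$ and hence contributes its other variable to $Y_i$. Together these mean that eliminating $x_i$ ``absorbs'' the residual interaction of $x_i$ with every previously constructed alias, producing aliases for the required symmetric differences. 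Carefully verifying that this absorption really preserves the invariant at every iteration, so that any alias chain required to compute a given implied literal is materialized in $\xorclauses \wedge \psi$ by the end, is where the combinatorial care of the proof lies.
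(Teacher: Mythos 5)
Your plan follows essentially the same route as the paper's proof: model preservation via a conservative-extension/projection argument, PT1--PT3 verified by inspection of the \PropTableName{} pseudocode, a UP-computation lemma proved by induction on the number of aliases folded together via PT2 triangles, and an induction along the elimination order in which the constraints of the eliminated variable are collapsed into a single alias-binding constraint that is then absorbed into later neighborhoods --- which is exactly your point (b) and the paper's mechanism for sustaining the invariant. The one refinement the paper makes that you would need when writing this out is to state the invariant not over \emph{all} linear combinations landing in a previously processed neighborhood, but over a single transformed witness conjunction (obtained from the witness subset of the target literal by successively replacing the clauses of $x_i$ with one alias constraint over a set $V_i \subseteq Y_i$); this localized bookkeeping is what guarantees the required alias always exists at each step.
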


\newcommand{\Alias}[1]{a_{#1}}
\newcommand{\XorClausesAt}[1]{\xorclauses^{(#1)}}

\begin{example}\label{Ex:kGE}
Consider the xor-constraint conjunction
$\XorClausesAt{0} =
  (x_1 \XX x_6 \XX x_7 \equiv \T) \land
  (x_2 \XX x_3 \XX x_7 \equiv \T) \land
  (x_2 \XX x_5 \XX x_8 \equiv \F) \land
  (x_3 \XX x_4 \XX x_5 \equiv \T) \land
  (x_4 \XX x_6 \XX x_8 \equiv \F)$
illustrated in Fig.~\ref{Fig:KgeExample}. 
It is clear that $\xorclauses \Models (x_1 \equiv \T) $ and
$ \xorclauses \NotUPderiv (x_1 \equiv \T)$.
\begin{figure}[tb]
\centering
\includegraphics[width=.44\textwidth]{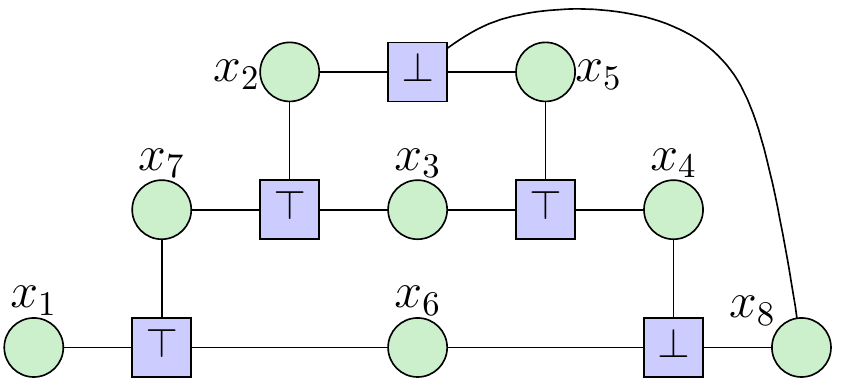}
\caption{Constraint graph of an xor-constraint conjunction}
\label{Fig:KgeExample}
\end{figure}

With the elimination order $(x_1,x_7,x_4,x_5,x_2,x_3,x_6,x_8)$ and $k=4$,
the translation $\getransname$ first extends $\xorclauses$ to $\XorClausesAt{1}$
with the xor-constraints in
$\PropTable{\Set{x_1,x_6,x_7},\xorclauses,k}$.
These include 
(i)
the ``alias binding constraints''
$\Alias{1}\XX x_1 \Equal \F$,
$\Alias{6,7}\XX x_6 \XX x_7 \Equal \F$,
$\Alias{1,6,7}\XX x_1 \XX x_6 \XX x_7 \Equal \F$,
(ii)
the ``linear combination constraint''
$\Alias{1}\XX\Alias{6,7}\XX\Alias{1,6,7} \Equal \F$,
and
(iii)
the ``original constraint binder''
$\Alias{1,6,7} \Equal \T$,
where $\Alias{i,...}$ is the alias for the subset $\Set{x_i,...}$ of the original variables.
After unit propagation,
these constraints imply the binary constraint $\Alias{1}\XX\Alias{6,7}\Equal\T$
allowing us to deduce $x_1$
from the parity $\Alias{6,7}$ of $x_6$ and $x_7$.

\indent
The translation next ``eliminates'' $x_7$ and
adds $\PropTable{\Set{x_2,x_3,x_6,x_7},\XorClausesAt{1},k}$
including
the linear combination constraint
$\Alias{6,7} \XX \Alias{2,3,7} \XX \Alias{2,3,6} \Equal \F$ and
the original constraint binder
$\Alias{2,3,7} \Equal \T$,
propagating the binary constraint
$\Alias{6,7} \XX \Alias{2,3,6} \Equal \T$
allowing us to deduce the parity of $\Set{x_6,x_7}$
from the parity of $\Set{x_2,x_3,x_6}$.

\indent
Eliminating $x_4$ adds
$\PropTable{\Set{x_3, x_4, x_5, x_6, x_8},\XorClausesAt{2},k}$,
including the constraints
$\Alias{3,4,5}\XX\Alias{4,6,8}\XX\Alias{3,5,6,8}\Equal\F$,
$\Alias{3,4,5}\Equal\T$, and
$\Alias{4,6,8}\Equal\T$,
propagating 
$\Alias{3,5,6,8}\Equal\T$.

\indent
Eliminating $x_5$ adds
$\PropTable{\Set{x_2,x_3,x_5,x_6,x_8},\XorClausesAt{3},k}$
(observe that $x_6$ is in the set as it occurs in the
constraint $\Alias{3,5,6,8}\XX x_3 \XX x_5 \XX x_6 \XX x_8 \Equal \F$
added in the previous step),
including
$\Alias{2,5,8}\XX\Alias{2,3,6}\XX\Alias{3,5,6,8}\Equal\F$
and
$\Alias{2,5,8}\Equal\F$.

\indent
At this point we could already unit propagate $x_1 \Equal \T$
(from $\Alias{3,5,6,8}\Equal\T$, $\Alias{2,5,8}\Equal\F$, and
 $\Alias{2,5,8}\XX\Alias{2,3,6}\XX\Alias{3,5,6,8}\Equal\F$
we get 
$\Alias{2,3,6}\Equal\T$
and
from this then $\Alias{6,7}\Equal\F$ and finally $\Alias{1}\Equal\T$,
i.e.\ $x_1 \Equal \T$).

\indent
Note that the translation $ \kgetrans{3}{\xorclauses}$ is not a GE-simulation
formula for $\xorclauses$ because \PropTableName{} does not add ``alias''
variables for any 4-subset of original variables and the linear combination
of any two original xor-constraints has at least four variables. 
\end{example}

%

The translation $ \PropTableName $ as presented in~\ref{Fig:PropTable}
for illustration purposes adds new ``alias'' variables for all relevant linear
combinations involving at most $k$ original variables. However, in an actual
implementation, the original variables of the xor-constraint conjunction can be
used as ``alias'' variables.
For example, the variable $x_1 $ in the xor-constraint
$(x_1 \oplus x_2 \oplus x_3 \equiv \top) $ can be used as an
``alias'' variable for $ (x_2 \oplus x_3 \equiv \F) $.

The translation $ \getransname$ is a generalization of the translation
$\opttransname$, which simulates equivalence reasoning with unit propagation,
    presented in~\cite{LJN:CP2012}. Provided that original variables are
    treated as ``alias'' variables as above and all xor-constraints have at
    most three variables, the translation $ \kgetransname{2}$, that considers
    only (in)equivalences between pairs of variables, enables unit propagation
    to simulate equivalence reasoning.

The size of the GE-simulation formula for $\xorclauses$ may be reduced considerably if $
\xorclauses$ is partitioned into disjoint xor-constraint conjunctions $
\xorclauses^1 \wedge \dots \wedge \xorclauses^n $ according to the connected
components of the xor-constraint graph, and then combining the component-wise
GE-simulation formulas $ \kgetrans{k_1}{\xorclauses^1} \wedge \dots \wedge
\kgetrans{k_n}{\xorclauses^n} $.
Efficient structural tests for deciding whether unit propagation or equivalence
reasoning is enough to achieve full propagation in an xor-constraint
conjunction, presented in~\cite{LJN:CP2012}, can indicate appropriate values for some of the parameters $k_1, \dots, k_n$. 
%


\subsection{Propagation-preserving xor-simplification}

Some of the xor-constraints added by \getransname{} can be redundant
regarding unit propagation.
We now present a simplification method that
preserves literals that can be implied by unit propagation.
There are two simplification rules,
given a pair of xor-constraint conjunctions $\Tuple{\phi_a, \phi_b}$
(initially $\Tuple{\xorclauses, \emptyset}$):
[S1] an xor-constraint $\XC$ in $\phi_a $ can be moved to $ \phi_b $,
resulting in $ \Tuple{\phi_a \setminus \Set{\XC}, \phi_b \cup \Set{\XC}}$, and
[S2] an xor-constraint $\XC$ in $ \phi_a $ can
    be simplified with an xor-constraint $\XC'$ in $ \phi_b $ to
$ (\XC \LC \XC') $ provided that $\Card{\VarsOf{\XC'} \cap \VarsOf{\XC}} \ge \Card{\VarsOf{\XC'}} - 1$,
resulting in 
$ \Tuple{(\phi_a \setminus \Set{\XC}) \cup \Set{\XC \LC \XC'},
         \phi_b} $.
%
%
\begin{theorem}
\label{Thm:XorSimp}
If $ \Tuple{\phi_a', \phi_b'} $ is the result of applying
one of the simplification rules to $ \Tuple{\phi_a, \phi_b} $ 
and 
$\phi_a \wedge \phi_b \wedge \AL_1 \wedge \dots \wedge
\AL_k \UPderiv \IL $, then $\phi_a' \wedge \phi_b' \wedge
\AL_1 \wedge \dots \wedge \AL_k \UPderiv \IL $.  
\end{theorem}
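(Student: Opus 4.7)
The plan is to handle the two simplification rules separately. Rule [S1] is immediate: moving an xor-constraint between $\phi_a$ and $\phi_b$ leaves the conjunction $\phi_a \wedge \phi_b$ unchanged as a set, so any \UP{}-derivation of $\IL$ from the original formula is, verbatim, a \UP{}-derivation from the simplified one.

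The real work is for rule [S2], where $\XC \in \phi_a$ is replaced by $\XC \LC \XC'$ while $\XC' \in \phi_b$ remains available. The starting observation is the identity $\XC = (\XC \LC \XC') \LC \XC'$, which makes $\phi_a \wedge \phi_b$ and $\phi_a' \wedge \phi_b'$ logically equivalent; however, logical equivalence is strictly stronger than preservation of \UP{}-derivability, so an operational simulation argument is needed.

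I would induct on the length of a fixed \UP{}-derivation of $\IL$ from the old formula, showing that every xor-constraint it produces is also \UP{}-derivable from the new one (interpreting, as usual, a derivation of $\F \Equal \T$ as subsuming the derivation of any literal). Every step whose xor-constraint premise is not $\XC$ lifts directly, since that premise is still present in $\phi_a' \cup \phi_b'$. The non-trivial case is a step using $\XC$, which ultimately produces an xor-constraint of the form $\XC[\sigma]$ for some partial assignment $\sigma$ of already-derived unit literals. Substituting the same $\sigma$ into the new premise yields $(\XC \LC \XC')[\sigma] = \XC[\sigma] \LC \XC'[\sigma]$, and the side condition $\Card{\VarsOf{\XC'} \cap \VarsOf{\XC}} \ge \Card{\VarsOf{\XC'}} - 1$ forces $\VarsOf{\XC'} \setminus \VarsOf{\XC}$ to contain at most one variable $w$; combined with the fact that $\sigma$ assigns all but one variable of $\XC$, this pins $\XC'[\sigma]$ down to one of a few shapes (a tautology, a single-variable unit on $w$ or on the free variable $y$ of $\XC[\sigma]$, a binary xor on $y \oplus w$, or the contradiction $\F \Equal \T$), and in each non-contradiction sub-case a short chain of additional \UP{} steps alternating between $\XC'$ and $\XC \LC \XC'$ reproduces $\XC[\sigma]$ in the new setting.

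The main obstacle is the remaining sub-case where $\XC'[\sigma] = (\F \Equal \T)$. Then $(\XC \LC \XC')[\sigma]$ evaluates to $\neg\IL$ rather than $\IL$, so $\IL$ itself is not produced by any single \UP{} step on $\XC \LC \XC'$. I would close this case by noting that $\XC' \in \phi_b'$ reduces directly to $\F \Equal \T$ under $\sigma$, giving $\phi_a' \wedge \phi_b' \wedge \AL_1 \wedge \dots \wedge \AL_k \UPderiv (\F \Equal \T)$, and invoking the convention (already used in e.g.\ Theorem~\ref{Thm:GESimulation}) that a \UP{}-derivation of the contradiction subsumes the derivation of any literal, so $\UPderiv \IL$ holds in the new setting as well. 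Together with the non-contradiction sub-cases this closes the induction and yields the theorem.
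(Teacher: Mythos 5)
Your proposal is correct and takes essentially the same route as the paper's proof: rule [S1] is dispatched as trivial, and rule [S2] is handled by induction on the \UP{}-derivation, simulating each step through $\XC$ by a short detour through $\XC'$ and $\XC \LC \XC'$, with the side condition $\Card{\VarsOf{\XC'} \cap \VarsOf{\XC}} \ge \Card{\VarsOf{\XC'}} - 1$ confining the unassigned variables of $\XC'[\sigma]$ to at most $\Set{y,w}$ (the paper organizes the same case analysis by whether the propagated variable lies in $\VarsOf{\XC'}$ and whether $\XC'$ has a variable outside $\XC$). Your explicit treatment of the degenerate sub-case $\XC'[\sigma] = (\F \Equal \T)$, where $\XC \LC \XC'$ alone would yield the negated unit, is actually more careful than the paper's argument, which passes over this possibility in its ``otherwise $\VarsOf{\XC'} \subseteq \VarsOf{\XC}$'' branches.
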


\begin{example}
The conjunction $\kgetrans{3}{(x_1 \XX x_2 \XX x_3 \XX x_4 \Equal \F)}$ contains the alias binding constraints
$\XC_1 \Def (\Alias{1,2,3,4} \XX x_{1} \XX x_2 \XX x_3 \XX x_4 \Equal \F)$,
$\XC_2 \Def (\Alias{1,2} \XX x_1 \XX x_2 \Equal \F) $, 
$\XC_3 \Def (\Alias{3,4} \XX x_3 \XX x_4 \Equal \F)$,
as well as
the linear combination constraint $\XC_4 \Def (\Alias{1,2} \XX a_{3,4} \XX a_{1,2,3,4} \Equal \F)$.
The alias binding constraint $\XC_1$ can in fact be eliminated 
by first applying the rule S1 to the xor-constraints $\XC_2$, $\XC_3$, and $\XC_4$.
Then, by using the rule S2,
the xor-constraint
$\XC_1$ is simplified first with $\XC_2$
to $(\Alias{1,2,3,4} \XX \Alias{1,2} \XX x_3 \XX x_4 \Equal \F)$
and
then with $\XC_3$ to $(\Alias{1,2,3,4} \XX \Alias{1,2} \XX \Alias{3,4} \Equal \F)$,
and finally with $\XC_4$ to $(\F \Equal \F)$.
\end{example}

\subsection{Experimental evaluation}

\begin{figure}[tb]
\centering
\includegraphics[width=.37\textwidth]{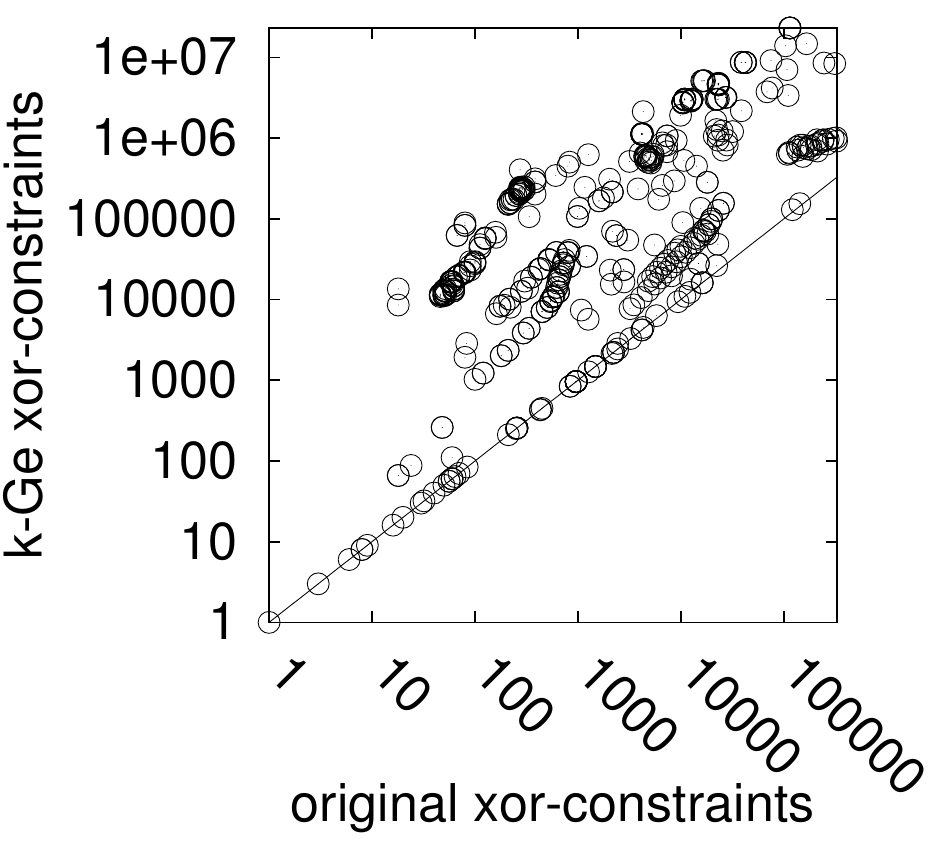}
\caption{Xor-constraints in SAT 05-11 instances}
\label{Fig:KgeSize}
\end{figure}
To evaluate the translation $\getransname$, we studied the benchmark instances
in ``crafted'' and ``industrial/application'' categories of the SAT
Competitions 2005, 2007, 2009, and 2011.
We ran \Tool{cryptominisat 2.9.6}, \Tool{glucose 2.3},
and \Tool{zenn 0.1.0} on the same 474 SAT Competition
cnf-xor instances as in Section~\ref{Sec:PexpResults} with the translations
\kgetransname{k} and \opttransname. 
%
It is intractable to simulate full Gauss-Jordan elimination for these
instances, so we adjusted the $k$-value of each call to the subroutine
$\PropTable{Y,\psi,k}$ to limit the number of additional xor-constraints.
The translation was computed for each connected component separately.
We found good performance by (i) stopping when $|Y| > 66$, (ii) setting $k=1$
when it was detected that unit propagation deduces all xor-implied
literals, (iii) setting $k=2$ when $|Y|\in[10,66]$ or when $ |Y| < 10 $
and it was detected that equivalence reasoning deduces all
xor-implied literals, (iv) setting $k=3$ when $|Y|\in[6,9]$, 
and (v)
setting $k=|Y|$ when $ |Y| \leq 5$.
With these parameters, the worst-case number of xor-constraints added by
the subroutine $ \PropTableName$ is 2145. Figure~\ref{Fig:KgeSize} shows
the increase in formula size by the translation \getransname.
Propagation-preserving xor-simplification was used to simplify the instances
reducing the formula size in 404 instances with the
median reduction being 16\%. 
The translation \opttransname was computed in a similar way.
The results are shown in Fig.~\ref{Fig:KgeResults2}, including
the time spent in computing the translations.
%
%
Using xor-simplification increases the number of solved instances for both translations.
%
%
%
The detailed solving time comparison in Fig.~\ref{Fig:KgeResults} shows that
that the translation \getransname{} can incur some overhead, but also allows great speedupds, enabling the three solvers to solve the
highest number of instances for the whole benchmark set. 

\begin{figure}[tb]
  \centering
\includegraphics[width=0.32\textwidth]{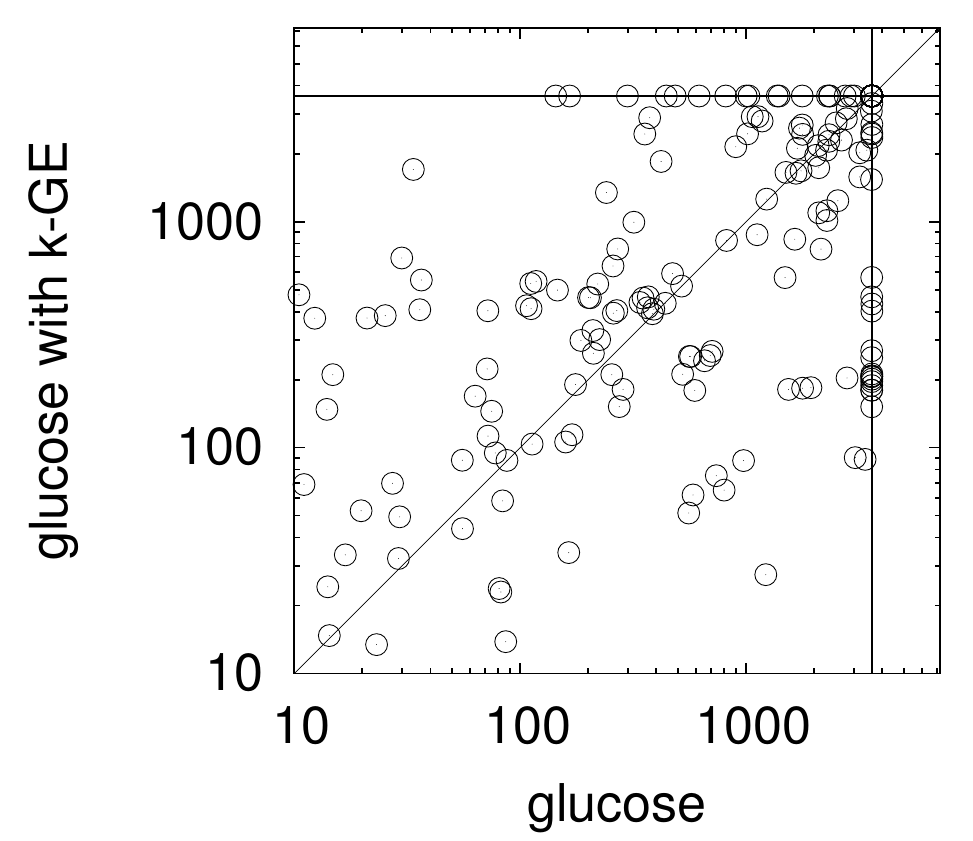} 
\includegraphics[width=0.32\textwidth]{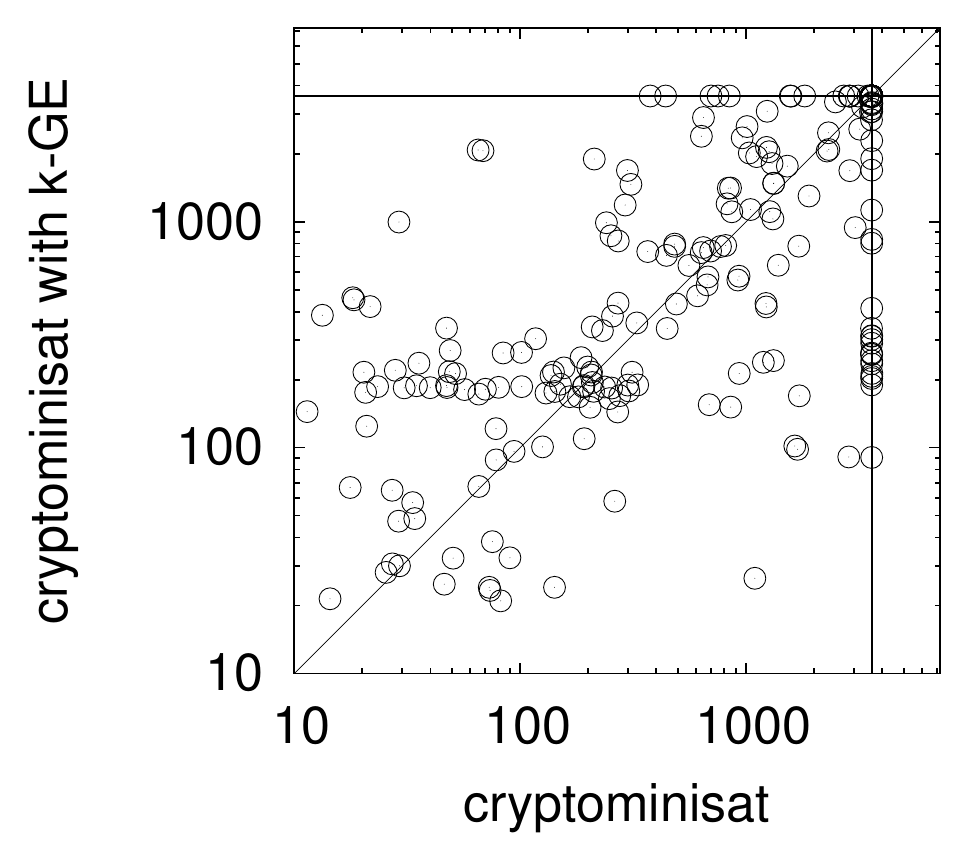} 
\includegraphics[width=0.32\textwidth]{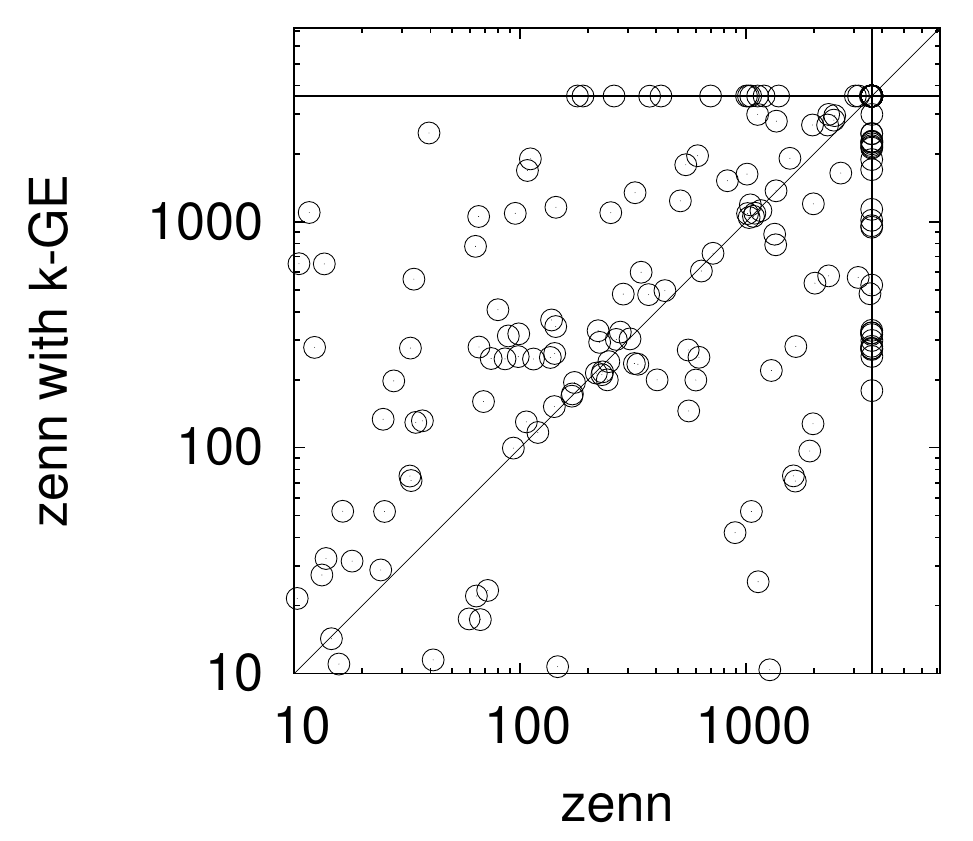} 
\caption{Comparison on solving time between the unmodified instance
           and \getransname{} using \Tool{glucose}, \Tool{cryptominisat}, and \Tool{zenn}. }
  \label{Fig:KgeResults}

\end{figure}
\begin{figure}[tb]
\centering
\begin{tabular}{cc}
    \begin{tabular}{|r@{\ }|@{\ }c@{\ }|@{\ }c@{\ }|@{\ }c@{\ }|@{\ }c@{\ }|c|}
      \hline
      & \multicolumn{5}{c|}{SAT Competition}\\
      & 2005 & 2007 & 2009 & 2011 & all \\
      \hline\hline
      instances                                     & 123         & 100         & 140         & 111 & 474\\
      \hline 
      \Tool{glucose}                                & 63          & \textbf{64} & 88          & 54  & 269       \\
      \Tool{glucose}, \opttransname                 & 64          & 59          & 89          & 52  & 264\\
      \Tool{glucose}, \opttransname, simp       & \textbf{66} & 63          & 90          & 52  & 271 \\
      \Tool{glucose}, \getransname                  & 61          & 50          & 86          & 45  & 242       \\ 
      \Tool{glucose}, \getransname, simp        & 64          & 60          & \textbf{95} & \textbf{58} & \textbf{277} \\ 
      \hline
      \Tool{cryptominisat}                          & 74          & \textbf{70} & 92          & 52   & 288      \\
      \Tool{cryptominisat}, \opttransname           & 73          & 65          & 91          & 49   & 278       \\
      \Tool{cryptominisat}, \opttransname, simp & \textbf{76} & 68          & 91          & 51   & 286       \\
      \Tool{cryptominisat}, \getransname            & 68          & 53          & 83          & 46   & 250       \\ 
      \Tool{cryptominisat}, \getransname, simp  & 71          & 65          & \textbf{94} & \textbf{64} & \textbf{294} \\ 
      \hline
      \Tool{zenn}                                   & 62          & 62          & 91          & 49          & 264    \\
      \Tool{zenn}, \opttransname                    & 62          & 62          & 90          & 49          & 263    \\
      \Tool{zenn}, \opttransname, simp              & \textbf{68} & \textbf{64} & 92          & 48          & 272    \\
      \Tool{zenn}, \getransname                    & 61          & 59          & 89          & 52          & 261    \\
      \Tool{zenn}, \getransname, simp               & 65          & 61          & \textbf{93} & \textbf{54} & \textbf{273} \\
      \hline
    \end{tabular}
\end{tabular}

  \caption{Number of instances solved within the time limit of 3600s}
  \label{Fig:KgeResults2}
\end{figure}

\section{Connection to Treewidth}

The number of xor-constraints produced by the translation $\getransname$
depends strongly on the instance, as shown in Fig.~\ref{Fig:KgeSize}.
Now we connect the worst-case size of a \PropTableName{}-based GE-simulation
formula to {\it treewidth}, a well-known structural property of (constraint)
graphs used often to characterize the hardness of solving a problem, e.g.
an instance of CSP with bounded treewidth can be solved in polynomial time~\cite{Freuder:1990:CKS:1865499.1865500}.
We develop a new decomposition method that we can apply to a tree decomposition
to produce a polynomial-size GE-simulation formula for instances of bounded
treewidth.
We also present some found upper bounds for treewidth in SAT Competition
instances that illustrate to what extent parity reasoning can be simulated
through unit propagation.

The new decomposition technique is a generalization of the
method  in~\cite{LJN:ICTAI2012},
which states that, 
in order to guarantee full
propagation, it is enough to (i) propagate only values through ``cut
variables'', and (ii) have full propagation for the ``biconnected components''.
Now we extend the technique to larger cuts.
Given an xor-constraint conjunction $\xorclauses$,
      a {\it cut variable set} is a set of variables $X \subseteq \VarsOf{\xorclauses}$ for which there is a partition $ (\CutA, \CutB) $
      of xor-constraints in $\xorclauses$ with $\VarsOf{\CutA} \cap
      \VarsOf{\CutB} = X$; such a partition $ (\CutA, \CutB)$ 
      is called an {\it $\Vars$-cut partition} of $\xorclauses$.  
If full propagation can be guaranteed for both sides of an $ \Vars$-cut
partition, then communicating the implied linear combinations involving cut
variables is enough to guarantee full propagation for the whole instance:
\newcommand{\xorclausesA}{\xorclauses^{\textup{a}}}
\newcommand{\xorclausesB}{\xorclauses^{\textup{b}}}
\begin{theorem}
\label{Thm:XCut}
Let $ (\CutA, \CutB) $ be an $\Vars$-cut partition of $\xorclauses$.
Let $ \xorclausesA = \bigwedge_{D \in \CutA} D$, $\xorclausesB = \bigwedge_{D \in \CutB} D$, and $\AL_1,\dots,\AL_k \in \LitsOf{\xorclauses}$.
 Then
it holds that:
\begin{itemize}
\item If $\xorclauses \wedge \AL_1 \wedge \dots \wedge \AL_k $ is unsatisfiable, then
\begin{enumerate}
\item $ \xorclausesA \wedge \AL_1 \wedge \dots \wedge \AL_k $ 
or $ \xorclausesB \wedge \AL_1 \wedge \dots \wedge \AL_k $ is unsatisfiable; or 
\item $\xorclausesA \wedge \AL_1 \wedge \dots \wedge \AL_k \Models 
(X' \equiv \parity{}')$ and $ \xorclausesB \wedge \AL_1 \wedge \dots
\AL_k \Models (X' \equiv \parity{}' \oplus \top) $ for some $ X' \subseteq X $
and $ \parity{}' \in \set{\top, \bot} $.
\end{enumerate}
\item 
If $\xorclauses \wedge \AL_1 \wedge \dots \wedge \AL_k $ is satisfiable
and $ \xorclauses \wedge \AL_1 \wedge \dots \wedge \AL_k \Models (Y \equiv \parity{}) $ for some $Y \subseteq \VarsOf{\xorclauses^\alpha}$, $Y \cap (\VarsOf{\xorclauses^\beta} \setminus \VarsOf{\xorclauses^\alpha}) = \emptyset$, and $\parity{} \in \Set{\top, \bot}$ where $\alpha \in \Set{\textup{a},\textup{b}}$ and $ \beta \in \Set{\textup{a}, \textup{b}} \setminus \Set{\alpha} $, then
\begin{enumerate}
\item $\xorclausesA \wedge \AL_1 \wedge \dots \wedge \AL_k \Models (Y \equiv \parity{}) $
or $ \xorclausesB \wedge \AL_1 \wedge \dots \wedge \AL_k \Models (Y \equiv \parity{}) $;
or
\item $\xorclauses^\alpha \wedge \AL_1 \wedge \dots \wedge \AL_k \Models 
( X' \equiv \parity{}') $ and $ \xorclauses^\beta \wedge \AL_1 \wedge \dots \wedge \AL_k \wedge (X' \equiv \parity{}') \Models (Y \equiv \parity{}) $ for some $X' \subseteq X$,
$\parity{}' \in \set{\top,\bot}$, $\alpha \in \Set{\textup{a},\textup{b}}$, and $\beta \in \Set{\textup{a},\textup{b}}\setminus\Set{\alpha}$.
\end{enumerate}
\end{itemize}
\end{theorem}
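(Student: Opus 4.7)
The plan is to exploit the completeness of Gaussian elimination for xor-constraint entailment: any entailment $\xorclauses \land \AL_1 \land \dots \land \AL_k \Models (Y \Equal \parity{})$ corresponds to a GF(2)-sum of a subset of $\xorclauses \cup \set{\AL_1, \dots, \AL_k}$ that equals the target. I would split this witnessing sum, according to the $\Vars$-cut, into three pieces $F_\textup{a} \LinComb F_\textup{b} \LinComb F_X$, where $F_\textup{a}$ collects the contributions from $\CutA$ together with the assumption literals whose variables lie in $\VarsOf{\xorclausesA}\setminus\Vars$, $F_\textup{b}$ is its symmetric analogue on side b, and $F_X$ collects the assumption literals whose variables lie in $\Vars$. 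The central observation is a cancellation lemma: a private side-a variable (one in $\VarsOf{\xorclausesA}\setminus\Vars$) can appear only in $F_\textup{a}$, so if it is absent from the target then it must cancel within $F_\textup{a}$ alone, which is possible only if it does not occur in $F_\textup{a}$; symmetrically for side b.

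For the unsatisfiability case, the target is $(\F \Equal \T)$, so both $F_\textup{a}$ and $F_\textup{b}$ are forced by the cancellation lemma to be $\Vars$-only, say $F_\textup{a} = (X_\textup{a} \Equal p_\textup{a})$ and $F_\textup{b} = (X_\textup{b} \Equal p_\textup{b})$. Each $F_\bullet$ is entailed by $\xorclauses^\bullet \land \AL_1 \land \dots \land \AL_k$ through its defining sub-sum, and $\AL_1 \land \dots \land \AL_k \Models F_X$. Using $F_\textup{a} \LinComb F_\textup{b} \LinComb F_X = (\F \Equal \T)$ to shift $F_X$ onto the a-side then yields $\xorclausesA \land \AL_1 \land \dots \land \AL_k \Models (X_\textup{b} \Equal p_\textup{b} \oplus \T)$ and $\xorclausesB \land \AL_1 \land \dots \land \AL_k \Models (X_\textup{b} \Equal p_\textup{b})$, which is the second alternative with $\Vars' = X_\textup{b}$ and $\parity{}' = p_\textup{b} \oplus \T$. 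The degenerate case $X_\textup{b} = \emptyset$ turns one of the two constraints into a direct contradiction, placing one side alone into unsatisfiability (the first alternative).

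For the implication case, I would apply the same decomposition to a derivation of $(Y \Equal \parity{})$. Letting $\gamma$ be the side containing $Y$ and $\delta$ the opposite side, the private $\delta$-variables are absent from both $Y$ and $F_X$, so the cancellation lemma forces $F_\delta = (X_\delta \Equal p_\delta)$ to be $\Vars$-only, while $F_\gamma$ may still mention $Y$-variables in addition to variables of $\Vars$. Choosing $\alpha = \delta$, $\beta = \gamma$, $\Vars' = X_\delta$, and $\parity{}' = p_\delta$ gives $\xorclauses^\alpha \land \AL_1 \land \dots \land \AL_k \Models (\Vars' \Equal \parity{}')$, and summing $F_\gamma \LinComb F_X \LinComb (\Vars' \Equal \parity{}')$ recovers $(Y \Equal \parity{})$ on the $\beta$-side, yielding the second alternative; the degenerate case $X_\delta = \emptyset$ with $p_\delta = \F$ collapses to the first alternative, while $p_\delta = \T$ would contradict the assumed satisfiability of $\xorclauses \land \AL_1 \land \dots \land \AL_k$. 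I expect the main obstacle to be the cancellation lemma: one must carefully partition the assumption literals by the location of their variable so that each private variable is controlled on exactly one side, and then argue cancellation within that side's partial sum rather than across the full combination; once this bookkeeping is in place, the remainder is routine GF(2) arithmetic.
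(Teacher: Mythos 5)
Your proposal is correct and follows essentially the same route as the paper's proof: invoke the linear-combination characterization of xor-entailment (Lemma~\ref{Lemma:LinearCombs}), split the witnessing sum along the cut, and observe that variables private to one side must cancel within that side's partial sum, forcing the cross-boundary residue to lie in $\Vars$. The only difference is bookkeeping — you keep the $\Vars$-variable assumption literals in a separate summand $F_X$, whereas the paper absorbs all assumption literals into an extended cut partition $(\CutAp,\CutBp)$ — and both handle the degenerate/empty cases identically.
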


\begin{example}
Consider the constraint graph in Fig.~\ref{Fig:KgeExample}. The cut variable set $ \Set{x_2, x_3, x_6} $ partitions the xor-constraints into 
two conjunctions $ \xorclausesA = (x_1 \oplus x_6 \oplus x_7 \equiv \top) \wedge (x_2 \oplus x_3 \oplus x_7 \equiv \top) $ and $\xorclausesB = (x_2 \oplus x_5 \oplus x_8 \equiv \bot) \wedge (x_3 \oplus x_4 \oplus x_5 \equiv \top) \wedge (x_4 \oplus x_6 \oplus x_8 \equiv \bot) $. It holds that $\xorclausesB \Models (x_2 \oplus x_3 \oplus x_6 \equiv \top) $ and $\xorclausesA \wedge (x_2 \oplus x_3 \oplus x_6 \equiv \top) \Models (x_1 \equiv \top) $.
\end{example}

\begin{figure}
\centering
\includegraphics[width=.3\textwidth]{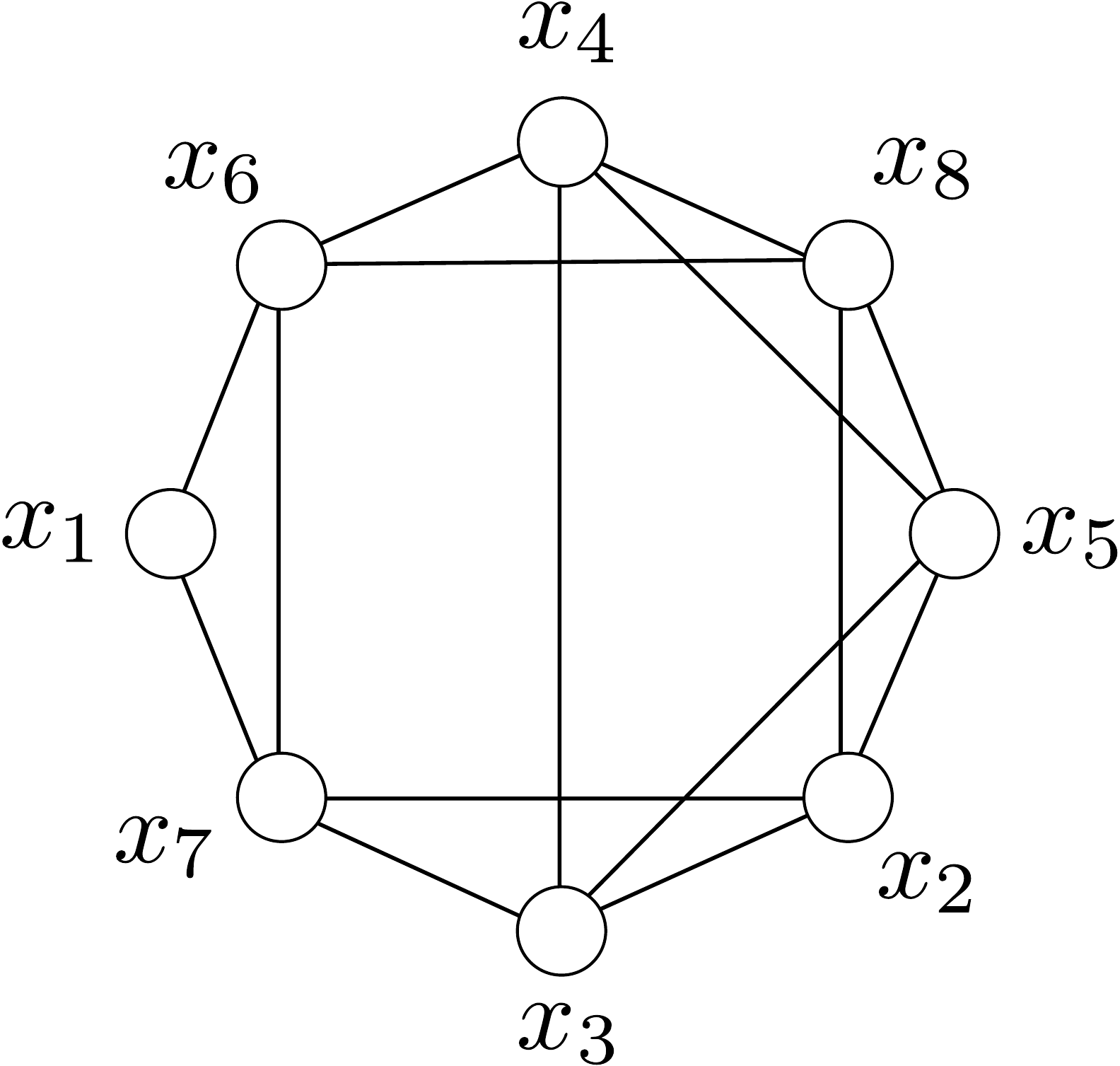}
\caption{Primal graph}
\label{Fig:KgePrimal}
\end{figure}

Now we apply the decomposition method to a tree decomposition to produce a
polynomial-size GE-simulation formula for instances of bounded treewidth.
Formally, a {\it tree decomposition}  
of a graph
$G=\Tuple{V,E}$ is
a pair $\Tuple{X, T}$, where $X = \Set{X_1, \dots, X_n}$ is a family of subsets
of $V$, and $T$ is a tree whose nodes are the subsets $X_i$, satisfying the following
properties: 
(i) $V = X_1 \cup \dots \cup X_n $, 
(ii) if $ \Tuple{v,v'} \in E$, then it holds for at least one $X_i \in X$, that $ \Set{v,v'} \subseteq X_i $, 
and (iii)
if a node $v$ is in two sets $X_i$ and $X_j$, then all nodes in the path between $X_i$ and $X_j$ contain $v$.
\newcommand{\TW}{\operatorname{tw}}
The width of a tree decomposition is the size of its largest set $X_i$ minus
one. The {\it treewidth} $\TW(G) $ of a graph $G$ is the minimum width among all
possible tree decompositions of $G$. 

\begin{figure}
\centering
\includegraphics[width=.3\textwidth]{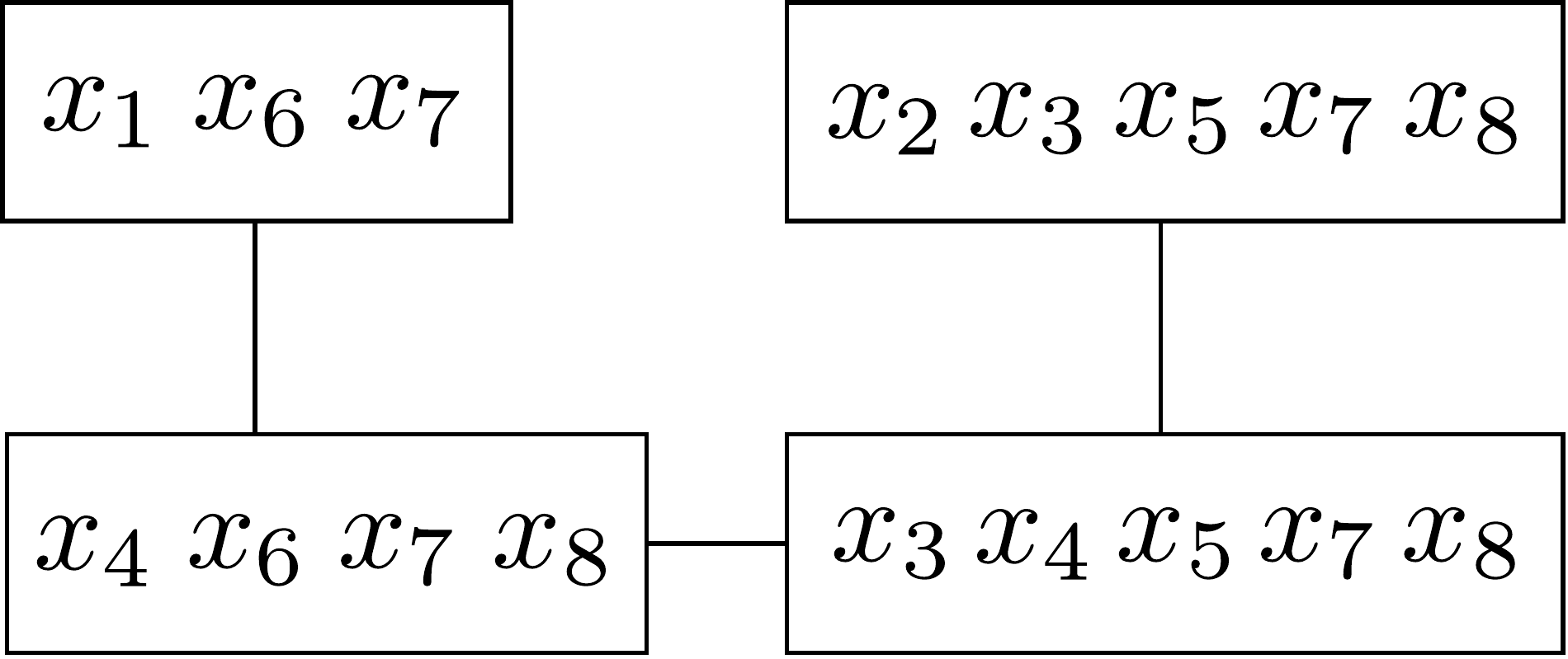}
\caption{A tree decomposition of primal graph in Fig.~\ref{Fig:KgePrimal}}
\label{Fig:KgeDecomp}
\end{figure}
Each pair of adjacent nodes in a tree decomposition defines a cut variable set,
so it suffices to add a \UP{}-propagation table for each node's variable
set. 
The {\it primal graph} for an xor-constraint conjunction $\xorclauses$ is a
graph such that the nodes correspond to the variables of $\xorclauses$ and
there is an edge between two variable nodes if and only if both variables have
an occurrence in the same xor-constraint.
The primal graph of the xor-constraint conjunction shown in Fig.~\ref{Fig:KgeExample} and a
tree decomposition for it are shown in Fig.~\ref{Fig:KgePrimal} and in
Fig.~\ref{Fig:KgeDecomp}.  
If an xor-constraint conjunction has a bounded treewidth, the tree
decomposition can be used to construct a polynomial-size GE-simulation formula:

\begin{theorem}
\label{Thm:TreeDecomposition}
If $\Set{X_1, \dots, X_n} $ is the family of variable sets in the tree
decomposition of the primal graph of an xor-constraint conjunction $
\xorclauses$ and 
$\phi_0, \dots, \phi_n$ is a sequence of xor-constraint conjunctions
such that $ \phi_0 = \xorclauses $ and $ \phi_i = \phi_{i-1} \wedge \PropTable{X_i, \phi_{i-1}, |X_i|} $ for $i \in \Set{1,\dots,n}$, then $ \phi_n \setminus \xorclauses $ is a GE-simulation formula for
$\xorclauses$ with $O(n {2^{2k}}) + |\xorclauses|$ xor-constraints, where
$k = \max(|X_1|, \dots, |X_n|) $.  
\end{theorem}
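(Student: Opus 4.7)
The plan is to split the argument into a size bound and a correctness proof, the latter proceeding by induction along the tree decomposition using Theorem~\ref{Thm:XCut} as the key combinatorial tool.

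\textbf{Size bound.} Inspecting the pseudocode of $\PropTableName$, a single call with a set $Y$ of size $m$ adds at most $2^m-1$ alias-binding constraints (one per non-empty subset), at most $2^m-1$ original-constraint-binder constraints, and at most $\binom{2^m}{2}$ linear-combination constraints, hence $O(2^{2m})$ altogether. Since the procedure in the theorem invokes $\PropTableName$ once per tree node with $|X_i|\le k$, the total number of added xor-constraints is $O(n\cdot 2^{2k})$, on top of the $|\xorclauses|$ originals.

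\textbf{Correctness.} Condition (i) of the GE-simulation definition holds because every added xor-constraint either introduces a fresh alias variable as a linear combination of already present variables, or is itself a linear combination of previously introduced defining constraints; hence the projections onto $\VarsOf{\xorclauses}$ of satisfying assignments coincide for $\xorclauses$ and $\phi_n$.

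For conditions (ii) and (iii), the first observation is that Theorem~\ref{Thm:GESimulation} applied to each call $\PropTable{X_i,\phi_{i-1},|X_i|}$ yields a \UP-propagation table for the set $X_i$ in $\phi_i$, and therefore also in $\phi_n$ since later additions only add constraints. This gives ``local full propagation'' at every tree node. The plan is then to root the tree at an arbitrary node and induct on its structure. At any internal edge between adjacent bags $X_i$ and $X_j$, the running-intersection property of the tree decomposition implies that the xor-constraints of $\phi_n$ split into two conjunctions $\xorclausesA, \xorclausesB$ (grouped by which side of the edge contains their bag) whose variable sets intersect exactly in $X_i\cap X_j$; hence $X_i\cap X_j$ is a cut variable set. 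By Theorem~\ref{Thm:XCut}, any xor-implied literal of $\phi_n\wedge \AL_1\wedge\dots\wedge \AL_k$ either is already implied on one side, or is obtained by deriving some linear combination over $X_i\cap X_j$ on one side and propagating it on the other. Because both $X_i$ and $X_j$ carry \UP-propagation tables containing $X_i\cap X_j$, the intermediate linear combination is \UP-derivable through the alias variables, and concatenating the two derivations yields a \UP-derivation of the target literal in $\phi_n$. The unsatisfiability case (condition (iii)) is handled analogously using the unsatisfiable branch of Theorem~\ref{Thm:XCut}.

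\textbf{Main obstacle.} The delicate point is the bookkeeping needed to keep the partitions $(\xorclausesA,\xorclausesB)$ consistent across the induction after all the additions: one must verify that the auxiliary alias variables introduced by the call on $X_i$ only appear in constraints assigned to the side of the tree containing $X_i$, so that cut variable sets at other tree edges stay equal to intersections of original variables. Since each $\PropTable{X_i,\phi_{i-1},|X_i|}$ only couples fresh aliases to variables of $X_i$, the aliases never cross cuts corresponding to tree edges that do not border $X_i$, which keeps the induction clean.
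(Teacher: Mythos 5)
Your proposal is correct and follows essentially the same route as the paper: the same $O(n\,2^{2k})$ counting argument, model preservation via the structure of $\PropTableName$, a \UP{}-propagation table at every bag (the paper gets this from Lemma~\ref{Lem:PropTableImpl} rather than Theorem~\ref{Thm:GESimulation}, which is the statement about $\getransname$, but the substance is identical), and an induction over the rooted tree decomposition that applies Theorem~\ref{Thm:XCut} at each tree edge — whose cut variable set is $X_i\cap X_j$ by the running-intersection property — to reduce the far side of the edge to a single xor-constraint over cut variables whose alias is \UP{}-derivable. The only place you are looser than the paper is the unsatisfiable case, which the paper handles by dropping one constraint from a minimal unsatisfiable subset to reduce to the satisfiable machinery (which requires satisfiability in Lemmas~\ref{Lem:PTableProp2} and~\ref{Lem:PCProp}); your "analogous" case analysis on the unsatisfiable branch of Theorem~\ref{Thm:XCut} also works but should be spelled out to confirm that each side to which the satisfiable induction is applied is indeed satisfiable.
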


\begin{figure}
\centering
\includegraphics[width=0.8\textwidth]{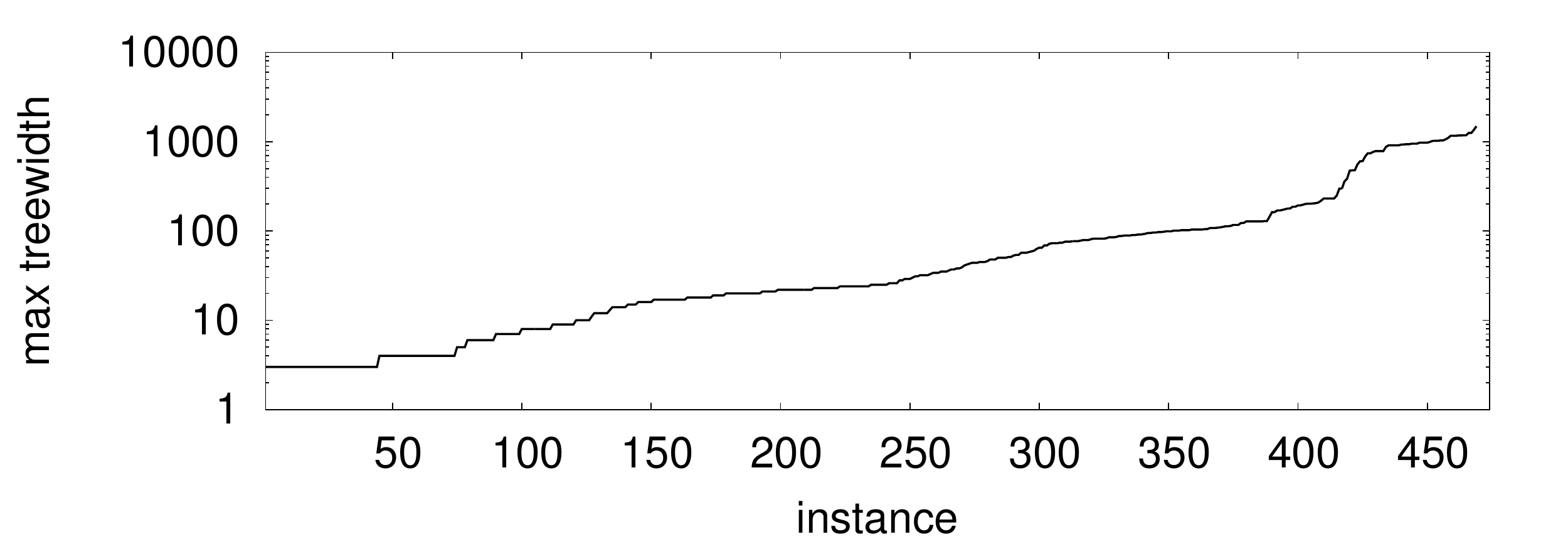}
\caption{Treewidth in SAT 05-11 instances}
\label{Fig:SatTreewidths}
\end{figure}

To find out to what extent unit propagation can simulate stronger parity
reasoning, we studied the 474 benchmark instances in ``crafted'' and
``industrial/application'' categories of the SAT Competitions 2005, 2007, 2009,
and 2011.
Computing the exact value of treewidth is an NP-complete problem~\cite{Arnborg:1987:CFE:37170.37183}, so we applied
the junction tree algorithm described in~\cite{DBLP:conf/aaai/Pearl82}
to get an upper bound for
treewidth.  
%
The found treewidths are shown in Fig.~\ref{Fig:SatTreewidths}.
%
%
There are some instances that have compact GE-simulation formulas, but
for the majority of the instances, full GE-simulation formula is likely
to be intractably large.
For these instances a powerful solution technique can be to choose a suitable
propagation method for each biconnected component separately, either through a
translation or an xor-reasoning module.

\section{Conclusions}
We have studied how stronger parity reasoning techniques in the DPLL(XOR) framework can be simulated by simpler systems.
We have shown that resolution simulates equivalence reasoning efficiently. 
We have proven that parity explanations on nondeterministic unit propagation
derivations can simulate Gauss-Jordan elimination on
a restricted yet practically relevant class of instances.
We have shown that Gauss-Jordan elimination can be simulated by unit
propagation by adding additional xor-constraints, and for instance families of
bounded treewidth, a polynomial number of additional xor-constraints
suffices.
%
%

%

\subsubsection*{Acknowledgments.}
This work has been financially supported by the Academy  of Finland under the Finnish Centre of Excellence in Computational Inference (COIN).
We acknowledge the computational resources provided by Aalto Science-IT project.

\bibliographystyle{splncs}
\bibliography{paper}

\begin{thebibliography}{10}

\bibitem{Handbook:CDCL}
Marques-Silva, J., Lynce, I., Malik, S.:
\newblock Conflict-driven clause learning {SAT} solvers.
\newblock In: Handbook of Satisfiability.
\newblock IOS Press (2009)

\bibitem{tseitin}
Tseitin, G.S.:
\newblock On the complexity of derivations in the propositional calculus.
\newblock Studies in Mathematics and Mathematical Logic \textbf{Part II} (1968)
   115--125

\bibitem{Urquhart:JACM1987}
Urquhart, A.:
\newblock Hard examples for resolution.
\newblock Journal of the ACM \textbf{34}(1) (1987)  209--219

\bibitem{PipatsrisawatDarwiche:AI2011}
Pipatsrisawat, K., Darwiche, A.:
\newblock On the power of clause-learning {SAT} solvers as resolution engines.
\newblock Artificial Intelligence \textbf{175}(2) (2011)  512--525

\bibitem{Li:AAAI2000}
Li, C.M.:
\newblock Integrating equivalency reasoning into {Davis-Putnam} procedure.
\newblock In: Proc.~AAAI/IAAI 2000, AAAI Press (2000)  291--296

\bibitem{Li:IPL2000}
Li, C.M.:
\newblock Equivalency reasoning to solve a class of hard {SAT} problems.
\newblock Information Processing Letters \textbf{76}(1--2) (2000)  75--81

\bibitem{BaumgartnerMassacci:CL2000}
Baumgartner, P., Massacci, F.:
\newblock The taming of the {(X)OR}.
\newblock In: Proc.~CL~2000. Volume 1861 of LNCS., Springer (2000)  508--522

\bibitem{Li:DAM2003}
Li, C.M.:
\newblock Equivalent literal propagation in the {DLL} procedure.
\newblock Discrete Applied Mathematics \textbf{130}(2) (2003)  251--276

\bibitem{HeuleMaaren:SAT2004}
Heule, M., van Maaren, H.:
\newblock Aligning {CNF}- and equivalence-reasoning.
\newblock In: Proc.\ SAT 2004. Volume 3542 of LNCS., Springer (2004)  145--156

\bibitem{HeuleEtAl:SAT2004}
Heule, M., Dufour, M., van Zwieten, J., van Maaren, H.:
\newblock March\_eq: Implementing additional reasoning into an efficient
  look-ahead {SAT} solver.
\newblock In: Proc.\ SAT 2004. Volume 3542 of LNCS., Springer (2004)  345--359

\bibitem{Chen:SAT2009}
Chen, J.:
\newblock Building a hybrid {SAT} solver via conflict-driven, look-ahead and
  {XOR} reasoning techniques.
\newblock In: Proc.\ SAT 2009. Volume 5584 of LNCS., Springer (2009)  298--311

\bibitem{SoosEtAl:SAT2009}
Soos, M., Nohl, K., Castelluccia, C.:
\newblock Extending {SAT} solvers to cryptographic problems.
\newblock In: Proc.\ SAT 2009. Volume 5584 of LNCS., Springer (2009)  244--257

\bibitem{LJN:ECAI2010}
Laitinen, T., Junttila, T., Niemel{\"a}, I.:
\newblock Extending clause learning {DPLL} with parity reasoning.
\newblock In: Proc.\ ECAI 2010, IOS Press (2010)  21--26

\bibitem{Soos}
Soos, M.:
\newblock Enhanced gaussian elimination in {DPLL}-based {SAT} solvers.
\newblock In: Pragmatics of SAT, Edinburgh, Scotland, GB (July 2010)  1--1

\bibitem{LJN:ICTAI2011}
Laitinen, T., Junttila, T., Niemel{\"a}, I.:
\newblock Equivalence class based parity reasoning with {DPLL(XOR)}.
\newblock In: Proc.\ ICTAI 2011, IEEE (2011)  649--658

\bibitem{LJN:SAT2012}
Laitinen, T., Junttila, T., Niemel{\"a}, I.:
\newblock Conflict-driven {XOR}-clause learning.
\newblock In: Proc.\ SAT 2012. Volume 7317 of LNCS., Springer (2012)  383--396

\bibitem{LJN:CP2012}
Laitinen, T., Junttila, T., Niemel{\"a}, I.:
\newblock Classifying and propagating parity constraints.
\newblock In: Proc.\ CP 2012. Volume 7514 of LNCS., Springer (2012)  357--372

\bibitem{LJN:ICTAI2012}
Laitinen, T., Junttila, T., Niemel\"a, I.:
\newblock Extending clause learning {SAT} solvers with complete parity
  reasoning.
\newblock In: Proc.\ ICTAI 2012, IEEE (2012)

\bibitem{Weaver:2012:SAE:2520447}
Weaver, S.A.:
\newblock Satisfiability advancements enabled by state machines.
\newblock PhD thesis, Cincinnati, OH, USA (2012) AAI3554401.

\bibitem{NieuwenhuisEtAl:JACM06}
Nieuwenhuis, R., Oliveras, A., Tinelli, C.:
\newblock Solving {SAT} and {SAT} modulo theories: From an abstract
  {D}avis-{P}utnam-{L}ogemann-{L}oveland procedure to {DPLL(T)}.
\newblock Journal of the ACM \textbf{53}(6) (2006)  937--977

\bibitem{Kullmann:Sep2013}
Gwynne, M., Kullmann, O.:
\newblock On {SAT} representations of {XOR} constraints.
\newblock arXiv document arXiv:1309.3060 [cs.CC] (2013)

\bibitem{Handbook:SMT}
Barrett, C., Sebastiani, R., Seshia, S.A., Tinelli, C.:
\newblock Satisfiability modulo theories.
\newblock In: Handbook of Satisfiability.
\newblock IOS Press (2009)

\bibitem{HanJiang:CAV2012}
Han, C.S., Jiang, J.H.R.:
\newblock When boolean satisfiability meets gaussian elimination in a simplex
  way.
\newblock In: Proc.\ CAV 2012. Volume 7358 of LNCS., Springer (2012)  410--426

\bibitem{ZhangMalik:DATE2003}
Zhang, L., Malik, S.:
\newblock Validating {SAT} solvers using an independent resolution-based
  checker: Practical implementations and other applications.
\newblock In: Proc.\ DATE 2003, IEEE (2003)  880--885

\bibitem{BeameKautzSabharwal:JAIR2004}
Beame, P., Kautz, H., Sabharwal, A.:
\newblock Towards understanding and harnessing the potential of clause
  learning.
\newblock Journal of Artificial Intelligence Research \textbf{22} (2004)
  319--351

\bibitem{EenSorensson:2004}
E\'en, N., S\"orensson, N.:
\newblock An extensible {SAT} solver.
\newblock In: Proc.~SAT 2003. Volume 2919 of LNCS., Springer (2004)  502--518

\bibitem{Freuder:1990:CKS:1865499.1865500}
Freuder, E.C.:
\newblock Complexity of k-tree structured constraint satisfaction problems.
\newblock In: Proc.\ AAAI 1990, AAAI Press (1990)  4--9

\bibitem{Arnborg:1987:CFE:37170.37183}
Arnborg, S., Corneil, D.G., Proskurowski, A.:
\newblock Complexity of finding embeddings in a k-tree.
\newblock SIAM J. Algebraic Discrete Methods \textbf{8}(2) (April 1987)
  277--284

\bibitem{DBLP:conf/aaai/Pearl82}
Pearl, J.:
\newblock Reverend {Bayes} on inference engines: A distributed hierarchical
  approach.
\newblock In: Proc.\ AAAI 1982, AAAI Press (1982)  133--136

\bibitem{LJN:ICTAI2012full}
Laitinen, T., Junttila, T., Niemel\"a, I.:
\newblock Extending clause learning {SAT} solvers with complete parity
  reasoning (extended version).
\newblock arXiv document arXiv:1207.0988 [cs.LO] (2012)

\end{thebibliography}

\newpage
\section{Proofs}
\newenvironment{relemma}[1]{\renewcommand{\thelemma}{#1}\begin{lemma}}{\end{lemma}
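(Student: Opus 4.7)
My plan is to establish the three conditions defining a GE-simulation formula together with the claimed size bound. The size count $O(n\,2^{2k}) + |\xorclauses|$ follows by summing the worst-case output size $O(2^{2|X_i|})$ of each of the $n$ calls to $\PropTableName$ and absorbing the original-constraint binders (lines 4--5 of Fig.~\ref{Fig:PropTable}) into the $|\xorclauses|$ term. Preservation of satisfying truth assignments restricted to $\VarsOf{\xorclauses}$ is immediate: every newly added xor-constraint is either a defining equation $a \oplus Y' \equiv \bot$ for a fresh alias variable $a$ (each satisfying assignment of $\xorclauses$ extends uniquely by setting every alias to the parity of its defining subset), a linear combination of two such definitions, or an already-implied original-constraint binder.

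For UP-derivability of xor-implied literals and of contradictions, I would root the tree $T$ at an arbitrary node and induct on its structure. The key combinatorial observation is that for every edge $(X_i, X_j)$ of $T$, the intersection $S = X_i \cap X_j$ is a cut variable set of $\xorclauses$: by the running-intersection property together with tree-decomposition property (ii) (each xor-constraint, being a clique in the primal graph, lies entirely in some bag), the xor-constraints split into those supported by the bags of the $X_j$-subtree and those on the other side, with overlap precisely $S$. Theorem \ref{Thm:XCut} then reduces any cross-cut consequence (a contradiction, or an implied literal/parity whose support crosses the cut) to communicating a single implied parity $(X' \equiv p')$ with $X' \subseteq S$, and iterating this up and down $T$ decomposes the derivation of any xor-implied literal $\IL$ or of the contradiction $(\bot \equiv \top)$ into finitely many such cross-cut exchanges.

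The remaining task, which is the main technical hurdle, is to simulate each such exchange by unit propagation on $\phi_n$. I would argue as follows. First, thanks to the existence checks in lines 2 and 7 of Fig.~\ref{Fig:PropTable}, any two successive calls involving overlapping variable sets $X_i$ and $X_j$ share the same alias variable $a_{X'}$ for every $X' \subseteq X_i \cap X_j$; the linear-combination constraints added in lines 10--11 of the later call therefore tie the $X_j$-side aliases to the same $a_{X'}$ already used on the $X_i$-side. Second, the propagation-table properties PT1--PT3 established en route to Theorem \ref{Thm:GESimulation} guarantee that whenever the restriction of the current assumption set to $X_i$ entails $(X' \equiv p')$ through Gauss--Jordan reasoning confined to $X_i$, unit propagation assigns $a_{X'} \mapsto p'$. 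Chaining these two facts along the edges of $T$, the Gauss--Jordan derivation supplied by Theorem \ref{Thm:XCut} is mirrored step by step by unit propagation on the alias variables, which in turn propagates via the alias definitions back to the original literals, producing $\IL$ or $(\bot \equiv \top)$ as required.
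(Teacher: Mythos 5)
Your proposal is correct and follows essentially the same route as the paper's proof: model preservation and the size count via the per-bag $\PropTableName$ calls, then an induction on the rooted tree decomposition in which each tree edge yields an $\Vars$-cut partition so that Theorem~\ref{Thm:XCut} reduces every cross-separator consequence to communicating one implied parity over the separator, which is then realized by unit propagation on the shared alias variables via the propagation-table properties. The only cosmetic difference is that the paper dispatches the unsatisfiable case by deleting one constraint from a minimal inconsistent subset and deriving both parities of its alias, whereas you fold the contradiction into the same cross-cut machinery; both work.
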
}
\newenvironment{retheorem}[1]{\renewcommand{\thetheorem}{#1}\begin{theorem}}{\end{theorem}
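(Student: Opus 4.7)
The plan is to establish the size bound by direct counting, verify property (i) of a GE-simulation formula via the definitional nature of the added constraints, and prove properties (ii) and (iii) by induction on the tree decomposition using Theorem~\ref{Thm:XCut} as the compositional tool.

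For the size, each invocation $\PropTable{X_i, \phi_{i-1}, |X_i|}$ introduces at most one fresh alias per nonempty subset of $X_i$, one alias-binding constraint per such subset, at most $|\xorclauses|$ original-constraint binders in total across all calls, and one linear combination constraint per ordered pair of distinct subsets, amounting to $O(2^{2|X_i|})$ new xor-constraints per call. Summing over $i=1,\dots,n$ and using $|X_i|\le k$ yields at most $O(n\cdot 2^{2k}) + |\xorclauses|$ xor-constraints in $\phi_n$. For property (i), each alias variable introduced by \PropTableName{} is fresh and its alias-binding constraint $(a \oplus Y' \equiv \bot)$ acts as a definition of $a$ as the parity of $Y'$; the remaining constraints produced by \PropTableName{} are semantic consequences of such definitions together with $\xorclauses$. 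Hence satisfying assignments of $\xorclauses$ extend uniquely to satisfying assignments of $\phi_n$, and projections preserve satisfaction.

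For properties (ii) and (iii), the critical observation is that \PropTableName{} reuses an existing alias whenever an alias-binding constraint for the same subset is already present. Consequently, for any two adjacent bags $X_i$ and $X_j$ and any subset $Y' \subseteq X_i \cap X_j$, both calls share the same alias variable, so unit propagation can directly transport its value between the two local propagation tables. I would root the tree at an arbitrary node and prove by bottom-up induction on each subtree $T'$ rooted at a bag $X_i$ the following claim: letting $\xorclauses_{T'}$ denote the original xor-constraints whose variables lie in $\bigcup_{X_j \in T'} X_j$, for every $Y \subseteq X_i \cap X_{\mathrm{parent}(i)}$, every parity $\parity{}$, and every set of xor-assumptions $\AL_1,\dots,\AL_k$, if $\xorclauses_{T'} \wedge \AL_1 \wedge \dots \wedge \AL_k \Models (Y \equiv \parity{})$ then $\phi_n \wedge \AL_1 \wedge \dots \wedge \AL_k \UPderiv (a_Y \equiv \parity{})$ where $a_Y$ is the shared alias, and analogously deriving $(\bot \equiv \top)$ in the unsatisfiable case. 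Applying the claim at the root and noting that every original variable $x$ lies in some bag (so its alias is either $x$ itself or unit-propagates $x$ through its binding constraint) gives both (ii) and (iii).

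The inductive step uses Theorem~\ref{Thm:XCut} with cut variable set $X_i \cap X_{\mathrm{parent}(i)}$, which is a valid cut variable set because the running intersection property of the tree decomposition guarantees this set separates the subtree variables from the rest. The two cases of Theorem~\ref{Thm:XCut} unfold as follows: either the implication already holds within a single child subtree and follows by induction, or it factors through a linear combination over the cut variables that is by induction UP-derivable as a shared alias value, after which the local UP-propagation table at $X_i$ (guaranteed by \PropTableName{} being called with $k=|X_i|$) combines those incoming alias values into the required output. The main obstacle is precisely the existential case of Theorem~\ref{Thm:XCut}: one must argue that the intermediary linear combination can always be represented as an alias that is simultaneously present in both adjacent \PropTableName{} tables, which is exactly what the shared-alias reuse mechanism in \PropTableName{} together with the running intersection property of the tree decomposition guarantees.
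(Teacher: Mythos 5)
Your proposal follows essentially the same route as the paper's proof: the size bound by per-bag counting, property (i) via the definitional/model-preserving nature of the alias constraints (the paper's Lemma~\ref{Lem:PropTableModels}), and properties (ii) and (iii) by a bottom-up induction over the rooted tree decomposition in which Theorem~\ref{Thm:XCut} is applied at each parent--child interface and shared alias variables (guaranteed by \PropTableName{}'s reuse check and the running-intersection property) carry the intermediate linear combinations between adjacent \UP{}-propagation tables. The only cosmetic difference is that you state the inductive invariant for subsets of $X_i \cap X_{\mathrm{parent}(i)}$ and rely on re-rooting at a bag containing the target, whereas the paper states it for subsets of the subtree's root bag directly, and the paper reduces the unsatisfiable case to the satisfiable one by deleting one constraint from a minimal conflicting set and deriving the corresponding alias value with both parities; neither difference affects correctness.
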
}

\subsection{Fundamental Properties of Linear Combinations}

Some fundamental, easy to verify properties are
$\XC \LinComb \XC \LinComb \XCB = \XCB$,
${\XC \land \XCB} \Models {\XC \LinComb \XCB}$,
${\XC \land \XCB} \Models {\XC \land (\XC \LinComb \XCB)}$,
and
${\XC \land (\XC \LinComb \XCB)} \Models {\XC \land \XCB}$.

The logical consequence xor-constraints of an xor-constraint conjunction $\psi$
are exactly those that are linear combinations of the xor-constraints in $\psi$:
\begin{lemma}[from \cite{LJN:ICTAI2012full}]\label{Lemma:LinearCombs}
  Let $\psi$ be a conjunction of xor-constraints.
  Now $\psi$ is unsatisfiable if and only if
  there is a subset $S$ of xor-constraints in $\psi$ such that
  $\BigLinComb_{\XC \in S} \XC = (\F \Equal \T)$.
  If $\psi$ is satisfiable and $\XCB$ is an xor-constraint,
  then $\psi \Models \XCB$ 
  if and only if
  there is a subset $S$ of xor-constraints in $\psi$ such that
  $\BigLinComb_{\XC \in S} \XC = \XCB$.
\end{lemma}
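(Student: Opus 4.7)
\medskip\noindent\textbf{Proof proposal.}
The plan is to view each xor-constraint as a linear equation over $\mathrm{GF}(2)$ and then exploit the completeness of Gaussian elimination. Concretely, if $\psi = \XCI{1} \land \dots \land \XCI{n}$ over variables $x_1,\dots,x_N$, associate with each $\XCI{i}$ a row $(a_{i,1},\dots,a_{i,N} \mid p_i)$ of the augmented matrix $[A \mid b]$ over $\mathrm{GF}(2)$, where $a_{i,j}=1$ iff $x_j$ appears in $\XCI{i}$ and $p_i$ is its parity. A truth assignment $\TA$ satisfies $\psi$ exactly when the 0/1 vector it induces is a solution of $Ax=b$. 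Under this correspondence, the linear combination operator $\LinComb$ on xor-constraints is precisely the sum of rows in $[A\mid b]$, and a subset-sum $\BigLinComb_{D\in S}D$ corresponds to summing the indicator-weighted rows of $S$.

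I would first dispense with the two easy directions. If $\BigLinComb_{D\in S}D = (\F\Equal\T)$, then any assignment satisfying all $D\in S$ would also satisfy the sum (this is the standard fact ${\XC\land\XCB}\Models{\XC\LinComb\XCB}$ iterated), but no assignment satisfies $(\F\Equal\T)$; hence $\psi$ is unsatisfiable. Similarly, if $\psi$ is satisfiable and $\BigLinComb_{D\in S}D = \XCB$, then $\psi\Models\BigLinComb_{D\in S}D$ gives $\psi\Models\XCB$.

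For the harder ``only if'' direction of the unsatisfiability claim, I would apply Gaussian elimination to $[A\mid b]$, keeping a record, for each current row, of the subset $S\subseteq\{1,\dots,n\}$ of original rows summed to produce it (initially the singleton $\{i\}$ for row $i$; when row $r$ is added to row $r'$, the bookkeeping set of $r'$ becomes its symmetric difference with that of $r$). Unsatisfiability of $\psi$ means the reduced system has no solution, which in $\mathrm{GF}(2)$ forces a row of the form $(0,\dots,0\mid 1)$ in the reduced matrix. The bookkeeping set attached to that row is exactly the desired $S$ with $\BigLinComb_{D\in S}D=(\F\Equal\T)$.

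For the ``only if'' direction of the implication claim, I would use a reduction to the unsatisfiability case. Given $\psi$ satisfiable with $\psi\Models\XCB$, form $\psi' = \psi\land\XCB'$ where $\XCB'$ is $\XCB$ with its parity flipped (so $\XCB' = \XCB \LinComb (\F\Equal\T)$ on the parity side). Then $\psi'$ is unsatisfiable, so by the already-proved case there is $S'\subseteq\psi'$ with $\BigLinComb_{D\in S'}D = (\F\Equal\T)$. Because $\psi$ itself is satisfiable, $S'$ must contain $\XCB'$ (otherwise $S'\subseteq\psi$ would witness unsatisfiability of $\psi$). Let $S = S'\setminus\{\XCB'\}\subseteq\psi$; adding $\XCB'$ to both sides and using $\XCB'\LinComb\XCB' = \T$ together with $\XCB' = \XCB\LinComb(\F\Equal\T)$ yields $\BigLinComb_{D\in S}D = \XCB$, as required. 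The main obstacle is making the subset-based bookkeeping in Gaussian elimination airtight (each elimination step toggles membership via symmetric difference, so a variable appearing in an original constraint used an even number of times cancels, consistent with the fact that $\LinComb$ removes duplicated variables); once that is in place, the rest is essentially linear algebra over $\mathrm{GF}(2)$.
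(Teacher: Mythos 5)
The paper does not actually prove this lemma itself: it is stated in the appendix with the attribution ``from \cite{LJN:ICTAI2012full}'' and no argument is given, so there is no in-paper proof to compare against. Judged on its own, your proof is correct, and it is the standard argument one would expect the cited reference to contain: identifying xor-constraints with rows of an augmented $\mathrm{GF}(2)$ system makes the two soundness directions immediate (iterating ${\XC\land\XCB}\Models{\XC\LinComb\XCB}$); Gaussian elimination over $\mathrm{GF}(2)$, where the only nontrivial row operation is adding one row to another, so your symmetric-difference bookkeeping exactly tracks which subset of original rows each current row sums, produces a row $(0,\dots,0\mid 1)$ precisely when the system is unsolvable, which yields the witness $S$ in the unsatisfiable case; and the reduction of the entailment case to the unsatisfiability case via the parity-flipped constraint $\XCB' = \XCB\LinComb(\F\Equal\T)$ is clean --- in particular your observation that $S'$ must contain $\XCB'$, since otherwise $S'\subseteq\psi$ would contradict the satisfiability of $\psi$ by the already-proved easy direction, is exactly the step that makes the reduction go through, and $\XCB'\LinComb\XCB'=\T$ then recovers $\BigLinComb_{\XC\in S}\XC=\XCB$. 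Two small points worth pinning down for completeness: (i) if $\XCB$ is the tautology $(\F\Equal\F)$, you need the convention that the empty subset sums to $(\F\Equal\F)$ (harmless, and consistent with how the paper uses the lemma, e.g.\ for prime implicates with $k\ge 1$ variables); (ii) satisfiability of $\psi$ also guarantees that $\XCB'$ does not already occur among the constraints of $\psi$ (else $\psi\Models\XCB\land\XCB'$ would be contradictory), so $S = S'\setminus\Set{\XCB'}$ really is a subset of the constraints of $\psi$ and the set difference cannot silently remove a needed occurrence.
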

%

%
%
\subsection{Proof of Theorem~\ref{Thm:ResDeriv}}

\newcommand{\ResLabFunc}{\Labfunc_\ResDer}
\newcommand{\ResLab}[1]{\ResLabFunc(#1)}

\begin{retheorem}{\ref{Thm:ResDeriv}}
  Assume a \SUBST-derivation $\igraph = \Tuple{\vertices,\edges,\Labfunc}$ on
  a conjunction $\psi$ of xor-constraints.
  There is a resolution derivation $\ResDer$ on
  $\bigwedge_{\XC \in \psi}\cnf{\XC}$
  such that
  (i)
  if $\Vertex \in \vertices$ and $\Lab{\Vertex} \neq \T$,
  then the clauses $\cnf{\Lab{\Vertex}}$ occur in $\ResDer$,
  and
  (ii)
  $\ResDer$ has at most $\Card{\vertices}2^{m-1}$ clauses,
  where $m$ is the number of variables in the largest xor-constraint in $\psi$.
\end{retheorem}
\begin{proof}
We construct a resolution derivation $\ResDer$ with the desired properties
by first inductively associating each vertex $\Vertex$ in $\igraph$
with a set $\ResLab{\Vertex}$ of clauses such that
%
%
\begin{enumerate}
\item
  if $\Lab{\Vertex} \neq \T$, then all the clauses in $\cnf{\Lab{\Vertex}}$ occur in $\ResLab{\Vertex}$,
\item
  all the clauses in $\ResLab{\Vertex}$ either occur in the CNF translation $\bigwedge_{\XC \in \psi}\cnf{\XC}$ or
  can be obtained with one resolution step
  (i) from the ones in $\ResLab{\Vertexp}$ and $\ResLab{\Vertexpp}$,
  where $\Vertexp$ and $\Vertexpp$ are the source vertices of the two edges incoming to $\Vertex$,
  or
  (ii) from the ones produced as in (i) above,
and
\item
  $\Card{\ResLab{\Vertex}} \le 2^{m-1}$.
\end{enumerate}
A resolution derivation can be obtained directly from this construction
by just listing the clauses in the $\ResLabFunc$-sets in appropriate order.

For each input vertex $\Vertex$
we have that $\Lab{\Vertex}$ occurs in $\psi$ and
thus we simply set $\ResLab{\Vertex} = \Setdef{C}{\text{$C$ occurs in $\cnf{\Lab{\Vertex}}$}}$.



For non-input vertices, we apply the following construction.
\begin{enumerate}
\item
  $\Lab{\Vertex}$ is obtained from $\Lab{\Vertexp}$ and $\Lab{\Vertexpp}$
  by using $\unitruleP$.

  Suppose that $\Lab{\Vertexp} = (\Var \Equal \T)$ for some variable $\Var$.
  Thus $\ResLab{\Vertexp} \supseteq \Set{(\Var)}$.
  \begin{itemize}
  \item
    If $\Lab{\Vertexpp} = (\Var \Equal \T)$,
    then $\Lab{\Vertex} = \T$ and we set $\ResLab{\Vertex} = \emptyset$.
  \item
    If $\Lab{\Vertexpp} = (\Var \Equal \F)$,
    then
    $\ResLab{\Vertexpp} \supseteq \Set{(\neg \Var)}$,
    $\Lab{\Vertex} = \F$
    and
    we set $\ResLab{\Vertex}$ to be the resolvent of
    $(\Var) \in \ResLab{\Vertexp}$
    and
    $(\neg \Var) \in \ResLab{\Vertexp}$.
    Thus $\ResLab{\Vertex} = () = \cnf{\F \Equal \T}$.
  \item
    If $\Lab{\Vertexpp} = (x \X y \X ... \Equal \parity{})$,
    then
    $\ResLab{\Vertexpp} \supseteq \Setdef{C}{\text{$C$ occurs in $\cnf{\Lab{\Vertexpp}}$}}$,
    $\Lab{\Vertex} = \simplification{\Lab{\Vertexpp}}{\Var}{\T}$
    and
    we set $\ResLab{\Vertex}$ to be the set of all clauses obtained by
    resolving
    $(\Var) \in \ResLab{\Vertexp}$
    with the clauses of form $(\neg\Var \lor ...)$
    occurring in $\cnf{\Lab{\Vertexpp}}$.
    One can verify that indeed
    $\ResLab{\Vertex} = \Setdef{C}{\text{$C$ occurs in $\cnf{\Lab{\Vertex}}$}}$.
  \end{itemize}
\item
  $\Lab{\Vertex}$ is obtained from $\Lab{\Vertexp}$ and $\Lab{\Vertexpp}$
  by using $\unitruleN$.

  This case is similar to the previous one.
\item
  $\Lab{\Vertex}$ is obtained from $\Lab{\Vertexp}$ and $\Lab{\Vertexpp}$
  by using $\eqvruleP$.

  Suppose that $\Lab{\Vertexp} = (\Var \X \AnotherVar \Equal \F)$ for some variables $\Var$ and $\AnotherVar$.
  Thus $\ResLab{\Vertexp} \supseteq \Set{(\neg\Var \lor \AnotherVar),(\Var \lor \neg\AnotherVar)}$
  and
  $\Lab{\Vertex} = \simplification{\Lab{\Vertexpp}}{\Var}{\AnotherVar}$.

  \begin{itemize}
  \item
    If $\Lab{\Vertexpp} = (\Var \X ... \Equal \parity{})$
    such that $\AnotherVar$ does not occur in it,
    then we set
    $\ResLab{\Vertex}$ to consist of all the clauses obtained by
    (i) resolving $(\neg\Var \lor \AnotherVar)$ with each clause of form
    $(\Var \lor ...)$ occurring in $\cnf{\Lab{\Vertexpp}}$
    [and thus also in $\ResLab{\Vertexpp}$],
    and
    (ii) resolving $(\Var \lor \neg\AnotherVar)$ with each clause of form
    $(\neg\Var \lor ...)$ occurring in $\cnf{\Lab{\Vertexpp}}$
    [and thus also in $\ResLab{\Vertexpp}$].
    It is straightforward to verify that
    $\ResLab{\Vertex} = \Setdef{C}{\text{$C$ occurs in $\cnf{\Lab{\Vertex}}$}}$.
  \item
    If $\Lab{\Vertexpp} = (\Var \X \AnotherVar \Equal \F)$,
    then $\Lab{\Vertex} = \T$ and we set $\ResLab{\Vertex}=\emptyset$.
  \item
    If $\Lab{\Vertexpp} = (\Var \X \AnotherVar \Equal \T)$,
    then $\ResLab{\Vertexpp}\supseteq\Set{(\Var \lor \AnotherVar),(\neg\Var\lor\neg\AnotherVar)}$,
    $\Lab{\Vertex} = \F$ and
    we set $\ResLab{\Vertex}=\Set{(y),(\neg y),()}$ [all these clauses can be obtained with resolution from the ones in $\ResLab{\Vertexp}$ and $\ResLab{\Vertexpp}$].
  \item
    If $\Lab{\Vertexpp} = (\Var \X \AnotherVar \X \ThirdVar_1 \X ... \X \ThirdVar_k \Equal \parity{})$,
    then we first resolve
    (i)
    $(\neg x \lor y)$ with each of the $2^{k-1}$ clauses
    in $\Setdef{(x \lor y \lor C)}{\text{$C$ occurs in $\cnf{\ThirdVar_1 \X ... \X \ThirdVar_k \Equal \parity{}}$}}$
    and
    (ii)
    $(x \lor \neg y)$ with each of the $2^{k-1}$ clauses
    in $\Setdef{(\neg x \lor \neg y \lor C)}{\text{$C$ occurs in $\cnf{\ThirdVar_1 \X ... \X \ThirdVar_k \Equal \parity{}}$}}$.
    We then resolve, for each $C$ occurring in $\cnf{\ThirdVar_1 \X ... \X \ThirdVar_k \Equal \parity{}}$,
    the clauses $(y \lor C)$ and $(\neg y \lor C)$ obtained above;
    the result is the $2^{k-1}$ clauses in
    $\cnf{\ThirdVar_1 \X ... \X \ThirdVar_k \Equal \parity{}}$,
    as required to represent $\Lab{\Vertex} = (\ThirdVar_1 \X ... \X \ThirdVar_k \Equal \parity{})$.
  \end{itemize}
\item
  $\Lab{\Vertex}$ is obtained from $\Lab{\Vertexp}$ and $\Lab{\Vertexpp}$
  by using $\eqvruleN$.

  This case is similar to the previous one.
\end{enumerate}
\qed
\end{proof}

%
%
\subsection{Proof of Theorem~\ref{Thm:ResExp}}

\newcommand{\Trigger}{\phi}
\newcommand{\Triggerp}{\phi'}
\newcommand{\CS}{S}
\newcommand{\CSp}{S'}
\newcommand{\CSpp}{S''}

We start by giving some auxiliary results and lemmas.

First, observe that
$(\Var\Equal\T)\LinComb\XC = \simplification{\XC}{\Var}{\T}$,
$(\Var\Equal\F)\LinComb\XC = \simplification{\XC}{\Var}{\F}$,
$(\Var\X\AnotherVar\Equal\F)\LinComb\XC = \simplification{\XC}{\Var}{\AnotherVar}$,
and
$(\Var\X\AnotherVar\Equal\T)\LinComb\XC = \simplification{\XC}{\Var}{\AnotherVar\X\T}$
when $\Var$ occurs in $\XC$ and thus the \SUBST-rules in Fig.~\ref{Fig:SUBST}
are special cases of a more general linear combination rule.

The next lemmas show that these special cases,
when conditioned with some conjunctions of literals,
can be derived with resolution with a linear number of steps.
\begin{lemma}\label{Lemma:UnitExplanation}
Let $\Trigger$ and $\Triggerp$ be conjunctions of literals
and take some xor-constraints
$\XC = (\Var \Equal \parity{})$ and
$\XCp = (\Var \X \ThirdVar_1 \X ... \X \ThirdVar_k \Equal \Parityp)$.
Given the sets
$\CS = \Setdef{\Trigger \Implies C}{C \in \cnf{\XC}}$
and
$\CSp = \Setdef{\Triggerp \Implies C}{C \in \cnf{\XCp}}$
of clauses,
the set
$\Setdef{(\Trigger \land \Triggerp)\Implies C}{C \in \cnf{\XC \LinComb \XCp}}$
has $2^{k-1}$ clauses and
we can derive them from those in $\CS$ and $\CSp$
with $2^{k-1}$ resolution steps.
\end{lemma}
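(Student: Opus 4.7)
The plan is to exploit that $D$ is a unit xor-constraint: $\cnf{D}$ consists of a single clause, namely $(x)$ if $p = \T$ and $(\neg x)$ if $p = \F$. Accordingly, $\CS$ contains exactly one clause, of the form $(\neg\phi \lor \hat{\ell})$, where $\hat{\ell}$ is $x$ when $p = \T$ and $\neg x$ when $p = \F$, and where $\neg\phi$ denotes the disjunction of the negations of the literals in $\phi$. Since $D \LinComb D' = (z_1 \X \cdots \X z_k \Equal p \X p')$ has exactly $k$ variables, $\cnf{D \LinComb D'}$ contains exactly $2^{k-1}$ clauses, matching the claimed target count.

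Next I would establish the key combinatorial link: every clause $C_0 \in \cnf{D \LinComb D'}$ extends uniquely to a clause of $\cnf{D'}$ by prepending a literal on $x$, and that literal is always the complement of $\hat{\ell}$. This follows from the standard parity characterization of CNF translations of xor-constraints: members of $\cnf{D'}$ are precisely the full-width clauses over $\Set{x,z_1,\ldots,z_k}$ whose number of negative literals has a fixed parity determined by $p'$, and xor-combining $D$ into $D'$ precisely toggles the required polarity of the $x$-literal in that extension.

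Given this, the construction of the resolution derivation is uniform. For each target clause $C_0 \in \cnf{D \LinComb D'}$, take the unique clause $(\neg\phi' \lor \ell \lor C_0) \in \CSp$, where $\ell$ is the complement of $\hat{\ell}$, and resolve it on the variable $x$ against the single clause $(\neg\phi \lor \hat{\ell}) \in \CS$. The resolvent is $(\neg\phi \lor \neg\phi' \lor C_0)$, which is exactly $(\phi \land \phi') \Implies C_0$. Each target clause is produced by exactly one resolution step, so the total number of resolution steps is $2^{k-1}$, and the full set $\Setdef{(\phi \land \phi') \Implies C}{C \in \cnf{D \LinComb D'}}$ is obtained.

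The only nontrivial step is the parity bookkeeping that pairs each target $C_0$ with its unique $\cnf{D'}$-extension and confirms that the matching $x$-literal is the complement of the one in $\cnf{D}$. I would isolate this as a short explicit calculation on the parities of $p$, $p'$, and the number of positive literals in the $z$-part of $C_0$, using the fact that a clause belongs to $\cnf{x_1\X\cdots\X x_n\Equal q)$ iff its number of negative literals has parity $\neg q$. Everything else is then mechanical, since each resolvent has the desired form after a single resolution step on $x$.
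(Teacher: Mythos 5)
Your proposal is correct and follows essentially the same route as the paper: since $\cnf{\XC}$ is a single unit clause, you resolve the one clause of $\CS$ on the variable $\Var$ against each of the $2^{k-1}$ clauses of $\CSp$ whose $\Var$-literal is complementary, producing each clause of $\Setdef{(\Trigger\land\Triggerp)\Implies C}{C\in\cnf{\XC\LinComb\XCp}}$ in one step. Your explicit parity bookkeeping (a clause lies in the CNF translation iff its count of negative literals has the right parity, so the matching $\Var$-literal is forced to be the complement) is a correct and in fact more detailed justification of the pairing that the paper's proof only states semantically.
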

\begin{proof}
Take the only clause $l_1 \land ... \land l_m \Implies (\Var \Equal \parity{})$ in $\CS$ and
resolve it with each clause
$l'_1 \land ... \land l'_n \Implies C$ with
$C = ((\Var \Equal \neg\parity{}) \lor ...) \in \cnf{\XCp}$
in $\CSp$ (there are $2^{k-1}$ of them).
Each resulting clause forces that
either
(i) one of the literals $l_1,...,l_m,l'_1,...,l'_n$ is false
or
(ii) that the parity of $\ThirdVar_1,...,\ThirdVar_k$ is not
one of the $2^{k-1}$ ones not allowed by 
$(\parity{} \X \ThirdVar_1 \X ... \X \ThirdVar_k \Equal \Parityp) = 
 \XC \LinComb \XCp$.
\qed
\end{proof}
\begin{lemma}\label{Lemma:EqExplanation1}
Let $\Trigger$ and $\Triggerp$ be conjunctions of literals
and take some xor-constraints
$\XC = (\Var \X \AnotherVar \Equal \parity{})$ and
$\XCp = (\Var \X \ThirdVar_1 \X ... \X \ThirdVar_k \Equal \Parityp)$.
Given the sets
$\CS = \Setdef{\Trigger \Implies C}{C \in \cnf{\XC}}$
and
$\CSp = \Setdef{\Triggerp \Implies C}{C \in \cnf{\XCp}}$
of clauses,
the clause set
$\Setdef{(\Trigger \land \Triggerp)\Implies C}{C \in \cnf{\XC \LinComb \XCp}}$
has $2^k$ clauses and
we can derive them from those in  $\CS$ and $\CSp$
with $2^k$ resolution steps.
\end{lemma}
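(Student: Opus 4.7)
The plan is to mirror the proof of Lemma~\ref{Lemma:UnitExplanation}, except that $\XC$ now has two variables, so $\CS$ contains two clauses rather than one; accordingly I would pair each clause of $\CS$ with the appropriate half of $\CSp$ according to the polarity of $x$. Concretely, I would write $\CS = \Set{\Trigger \Implies (x \lor l_y'),\ \Trigger \Implies (\neg x \lor l_y)}$, where $l_y$ and $l_y'$ are opposite literals on $y$ (namely $l_y = \neg y$ and $l_y' = y$ when $p = \T$, and the reverse when $p = \F$). The set $\CSp$ splits naturally into $2^{k-1}$ clauses of the form $\Triggerp \Implies (x \lor L_j)$ and $2^{k-1}$ clauses of the form $\Triggerp \Implies (\neg x \lor L_j')$, where each $L_j, L_j'$ is a disjunction of literals over $z_1,\ldots,z_k$.

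I would then perform, for each $j$, the resolution of $\Trigger \Implies (\neg x \lor l_y)$ with $\Triggerp \Implies (x \lor L_j)$ on $x$ to obtain $\Trigger \land \Triggerp \Implies (l_y \lor L_j)$, and the resolution of $\Trigger \Implies (x \lor l_y')$ with $\Triggerp \Implies (\neg x \lor L_j')$ on $x$ to obtain $\Trigger \land \Triggerp \Implies (l_y' \lor L_j')$. This amounts to exactly $2 \cdot 2^{k-1} = 2^k$ resolution steps and produces $2^k$ derived clauses.

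Finally, I would verify that these derived clauses are exactly $\Setdef{(\Trigger \land \Triggerp) \Implies C}{C \in \cnf{\XC \LinComb \XCp}}$. Since $\XC \LinComb \XCp = (y \oplus z_1 \oplus \cdots \oplus z_k \Equal p \oplus p')$ has $k+1$ variables, its CNF has exactly $2^k$ clauses, so it suffices to check that each derived clause lies in $\cnf{\XC \LinComb \XCp}$ and that the $2^k$ derived clauses are pairwise distinct. For the former, a direct parity computation suffices: the falsifying assignment of $(l_y \lor L_j)$ sets $y = p$ (so $l_y$ is false) and the $z_i$'s to the unique pattern of parity $\neg p'$ (inherited from $(x \lor L_j) \in \cnf{\XCp}$, whose falsifying assignment has $x = \F$ forcing $z$-parity $\neq p'$), so the total parity $p \oplus \neg p' = \neg(p \oplus p')$ violates $\XC \LinComb \XCp$; the $(l_y' \lor L_j')$ case is symmetric (now $x = \T$ in the falsifying assignment of $(\neg x \lor L_j')$ forces $z$-parity $= p'$, and $y = \neg p$ yields total parity $\neg p \oplus p' = \neg(p \oplus p')$). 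Distinctness is immediate: the two halves use opposite $y$-literals and within each half the $L_j$'s (resp.\ $L_j'$'s) were distinct to begin with. The main obstacle is disciplined bookkeeping of literal polarities across the case split on $p \in \Set{\T,\F}$; once these are pinned down, both the resolution pairing and the parity verification go through mechanically.
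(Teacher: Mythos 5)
Your proposal is correct and follows essentially the same route as the paper's proof: resolve each of the two clauses of $\cnf{\XC}$ on $\Var$ against the oppositely-polarized half of $\cnf{\XCp}$, yielding $2^{k-1}+2^{k-1}=2^k$ resolvents, and check by a parity argument on the falsifying assignments that these are exactly the clauses $(\Trigger\land\Triggerp)\Implies C$ for $C\in\cnf{\XC\LinComb\XCp}$. Your explicit bookkeeping of the $y$-literal polarities and the distinctness count is, if anything, slightly more careful than the paper's version of the same argument.
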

\begin{proof}
Take the clause $l_1 \land ... \land l_m \Implies ((\Var \Equal \T) \lor (\AnotherVar \Equal \parity{}))$ in $\CS$ and
resolve it with each clause
$l'_1 \land ... \land l'_n \Implies C$ with
$C = ((\Var \Equal \F) \lor ...) \in \cnf{\XCp}$
in $\CSp$ (there are at most $2^{k-1}$ of them).
Each resulting clause
 $l_1 \land ... \land l_m \land l'_1 \land ... \land l'_n \Implies ((\AnotherVar \Equal \parity{}) \lor ...)$
forces that
either
(i) one of the literals $l_1,...,l_m,l'_1,...,l'_n$ is false,
or
(ii) $\AnotherVar \Equal \neg \parity{}$
implies that the parity of $\ThirdVar_1,...,\ThirdVar_k$ is not
one of the $2^{k-1}$ ones not allowed by 
$(\T \X \ThirdVar_1 \X ... \X \ThirdVar_k \Equal \Parityp)$

Similarly,
take the clause $l_1 \land ... \land l_m \Implies ((\Var \Equal \F) \lor (\AnotherVar \Equal \neg\parity{}))$ in $\CS$ and
resolve it with each clause
$l'_1 \land ... \land l'_n \Implies C$ with
$C = ((\Var \Equal \T) \lor ...) \in \cnf{\XCp}$
in $\CSp$ (there are at most $2^{k-1}$ of them).
Each resulting clause
 $l_1 \land ... \land l_m \land l'_1 \land ... \land l'_n \Implies ((\AnotherVar \Equal \neg\parity{}) \lor ...)$
forces that
either
(i) one of the literals $l_1,...,l_m,l'_1,...,l'_n$ is false,
or
(ii) $\AnotherVar \Equal \parity{}$ implies
that the parity of $\ThirdVar_1,...,\ThirdVar_k$ is not
one of the $2^{k-1}$ ones not allowed by 
$(\F \X \ThirdVar_1 \X ... \X \ThirdVar_k \Equal \Parityp)$.

As $\XC \LinComb \XCp = (\AnotherVar \X \ThirdVar_1 \X ... \X \ThirdVar_k \Equal \parity{} \X \Parityp)$,
the $2^k$ clauses above are the ones in
$\Setdef{(\Trigger \land \Triggerp)\Implies C}{C \in \cnf{\XC \LinComb \XCp}}$.
\qed
\end{proof}
\begin{lemma}\label{Lemma:EqExplanation2}
Let $\Trigger$ and $\Triggerp$ be conjunctions of literals
and take some xor-constraints
$\XC = (\Var \X \AnotherVar \Equal \parity{})$ and
$\XCp = (\Var \X \AnotherVar \X \ThirdVar_1 \X ... \X \ThirdVar_k \Equal \Parityp)$.
Given the sets
$\CS = \Setdef{\Trigger \Implies C}{C \in \cnf{\XC}}$
and
$\CSp = \Setdef{\Triggerp \Implies C}{C \in \cnf{\XCp}}$
of clauses,
the clause set
$\Setdef{(\Trigger \land \Triggerp)\Implies C}{C \in \cnf{\XC \LinComb \XCp}}$
has $2^{k-1}$ clauses and
we can derive them from the ones in $\CS$ and $\CSp$
with $2^k$ resolution steps.
\end{lemma}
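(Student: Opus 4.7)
The plan is to combine a cardinality count with an explicit three-step resolution template per target tail. First, computing $\XC \LinComb \XCp = (\ThirdVar_1 \X \cdots \X \ThirdVar_k \Equal \parity{} \X \Parityp)$ gives an xor-constraint on exactly the $k$ variables $\ThirdVar_1,\dots,\ThirdVar_k$, and the standard sign-parity identity for the CNF of an xor-constraint gives exactly one clause $(\ThirdVar_1^{c_1} \lor \cdots \lor \ThirdVar_k^{c_k})$ per sign pattern $(c_1,\dots,c_k)$ whose XOR matches the value fixed by $\parity{}, \Parityp$, and $k$, yielding the $2^{k-1}$ target clauses.

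Next, write the two $\CS$-clauses (suppressing the $\Trigger$-prefix) as $C^+ = (\Var \lor \AnotherVar^{\parity{}})$ and $C^- = (\neg\Var \lor \AnotherVar^{\parity{}\X\T})$. For each target tail $T$ the $\cnf{\XCp}$ sign-parity identity singles out exactly two $\CSp$-clauses carrying $T$, namely $(\neg\Var \lor \AnotherVar^{\parity{}} \lor T)$ and $(\Var \lor \AnotherVar^{\parity{}\X\T} \lor T)$, since the remaining two sign combinations on $(\Var,\AnotherVar)$ force the opposite tail parity. I derive $(\Trigger \land \Triggerp) \Implies T$ in three steps: (i) resolve $C^+$ with $(\neg\Var \lor \AnotherVar^{\parity{}} \lor T)$ on $\Var$, obtaining $(\Trigger \land \Triggerp) \Implies (\AnotherVar^{\parity{}} \lor T)$ (the two $\AnotherVar^\parity{}$ literals merge by matching sign); (ii) symmetrically resolve $C^-$ with $(\Var \lor \AnotherVar^{\parity{}\X\T} \lor T)$ on $\Var$ to obtain $(\Trigger \land \Triggerp) \Implies (\AnotherVar^{\parity{}\X\T} \lor T)$; (iii) resolve the two intermediates on $\AnotherVar$ to yield the target. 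Correctness reduces to the identity $s_{\Var} \X s_{\AnotherVar} \X \mathrm{parity}(T) = \Parityp \X \T \X (k{+}2)\T$ from $\cnf{\XCp}$, which guarantees that in (i) and (ii) the $\AnotherVar$-literals are genuinely equal, and that the two intermediates carry a common tail $T$.

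The main obstacle is matching the lemma's exact step bound $2^k$. The three-step template naively yields $3 \cdot 2^{k-1}$ resolutions, exceeding $2^k = 2\cdot 2^{k-1}$ by exactly one $\AnotherVar$-elimination per target; moreover, a direct enumeration at $k = 1$, where the unique target is $(\neg\ThirdVar_1)$, shows that no two-step derivation from $\CS \cup \CSp$ suffices, because every one-step resolvent either retains a copy of $\Var$ or $\AnotherVar$ or is tautological. My plan for recovering exactly $2^k$ steps is therefore to search for an amortization across targets --- for example, treating the two intermediate families $\{(\AnotherVar^\parity{} \lor T)\}_T$ and $\{(\AnotherVar^{\parity{}\X\T} \lor T)\}_T$ jointly as two halves of a single CNF-of-xor and folding their $\AnotherVar$-cancellations together, or reorganizing the derivation as an application of Lemma~\ref{Lemma:EqExplanation1} to a re-grouped $\XCp$ in which $\AnotherVar$ is absorbed into the ``extra'' variable list. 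If no such sharing is available the three-step template still correctly produces every one of the $2^{k-1}$ target clauses and uses $3 \cdot 2^{k-1}$ resolution steps in the derivation.
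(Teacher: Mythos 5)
Your construction coincides with the paper's own proof: the same three-phase scheme --- resolve the clause of $S$ containing $(x \Equal \T)$ against the $2^{k-1}$ clauses of $S'$ containing $(x \Equal \F)$ and the matching $y$-sign, do the symmetric resolutions for the other clause of $S$, then resolve the paired intermediates on $y$ --- with the same sign-parity bookkeeping showing both intermediate families carry exactly the $2^{k-1}$ target tails. As for the step count that worried you: the paper's proof itself performs $2^{k-1}+2^{k-1}+2^{k-1}=3\cdot 2^{k-1}$ resolutions, so the ``$2^k$'' in the lemma statement is a slip in the paper rather than a gap in your argument, and your $k=1$ analysis (no two-step derivation of the single unit target exists, since every one-step resolvent retains $x$ or $y$ or is tautological) correctly shows the stated constant is unattainable, so the amortization you were searching for does not exist. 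The looser count $3\cdot 2^{k-1}$ is harmless downstream: Theorem~\ref{Thm:ResExp} only needs at most $2^{m-1}$ new clauses per derivation vertex, and $3\cdot 2^{k-1}\le 2^{k+1}\le 2^{m-1}$ because $\XCp$ has $k+2\le m$ variables.
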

\begin{proof}
Take the clause $l_1 \land ... \land l_m \Implies ((\Var \Equal \T) \lor (\AnotherVar \Equal \parity{}))$ in $\CS$ and
resolve it with each clause
$l'_1 \land ... \land l'_n \Implies C$ with
$C = ((\Var \Equal \F) \lor (\AnotherVar \Equal \parity{}) \lor ...) \in \cnf{\XCp}$
in $\CSp$ (there are at most $2^{k-1}$ of them).
Each resulting clause
$l_1 \land ... \land l_m \land l'_1 \land ... \land l'_n \Implies ((\AnotherVar \Equal \parity{}) \lor ...)$
forces that
either
(i) one of the literals $l_1,...,l_m,l'_1,...,l'_n$ is false,
or
(ii) $\AnotherVar \Equal \neg \parity{}$
implies that the parity of $\ThirdVar_1,...,\ThirdVar_k$ is not
one of the $2^{k-1}$ ones not allowed by 
$(\T \X \parity{} \X \T \X \ThirdVar_1 \X ... \X \ThirdVar_k \Equal \Parityp \X)$,
i.e.,
$(\ThirdVar_1 \X ... \X \ThirdVar_k \Equal \Parityp \X \parity{})$

Similarly,
take the clause $l_1 \land ... \land l_m \Implies ((\Var \Equal \F) \lor (\AnotherVar \Equal \neg\parity{}))$ in $\CS$ and
resolve it with each clause
$l'_1 \land ... \land l'_n \Implies C$ with
$C = ((\Var \Equal \T) \lor (\AnotherVar \Equal \neg\parity{}) \lor ...) \in \cnf{\XCp}$
in $\CSp$ (there are at most $2^{k-1}$ of them).
Each resulting clause
$l_1 \land ... \land l_m \land l'_1 \land ... \land l'_n \Implies ((\AnotherVar \Equal \neg\parity{}) \lor ...)$
forces that
either
(i) one of the literals $l_1,...,l_m,l'_1,...,l'_n$ is false,
or
(ii) $\AnotherVar \Equal \parity{}$ implies
that the parity of $\ThirdVar_1,...,\ThirdVar_k$ is not
one of the $2^{k-1}$ ones not allowed by 
$(\F \X \parity{} \X \ThirdVar_1 \X ... \X \ThirdVar_k \Equal \Parityp)$,
i.e.,
$(\ThirdVar_1 \X ... \X \ThirdVar_k \Equal \Parityp \X \parity{})$,

Finally, resolve each obtained clause
$l_1 \land ... \land l_m \land l'_1 \land ... \land l'_n \Implies ((\AnotherVar \Equal \parity{}) \lor \tilde{C})$,
$\tilde{C}$ being disjunction of literals,
with the corresponding clause
$l_1 \land ... \land l_m \land l'_1 \land ... \land l'_n \Implies ((\AnotherVar \Equal \neg\parity{}) \lor \tilde{C})$.
The resulting $2^{k-1}$ clauses together
force that
either
(i) one of the literals $l_1,...,l_m,l'_1,...,l'_n$ is false,
or
(ii) $(\ThirdVar_1 \X ... \X \ThirdVar_k \Equal \Parityp \X \parity{})$ holds.

As $\XC \LinComb \XCp = (\ThirdVar_1 \X ... \X \ThirdVar_k \Equal \parity{} \X \Parityp)$,
the $2^{k-1}$ clauses above are the ones in
$\Setdef{(\Trigger \land \Triggerp)\Implies C}{C \in \cnf{\XC \LinComb \XCp}}$.
\qed
\end{proof}

\newcommand{\DerCls}[1]{\operatorname{der}(#1)}

\begin{retheorem}{\ref{Thm:ResExp}}
  Assume a \SUBST-derivation $\igraph = \Tuple{\vertices,\edges,\Labfunc}$ on
  $\xorpart \land \AL_1 \land \dots \land \AL_k$
  and
  a cnf-compatible cut $\Cut = (\CutA, \CutB)$.
  %
  There is a resolution derivation $\ResDer$ on
  $\bigwedge_{\XC \in \xorpart}\cnf{\XC}$
  such that
  (i)
  for each vertex $\Vertex \in \CutB$ with $\Lab{\Vertex} \neq \T$,
  $\ResDer$ includes all the clauses in $\Setdef{\CExpl{v,\Cut} \Implies C}{C \in \cnf{\Lab{v}}}$,
  and
  (ii)
  $\ResDer$ has at most $\Card{\vertices}2^{m-1}$ clauses,
  where $m$ is the number of variables in the largest xor-constraint in $\xorpart$.
\end{retheorem}
\begin{proof}
  Iteratively on the structure of $\igraph$,
  we show how to derive the clauses
  $\DerCls{\Vertex} = \Setdef{\CExpl{v,\Cut} \Implies C}{C \in \cnf{\Lab{v}}}$
  for each $\Vertex \in \CutB$ with $\Lab{\Vertex} \neq \T$.
  First, we case split by the rule type and have the following two cases.
  
  \paragraph{Case I: $\Lab{\Vertex}$ is obtained from $\Lab{\Vertexp}$ and $\Lab{\Vertexpp}$ by using $\unitruleP$ or $\unitruleN$.}
  Thus $\Lab{\Vertexp} = (\Var \Equal \parity{})$ for some variable $\Var$ and parity $\parity{}$.
  We have the following cases depending on the role of $\Vertexp$.
  \begin{enumerate}
  %
  %
  \item
    $\Vertexp$ is an input vertex with $\Lab{\Vertexp} \in \xorclauses$.
    
    Now $\tmpf{\Vertexp}{\Cut} = \T$
    and
    the clauses in $\CSp = \Setdef{\tmpf{\Vertexpp}{\Cut} \Implies C}{C \in \cnf{\Lab{\Vertexp}}} = \Set{(\Var \Equal \parity{})}$
    occur in $\bigwedge_{\XC \in \xorpart}\cnf{\XC}$.
        
    We then case split by the role of $\Vertexpp$.
    \begin{enumerate}
    \item
      $\Vertexpp$ is an input vertex with $\Lab{\Vertexp} \in \xorclauses$.
      
      Now $\tmpf{\Vertexpp}{\Cut} = \T$
      and
      $\CSpp=\Setdef{\tmpf{\Vertexpp}{\Cut} \Implies C}{C \in \cnf{\Lab{\Vertexpp}}}$, equalling to $\Setdef{C}{C \in \cnf{\Lab{\Vertexpp}}}$,
      consists of clauses already in $\bigwedge_{\XC \in \xorpart}\cnf{\XC}$.
      By Lemma~\ref{Lemma:UnitExplanation},
      the clauses $\DerCls{\Vertex} = \Setdef{\tmpf{\Vertexp}{\Cut} \land \tmpf{\Vertexpp}{\Cut} \Implies C}{C \in \cnf{\Lab{\Vertex}}}$
      can thus be derived from the ones in
      $\CSp$ and $\CSpp$.
      
    \item
      $\Vertexpp$ is an input vertex with  $\Lab{\Vertexpp} \in \Set{\AL_1,...,\AL_k}$.
      
      If $\Lab{\Vertexpp} = (\Var \Equal \parity{})$,
      then $\Lab{\Vertex} = \T$ and there is nothing to prove.
      
      If $\Lab{\Vertexpp} = (\Var \Equal \neg\parity{})$,
      then
      $\tmpf{\Vertexpp}{\Cut} = (\Var \Equal \neg\parity{})$,
      $\Lab{\Vertex} = \F$,
      and
      $\DerCls{\Vertex}=
       \Set{\T \land (\Var \Equal \neg\parity{}) \Implies \F} =
       \Set{(\Var \Equal \parity{})}$
      occurring in $\CSp$. 

    \item
      $\Vertexpp$ is a non-input vertex in $\CutA$.

      As the cut is cnf-compatible,
      $\Lab{\Vertexpp}$ is either $(\Var \Equal \parity{})$ or $(\Var \Equal \neg\parity{})$.

      If $\Lab{\Vertexpp} = (\Var \Equal \parity{})$,
      then $\Lab{\Vertex} = \T$ and there is nothing to prove.

      If $\Lab{\Vertexpp} = (\Var \Equal \neg\parity{})$,
      then
      $\tmpf{\Vertexpp}{\Cut} = (\Var \Equal \neg\parity{})$,
      $\Lab{\Vertex} = \F$,
      and
      $\DerCls{\Vertex}=
       \Set{\T \land (\Var \Equal \neg\parity{}) \Implies \F} =
       \Set{(\Var \Equal \parity{})}$
      occurring in $\CSp$. 

    \item
      $\Vertexpp$ is a non-input vertex in $\CutB$.

      Now $\tmpf{\Vertexpp}{\Cut}$
      is a conjunction of literals as the cut is cnf-compatible.
      We already have derived the clauses in
      $\CSpp = \Setdef{\tmpf{\Vertexpp}{\Cut} \Implies C}{C \in \cnf{\Lab{\Vertexpp}}}$.
      By Lemma~\ref{Lemma:UnitExplanation},
      the clauses $\DerCls{\Vertex} = \Setdef{\tmpf{\Vertexp}{\Cut} \land \tmpf{\Vertexpp}{\Cut} \Implies C}{C \in \cnf{\Lab{\Vertex}}}$
      can thus be derived from the ones in
      $\CSp$ and $\CSpp$.
    \end{enumerate}

  %
  %
  \item
    $\Vertexp$ is an input vertex with $\Lab{\Vertexp} \in \Set{\AL_1,...,\AL_k}$.

    Now $\tmpf{\Vertexp}{\Cut} = \Lab{\Vertexp} = (\Var \Equal \parity{})$.

    We next case split by the role of $\Vertexpp$.
    \begin{enumerate}
    \item
      $\Vertexpp$ is an input vertex with $\Lab{\Vertexp} \in \xorclauses$.
      
      Now $\tmpf{\Vertexpp}{\Cut} = \T$
      and
      $\DerCls{\Vertex} =
       \Setdef{\tmpf{\Vertexp}{\Cut}\land\tmpf{\Vertexpp}{\Cut}\Implies C}{C \in \cnf{\Lab{\Vertex}}} =
       \Setdef{(\Var \Equal \parity{}) \Implies C}{C \in \cnf{\simplification{\Lab{\Vertexpp}}{\Var}{\parity{}}}}$.
      The clauses in $\DerCls{\Vertex}$ are thus a subset of those occurring
      in $\cnf{\Lab{\Vertexpp}}$ and
      thus also in $\bigwedge_{\XC \in \xorpart}\cnf{\XC}$.

    \item
      $\Vertexpp$ is an input vertex with  $\Lab{\Vertexpp} \in \Set{\AL_1,...,\AL_k}$.

      Now $\Lab{\Vertexpp}$ is either $(\Var \Equal \parity{})$ or
      $(\Var \Equal \neg\parity{})$.

      If $\Lab{\Vertexpp} = (\Var \Equal \parity{})$,
      then $\Lab{\Vertex} = \T$ and there is nothing to prove.

      If $\Lab{\Vertexpp} = (\Var \Equal \neg\parity{})$,
      then
      $\tmpf{\Vertexpp}{\Cut} = (\Var \Equal \neg\parity{})$,
      $\Lab{\Vertex} = \F$,
      and
      $\DerCls{\Vertex}=
       \Set{(\Var \Equal \parity{}) \land (\Var \Equal \neg\parity{}) \Implies \F} =
       \emptyset$.
      This is fine because the clausal explanation is
      also the tautology stating that $\Var$ cannot be true and false at the same time.

    \item
      $\Vertexpp$ is a non-input vertex in $\CutA$.

      Now $\Lab{\Vertexpp}$ is either $(\Var \Equal \parity{})$ or
      $(\Var \Equal \neg\parity{})$ because the cut is cnf-compatible.

      The rest is thus similar to the previous sub-sub-case.

    \item
      $\Vertexpp$ is a non-input vertex in $\CutB$.

      Now $\tmpf{\Vertexpp}{\Cut}$
      is a conjunction of literals as the cut is cnf-compatible
      and
      we have already derived the clauses in
      $\CSpp = \Setdef{\tmpf{\Vertexpp}{\Cut} \Implies C}{C \in \cnf{\Lab{\Vertexpp}}}$.
      The clause set
      $\DerCls{\Vertex} =
       \Setdef{\tmpf{\Vertexp}{\Cut} \land \tmpf{\Vertexpp}{\Cut} \Implies C}{C \in \cnf{\Lab{\Vertex}}}$
      equals to
      $\Setdef{\tmpf{\Vertexpp}{\Cut} \Implies (\Var\Equal\neg\parity{}) \lor C}{C \in \cnf{\simplification{\Lab{\Vertexpp}}{\Var}{\parity{}}}}$.
      The clauses in $\DerCls{\Vertex}$ are thus a subset of those occurring
      in $\CSpp$.
    \end{enumerate}

  %
  %
  \item
    $\Vertexp$ is a non-input vertex in $\CutA$.

    Now $\tmpf{\Vertexp}{\Cut} = \Lab{\Vertexp} = (\Var \Equal \parity{})$
    as the cut is cnf-compatible.

    The rest of this sub-case is similar to the previous sub-case.

  %
  %
  \item
    $\Vertexp$ is a non-input vertex in $\CutB$.

    Now $\tmpf{\Vertexp}{\Cut}$
    is a conjunction of literals as the cut is cnf-compatible
    and
    we have already derived
    $\CSp = \Setdef{\tmpf{\Vertexp}{\Cut} \Implies C}{C \in \cnf{\Lab{\Vertexp}}} = \Set{(\tmpf{\Vertexp}{\Cut} \Implies (\Var\Equal\parity{}))}$.


    We then case split by the role of $\Vertexpp$.
    \begin{enumerate}
    \item
      $\Vertexpp$ is an input vertex with $\Lab{\Vertexp} \in \xorclauses$.
      
      Now $\tmpf{\Vertexpp}{\Cut} = \T$
      and
      $\CSpp=\Setdef{\tmpf{\Vertexpp}{\Cut} \Implies C}{C \in \cnf{\Lab{\Vertexpp}}}$, eqaulling to $\Setdef{C}{C \in \cnf{\Lab{\Vertexpp}}}$,
      consists of clauses already in $\bigwedge_{\XC \in \xorpart}\cnf{\XC}$.
      By Lemma~\ref{Lemma:UnitExplanation},
      the clauses $\DerCls{\Vertex} = \Setdef{\tmpf{\Vertexp}{\Cut} \land \tmpf{\Vertexpp}{\Cut} \Implies C}{C \in \cnf{\Lab{\Vertex}}}$
      can thus be derived from the ones in
      $\CSp$ and $\CSpp$.
      
    \item
      $\Vertexpp$ is an input vertex with  $\Lab{\Vertexpp} \in \Set{\AL_1,...,\AL_k}$.
      
      If $\Lab{\Vertexpp} = (\Var \Equal \parity{})$,
      then $\Lab{\Vertex} = \T$ and there is nothing to prove.
      
      If $\Lab{\Vertexpp} = (\Var \Equal \neg\parity{})$,
      then
      $\tmpf{\Vertexpp}{\Cut} = (\Var \Equal \neg\parity{})$,
      $\Lab{\Vertex} = \F$,
      and
      $\DerCls{\Vertex}=
       \Set{(\tmpf{\Vertexp}{\Cut} \land (\Var \Equal \neg\parity{}) \Implies \F)} =
       \Set{(\tmpf{\Vertexp}{\Cut} \Implies (\Var \Equal \parity{}))}$
      occurring in $\CSp$. 

    \item
      $\Vertexpp$ is a non-input vertex in $\CutA$.

      As the cut is cnf-compatible,
      $\Lab{\Vertexpp}$ is either $(\Var \Equal \parity{})$ or $(\Var \Equal \neg\parity{})$.

      The rest is thus similar to the previous sub-sub-case.

    \item
      $\Vertexpp$ is a non-input vertex in $\CutB$.

      Now $\tmpf{\Vertexpp}{\Cut}$
      is a conjunction of literals as the cut is cnf-compatible.
      We already have derived the clauses in
      $\CSpp = \Setdef{\tmpf{\Vertexpp}{\Cut} \Implies C}{C \in \cnf{\Lab{\Vertexpp}}}$.
      By Lemma~\ref{Lemma:UnitExplanation},
      the clauses $\DerCls{\Vertex} = \Setdef{\tmpf{\Vertexp}{\Cut} \land \tmpf{\Vertexpp}{\Cut} \Implies C}{C \in \cnf{\Lab{\Vertex}}}$
      can thus be derived from the ones in
      $\CSp$ and $\CSpp$.
    \end{enumerate}

  \end{enumerate}

\paragraph{Case II: $\Lab{\Vertex}$ is obtained from $\Lab{\Vertexp}$ and $\Lab{\Vertexpp}$ by using $\eqvruleP$ or $\eqvruleN$.}
Thus $\Lab{\Vertexp} = (\Var \X \AnotherVar \Equal \parity{})$
for some variables $\Var,\AnotherVar$ and parity $\parity{}$.
  We have the following cases depending on the role of $\Vertexp$.
  \begin{enumerate}
  %
  %
  \item
    $\Vertexp$ is an input vertex with $\Lab{\Vertexp} \in \xorclauses$.
    
    Now $\tmpf{\Vertexp}{\Cut} = \T$
    and
    the clauses in $\CSp = \Setdef{\tmpf{\Vertexpp}{\Cut} \Implies C}{C \in \cnf{\Lab{\Vertexp}}}$
    occur in $\bigwedge_{\XC \in \xorpart}\cnf{\XC}$.
    
    We then case split by the role of $\Vertexpp$.
    \begin{enumerate}
    \item
      $\Vertexpp$ is an input vertex with $\Lab{\Vertexp} \in \xorclauses$.
      
      Now $\tmpf{\Vertexpp}{\Cut} = \T$
      and
      $\CSpp=\Setdef{\tmpf{\Vertexpp}{\Cut} \Implies C}{C \in \cnf{\Lab{\Vertexpp}}}$
      consists of clauses already occurring in $\bigwedge_{\XC \in \xorpart}\cnf{\XC}$.
      By Lemmas \ref{Lemma:EqExplanation1} and \ref{Lemma:EqExplanation2},
      we can thus derive the clauses in
      $\DerCls{\Vertex} =
       \Setdef{\tmpf{\Vertexp}{\Cut} \land \tmpf{\Vertexpp}{\Cut} \Implies C}{C \in \cnf{\Lab{\Vertex}}}$.

    \item
      $\Vertexpp$ is an input vertex with  $\Lab{\Vertexpp} \in \Set{\AL_1,...,\AL_k}$.

      Now $\tmpf{\Vertexpp}{\Cut} = \Lab{\Vertexpp}$.
      As $\Var$ must occur in $\Lab{\Vertexpp}$,
      $\Lab{\Vertexpp} = (\Var \Equal \Paritypp)$ for a $\Paritypp \in \Set{\F,\T}$.

      Thus
      $\DerCls{\Vertex} =
       \Setdef{\tmpf{\Vertexp}{\Cut} \land (\Var \Equal \Paritypp) \Implies C}{C \in \cnf{\Lab{\Vertexp}\LinComb(\Var\Equal\Paritypp)}}$.
      This equals to 
      $\Setdef{\tmpf{\Vertexp}{\Cut} \Implies (\Var \Equal \neg\Paritypp) \lor C}{C \in \cnf{\Lab{\Vertexp}\LinComb(\Var\Equal\Paritypp)}}$
      and
      thus all the clauses in it are already in
      $\CSp = \Setdef{\tmpf{\Vertexp}{\Cut} \Implies C}{C \in \cnf{\Lab{\Vertexp}}}$.

    \item
      $\Vertexpp$ is a non-input vertex in $\CutA$.

      Now $\tmpf{\Vertexpp}{\Cut} = \Lab{\Vertexpp}$.
      As $\Var$ must occur in $\Lab{\Vertexpp}$ and the cut is cnf-compatible,
      $\Lab{\Vertexpp} = (\Var \Equal \Paritypp)$ for a $\Paritypp \in \Set{\F,\T}$.

      The rest of the case is similar to the previous one.

    \item
      $\Vertexpp$ is a non-input vertex in $\CutB$.

      Now $\tmpf{\Vertexpp}{\Cut}$
      is a conjunction of literals as the cut is cnf-compatible
      and
      we have already derived the clauses in
      $\CSpp = \Setdef{\tmpf{\Vertexpp}{\Cut} \Implies C}{C \in \cnf{\Lab{\Vertexp}}}$.
      The clauses in
      $\DerCls{\Vertex} =
       \Setdef{\tmpf{\Vertexp}{\Cut} \land \tmpf{\Vertexpp}{\Cut} \Implies C}{C \in \cnf{\Lab{\Vertex}}}$
      can thus be derived by Lemmas \ref{Lemma:EqExplanation1} and \ref{Lemma:EqExplanation2}.

    \end{enumerate}

  %
  %
  \item
    $\Vertexp$ is an input vertex with $\Lab{\Vertexp} \in \Set{\AL_1,...,\AL_k}$.

    This case is not possible because $\Lab{\Vertex}$ should be
    of form $(\Var \X \AnotherVar \Equal \parity{})$.

  %
  %
  \item
    $\Vertexp$ is a non-input vertex in $\CutA$.

    As the cut is cnf-compatible,
    $\Lab{\Vertexp}$ must be of form $(\Var \Equal \parity{})$,
    not of $(\Var \X \AnotherVar \Equal \parity{})$ as required.
    Therefore, this case is impossible.

  %
  %
  \item
    $\Vertexp$ is a non-input vertex in $\CutB$.
    
    Now $\tmpf{\Vertexp}{\Cut}$
    is a conjunction of literals as the cut is cnf-compatible
    and
    we have already derived the clauses in
    $\CSp = \Setdef{\tmpf{\Vertexp}{\Cut} \Implies C}{C \in \cnf{\Lab{\Vertexp}}}$.

    The rest of the sub-case is similar to the sub-case
    ``$\Vertexp$ is an input vertex with $\Lab{\Vertexp} \in \xorclauses$''
    proven above.

  \end{enumerate}%
  \qed
\end{proof}

%
%
\subsection{Proof of Theorem~\ref{Thm:PexpSim}}

The constructs in the proof are illustrated in Figures \ref{Fig:pexp-ex-unsat} and \ref{Fig:pexp-ex-sat}.

\begin{figure}[ht]
\begin{center}
\begin{tabular}{c@{\qquad}c}
\includegraphics[scale=0.7]{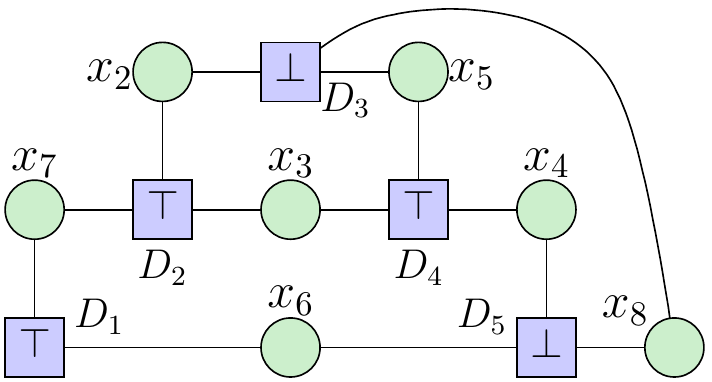}
&
\includegraphics[scale=0.7]{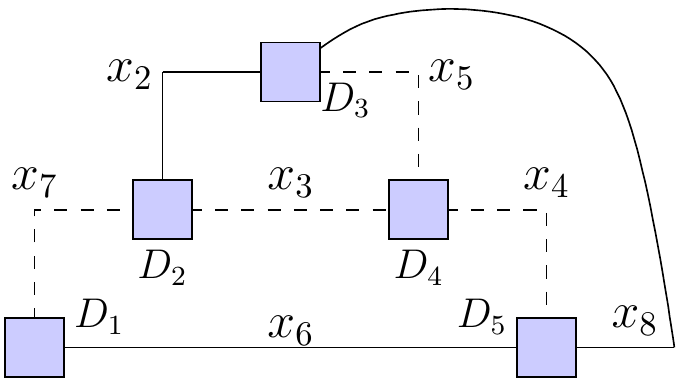}
\\
(a) constraint graph
&
(b) dual graph and spanning tree (dashed edges)
\\
\\
\multicolumn{2}{c}{\includegraphics[width=.98\textwidth]{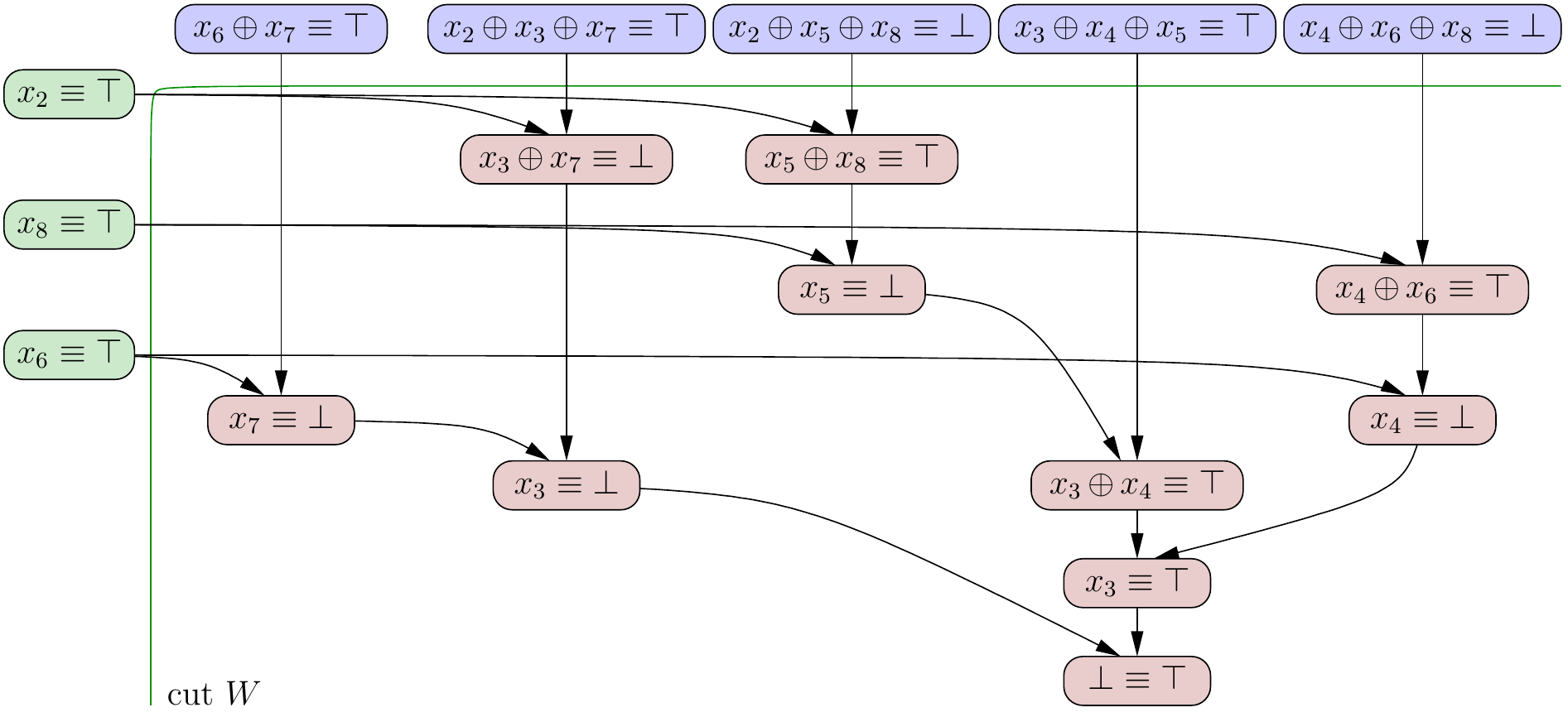}}
\\
\multicolumn{2}{c}{(c) a derivation and a cut with parity explanation $\F \Equal \F$ for the vertex $\F \Equal \T$}
\end{tabular}
\end{center}
\caption{The constraint (a) and dual (b) graphs of an unsatisfiable xor-constraint conjunction and an \UP-derivation giving the parity explanation $\F \Equal \F$ and clausal explanation $\T \Implies \F$.}
\label{Fig:pexp-ex-unsat}
\end{figure}

\begin{figure}[ht]
\begin{center}
\includegraphics[width=.98\textwidth]{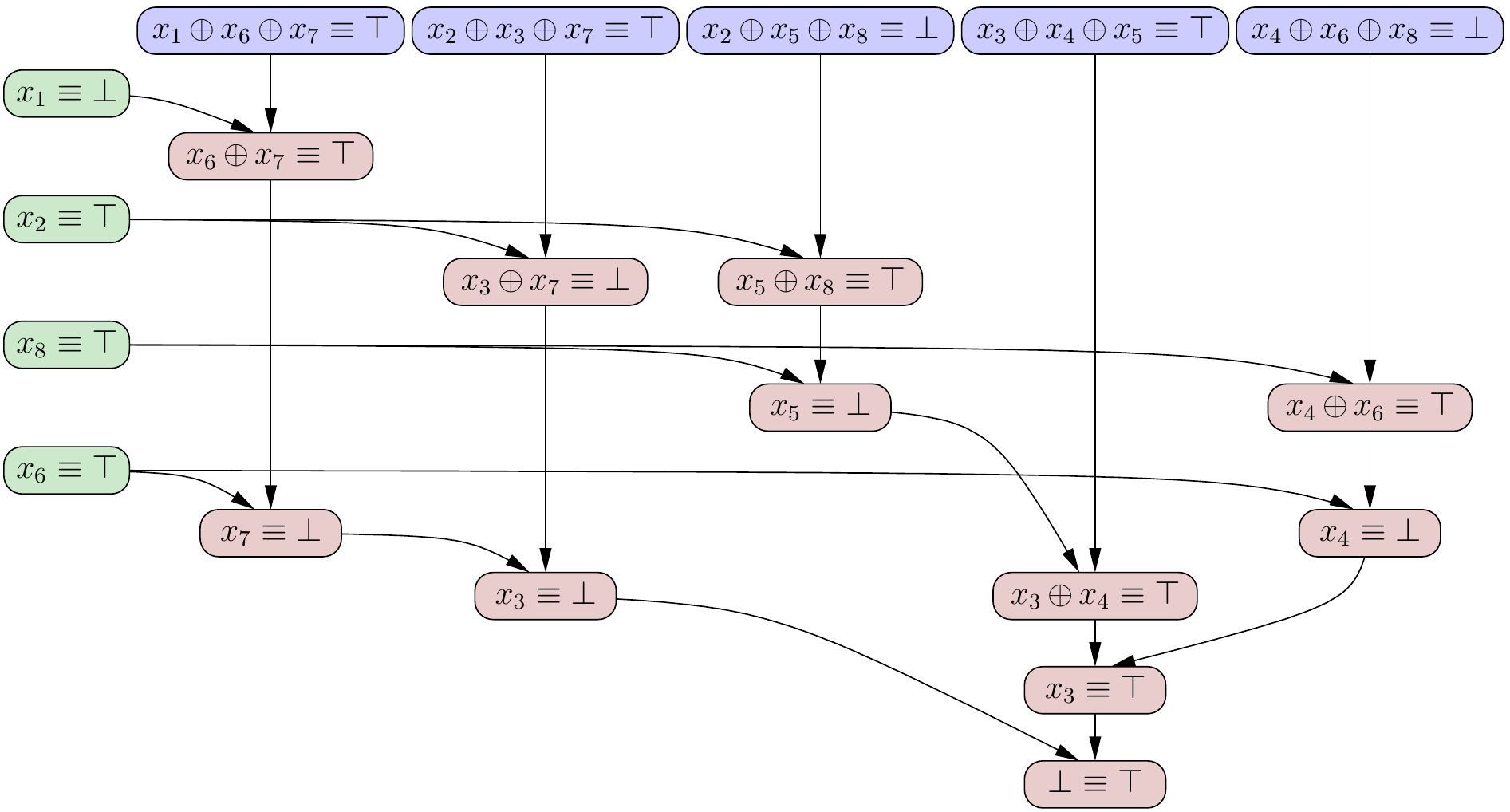}
\end{center}
\caption{An \UP-derivation for the instance in Ex.~\ref{Ex:kGE} and Fig.~\ref{Fig:KgeExample}, giving the parity explanation $x_1 \Equal \F$ and the clausal explanation $(x_1 \Equal \F) \Implies \F$ i.e.\ $(x_1 \Equal \T)$.}
\label{Fig:pexp-ex-sat}
\end{figure}

\begin{retheorem}{\ref{Thm:PexpSim}}
  Let $\xorclauses$ be a conjunction of xor-constraints
  such that each variable occurs in at most three xor-constraints.
  
  If $\xorclauses$ is unsatisfiable,
  then there is a
  \UP{}-derivation on $\xorclauses \land y_1 \land ... \land y_m$
  with some $y_1,...,y_m \in \VarsOf{\xorclauses}$,
  a vertex $\Vertex$ with $\Lab{\Vertex} = (\F \Equal \T)$ in it,
  and
  a cut $\Cut$ for $\Vertex$ such that
  $\PExpl{\Vertex, \Cut} = (\F \Equal \F)$
  and thus
  $\PExpl{\Vertex, \Cut} \LinComb \Lab{\Vertex} = (\F \Equal \T)$.

  If $\xorclauses$ is satisfiable and
  $\xorclauses \Models (\Var_1 \X ... \X \Var_k \Equal \parity{})$,
  then there is a
  \UP{}-derivation on $\xorclauses \land (\Var_1\Equal\parity{1}) \land ... \land (\Var_k\Equal\parity{k}) \land y_1 \land ... \land y_m$
  with some $y_1,...,y_m \in \VarsOf{\xorclauses}\setminus\Set{\Var_1,...,\Var_k}$,
  a vertex $\Vertex$ with $\Lab{\Vertex} = (\F \Equal \T)$ in it,
  and
  a cut $\Cut$ for $\Vertex$ such that
  $\PExpl{\Vertex, \Cut} \LinComb \Lab{\Vertex} = (\Var'_1 \X ... \X \Var'_l \Equal \Parityp)$ for some $\Set{\Var'_1,...,\Var'_l} \subseteq \Set{\Var_1,...,\Var_k}$ and $\Parityp \in \Set{\F,\T}$
  such that
  $\xorclauses \Models (\Var'_1 \X ... \X \Var'_l \Equal \Parityp)$.
\end{retheorem}
\begin{proof}
  Let $\xorclauses = \XC_1 \land ... \land \XC_n$ be a conjunction
  of xor-constraints such that each variable occurs in at most three xor-constraints.

  We construct the required \UP-derivations by starting from the one consisting of $n$ input vertices (one for each xor-constraint in $\xorclauses$) and then transforming each ``current vertex for the xor-constraint $\XC_i$'' into a new one by applying unit propagation to it.

  \noindent\textbf{Case I: $\xorclauses$ is unsatisfiable.}
  First,
  as long as the current xor-constraint vertices contain unary xor-constraints whose variable is occurring in other current xor-constraint vertices,
  apply the unit propagation rule to eliminate the other occurrences.
  If the false vertex $\F \Equal \T$ is derived,
  then the parity explanation for it will be $\F \Equal \F$
  under the furthest cut
  (i.e., the cut $\Tuple{\CutA,\CutB}$ with the smallest ``reason side'' $\CutA$), as required.

  Otherwise,
  the set $S'$ of current xor-constraint vertices with binary or longer xor-constraint labels
  induces an unsatisfiable conjunction of xor-constraints.
  By Lemma~\ref{Lemma:LinearCombs}
  there is a subset $S''$ of $S'$ such that
  $\BigLinComb_{\Vertex \in S''}\Lab{\Vertex} = (\F\Equal\T)$.
  We can, and will, assume that $S''$ is minimal,
  i.e.\ that there is no subset of $S''$ whose labels' linear combination is $(\F\Equal\T)$.
  Each variable occurring in the labels of $S''$ occurs there exactly two times:
  it occurs an even number of times because the linear combination of the labels is empty and it cannot occur more than three times due to the assumption we have made in the theorem.

  We next consider the ``dual graph'' for $S''$,
  meaning the edge-labeled multi-graph
  $\Tuple{\Setdef{\Lab{\Vertex}}{\Vertex \in S''},\Setdef{\Tuple{\Set{\XC,\XCp},\Var}}{\Var\in{\VarsOf{\XC}\cap\VarsOf{\XCp}}}}$
  and
  take any spanning tree of it.
  As each variable occurs at exactly two times in the labels of $S''$,
  it occurs in exactly one edge in the dual graph.

  To complete the \UP-derivation, we proceed in two phases.
  In phase one, we make an xor-assumption $(\Var\Equal\T)$ for each variable
  occurring in an edge of the dual graph \emph{not} belonging to the spanning tree.
  We apply unit propagation so that the variable is removed from the two xor-constraint labels it occurs in the current version of xor-constraints of $S''$.
  Thus the out-degree of the xor-assumption vertex is thus two.
  In phase two, we unit propagate the remaining variables in $S''$, starting from the leafs of the spanning tree, and obtain conflict on some variable occurring in an edge of the spanning tree.
  Take the furthest cut of the constructed \UP-derivation for the conflict vertex.
  As all the xor-constraints in $S''$ were required to obtain the conflict, all the occurrences of the variables in the edges not in the spanning tree (i.e.\ xor-assumptions made) were required, too.
  The out-degrees of the other vertices in the last two phases are one.
  Thus each xor-assumption occurs twice in when computing the parity explanation and these occurrences cancel each other out,
  resulting in the empty parity explanation as required.

  \noindent\textbf{Case II: $\xorclauses$ is satisfiable and $\xorclauses \Models (\Var_1 \X ... \X \Var_k \Equal \parity{})$.}
  First, choose some values $\parity{1},....,\parity{k}$ for the variables
  $\Var_1,...,\Var_k$ so that
  $\parity{1}\XX...\XX\parity{k} \neq \parity{}$.
  Now clearly $\xorclauses \land (\Var_1\Equal\parity{1}) \land ... \land (\Var_k \Equal\parity{k})$ is unsatisfiable.

  Next,
  make the xor-assumption $(\Var_i\Equal\parity{i})$ for each $\Var_i$
  and
  apply unit propagation as long as possible.
  If the falsity vertex $\F \Equal \T$ is derived,
  then the parity explanation of $\F \Equal \T$ under the furthest cut
  will be $(\Var'_1 \XX ... \XX \Var'_l \Equal \Paritypp)$
  for some $\Set{\Var'_1,...,\Var'_l} \subseteq \Set{\Var_1,...,\Var_k}$ and $\Paritypp \in \Set{\F,\T}$.
  As $\xorclauses \Models (\Var'_1 \X ... \X \Var'_l \Equal \Paritypp)+(\F \Equal\T)$,
  we have the desired result.

  Otherwise, 
  the set $S'$ of current xor-constraint vertices with binary or longer xor-constraint labels induces an unsatisfiable conjunction of xor-constraints.
  We can thus proceed as in Case I after the initial unit propagation.
  The conflict vertex obtained eventually may 
  depend on the xor-assumptions $(\Var_i\Equal\parity{i})$ we made above
  and these may also occcur in the parity explanation under the furthest cut.
  Thus the parity explanation 
  will be $(\Var'_1 \XX ... \XX \Var'_l \Equal \Paritypp)$
  for some $\Set{\Var'_1,...,\Var'_l} \subseteq \Set{\Var_1,...,\Var_k}$ and $\Paritypp \in \Set{\F,\T}$.
  As $\xorclauses \Models (\Var'_1 \X ... \X \Var'_l \Equal \Paritypp)+(\F \Equal\T)$,
  we have the desired result.

\qed
\end{proof}

%
%

\subsection{Proof of Theorem~\ref{Thm:GESimulation}}

If an xor-constraint conjunction $\psi$ has a {\it \UP{}-propagation table} for
the set of variables $Y \subseteq \VarsOf{\psi}$, we denote this by
$\HasPropTable{Y}{\psi}$. 
%

\begin{lemma}
\label{Lem:PTableProp}
Let $ \phi $ be a satisfiable conjunction of xor-constraints such that $
\HasPropTable{Y}{\phi} $ for some $ Y \subseteq
\VarsOf{\phi} $, and $ a, a_1, \dots, a_n
\in \VarsOf{\phi} $ ``alias'' variables for the subsets $Y', Y_1, \dots,
    Y_n \subseteq Y$, respectively, and $ Y' = Y_1 \oplus \dots \oplus Y_n $.
It holds that $ \phi \wedge (a_1 \equiv \parity{1}) \wedge \dots \wedge
(a_n \equiv \parity{n}) \UPderiv (a \equiv \parity{1} \oplus \dots \oplus
    \parity{n})$.
\end{lemma}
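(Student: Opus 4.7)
The plan is to prove the lemma by induction on $n \geq 1$, with the inductive step pairing two alias-value assumptions into an assumption on the alias of their symmetric difference.

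For $n = 1$ the statement is immediate: by the construction of \PropTableName{} each non-empty subset of $Y$ is equipped with a single alias, so $Y' = Y_1$ forces $a = a_1$, and the assumption $(a_1 \equiv \pi_1)$ is itself the conclusion. For $n = 2$ we use PT2 directly: since $a$ is the alias of $Y' = Y_1 \oplus Y_2$, the subset $Y'$ is non-empty, so PT2 applied to $Y_1$ and $Y_2$ places the linear-combination constraint $(a_1 \oplus a_2 \oplus a \equiv \bot)$ in $\phi$ (after identifying the PT2 aliases with the unique aliases $a_1, a_2, a$). Two applications of $\unitruleP$/$\unitruleN$ substitute the assumed values $\pi_1$ and $\pi_2$ for $a_1$ and $a_2$ in this constraint and yield $(a \equiv \pi_1 \oplus \pi_2)$.

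For the inductive step $n > 2$, I pick a pair of indices $i, j$ with $Y_i \oplus Y_j \neq \emptyset$, let $a^{*}$ be the alias of $Y_i \oplus Y_j$ (which exists by PT1), and invoke the $n = 2$ case to UP-derive $(a^{*} \equiv \pi_i \oplus \pi_j)$ from $(a_i \equiv \pi_i)$ and $(a_j \equiv \pi_j)$. Replacing the pair $(a_i, a_j)$ by $a^{*}$ reduces the problem to an $(n-1)$-tuple whose symmetric-difference sum of subsets remains $Y'$, so the inductive hypothesis produces the desired $(a \equiv \pi_1 \oplus \dots \oplus \pi_n)$ by concatenating the two UP-derivations.

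The main obstacle is the degenerate subcase where no such pair $(i,j)$ exists, i.e.\ all $Y_i$ coincide and hence so do the corresponding aliases $a_i$. In this situation the alias $a$ of $Y'$ equals this common alias whenever $n$ is odd (and $Y'$ is non-empty only then), and satisfiability of $\phi$ together with the assumed $(a_i \equiv \pi_i)$'s forces the $\pi_i$ to be mutually compatible via the unit-propagation rules. A short case analysis on whether $\bigoplus_i \pi_i$ agrees with any single $\pi_i$ shows that the target $(a \equiv \pi_1 \oplus \dots \oplus \pi_n)$ is directly witnessed by one of the assumptions, which closes the induction.
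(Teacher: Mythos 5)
Your proof is correct and takes essentially the same route as the paper's: induction on $n$, with the base case trivial and the inductive step using PT1 to obtain an alias for an intermediate symmetric difference and PT2's three-variable linear-combination constraint $(a_1 \oplus a_2 \oplus a_3 \equiv \bot)$ to unit-propagate the combined parity. Your explicit handling of the degenerate case is actually a small improvement: the paper's own step silently assumes the prefix sum $Y_1 \oplus \dots \oplus Y_{n-1}$ is non-empty (otherwise PT1 supplies no alias for it), whereas your choice of a pair with non-empty symmetric difference, together with the all-sets-coincide analysis, sidesteps that gap.
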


\begin{proof}
We prove the lemma by induction on the sequence $ a_1, \dots, a_n $.
The induction hypothesis is that Lemma~\ref{Lem:PTableProp} holds for the
case $a_1, \dots, a_{n-1} $.

Base case: $n = 1$.
The claim holds trivially, because $a = a_1$.

Induction step for $ n > 1 $. 
By the property PT1, the ``alias'' variable $a'$ for the set of variables $(Y_1 \oplus \dots \oplus Y_{n-1})$ is present in $\VarsOf{\phi}$ and the xor-constraint $ (a' \oplus Y_1 \oplus
        \dots \oplus Y_{n-1} \equiv \bot) $ is in $\phi$.
By the induction hypothesis, it holds that $ \phi \wedge (a_1 \equiv
        \parity{1}) \wedge \dots \wedge (a_{n-1} \equiv \parity{n-1}) \UPderiv
(a' \equiv \parity{1} \oplus \dots \oplus \parity{n-1}) $.
By the property PT2, it holds that the xor-constraint $ (a \oplus a_{n-1}
\oplus a' \equiv \bot) $ is in $ \phi$.
It follows that $ \phi \wedge (a_1 \equiv \parity{1}) \wedge \dots \wedge (a_n \equiv \parity{n}) \UPderiv (a \equiv \parity{1} \oplus \dots \oplus \parity{n}) $.

\end{proof}

\begin{lemma}
\label{Lem:PTableProp2}
Let $ \phi $ be a conjunction of xor-constraints such that $
\HasPropTable{Y}{\phi} $ for some $ Y \subseteq \VarsOf{\phi} $
of variables in $\phi$, and $ \phi' $ be a satisfiable conjunction of
xor-constraints in $\phi$ such that $ \VarsOf{\phi'} \subseteq Y$.
If $ \phi' \Models (Y' \equiv \parity{}) $ for some $Y' \subseteq Y$, then
it holds for the ``alias'' variable $ a \in \VarsOf{\phi} $ for the subset $Y'$ that $ \phi \UPderiv (a \equiv \parity{}) $.
\end{lemma}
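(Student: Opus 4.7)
\textbf{Proof proposal for Lemma~\ref{Lem:PTableProp2}.}

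The plan is to reduce the statement to Lemmas~\ref{Lemma:LinearCombs} and~\ref{Lem:PTableProp} together with the propagation-table property PT3. First, since $\phi'$ is satisfiable and $\phi' \Models (Y' \equiv \parity{})$, Lemma~\ref{Lemma:LinearCombs} produces a subset $S = \{(Y_1 \equiv \parity{1}), \dots, (Y_n \equiv \parity{n})\}$ of xor-constraints in $\phi'$ (and hence in $\phi$) whose linear combination equals $(Y' \equiv \parity{})$; in particular, $Y' = Y_1 \oplus \dots \oplus Y_n$ and $\parity{} = \parity{1} \oplus \dots \oplus \parity{n}$. Because $\VarsOf{\phi'} \subseteq Y$, every $Y_i$ is a subset of $Y$.

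Next, I would invoke property PT3 on each constraint $(Y_i \equiv \parity{i})$: since the constraint is in $\phi$ and $Y_i \subseteq Y$, there is an ``alias'' variable $a_i \in \VarsOf{\phi}$ for $Y_i$ such that both the binding constraint $(a_i \oplus Y_i \equiv \bot)$ and the unit $(a_i \equiv \parity{i})$ occur in $\phi$. The latter are unary xor-constraints in $\phi$, so $\phi \UPderiv (a_i \equiv \parity{i})$ trivially for each $i$.

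Now let $a$ be the alias variable for $Y'$ (which exists by PT1 when $Y' \neq \emptyset$; the case $Y' = \emptyset$ is handled separately below). Applying Lemma~\ref{Lem:PTableProp} to the alias variables $a_1, \dots, a_n$ bound to $Y_1, \dots, Y_n$ with $Y' = Y_1 \oplus \dots \oplus Y_n$ gives
\[
  \phi \wedge (a_1 \equiv \parity{1}) \wedge \dots \wedge (a_n \equiv \parity{n}) \UPderiv (a \equiv \parity{1} \oplus \dots \oplus \parity{n}).
\]
Chaining this with the unit-propagations of $(a_i \equiv \parity{i})$ from $\phi$ yields $\phi \UPderiv (a \equiv \parity{})$, as required.

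The only delicate point, and what I expect to be the main obstacle to making airtight, is the edge case $Y' = \emptyset$: then $(Y' \equiv \parity{})$ is either the tautology $\top$ (when $\parity{} = \bot$), for which there is nothing to prove, or the contradiction $\bot$ (when $\parity{} = \top$), which is excluded by satisfiability of $\phi'$. A similar minor bookkeeping point is that Lemma~\ref{Lemma:LinearCombs} may return some $Y_i = \emptyset$ with $\parity{i} = \bot$; such summands can be dropped without changing the linear combination, so PT1/PT3 only need to be applied to non-empty $Y_i$, after which the inductive step in Lemma~\ref{Lem:PTableProp} carries the argument through.
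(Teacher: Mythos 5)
Your proposal is correct and follows essentially the same route as the paper's proof: apply Lemma~\ref{Lemma:LinearCombs} to extract a subset $S$ of $\phi'$ whose linear combination is $(Y' \equiv \parity{})$, use PT1/PT3 to obtain the alias bindings and the unit constraints $(a_i \equiv \parity{i})$ in $\phi$, and then conclude via Lemma~\ref{Lem:PTableProp}. Your version is in fact slightly more careful than the paper's, which leaves implicit both the final chaining step (that the units $(a_i \equiv \parity{i})$ are already in $\phi$, so the extra conjuncts in Lemma~\ref{Lem:PTableProp} are redundant) and the empty-subset edge cases.
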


\begin{proof}
By Lemma~\ref{Lemma:LinearCombs}, there is a subset $S = (Y_1 \equiv \parity{1}) \wedge \dots \wedge (Y_n \equiv \parity{n}) $ of xor-constraints in $\phi'$ such that $ \SetLC{S} = (Y' \equiv \parity{}) $.
By the property PT1, it holds that the the ``alias'' variable $a$ for the set of variables $Y'$ is present in $ \VarsOf{\phi} $ and the xor-constraint $ (a \oplus Y' \equiv \bot) $ is in $\phi$.
Also by the property PT1, it holds for each xor-constraint $ (Y_i \equiv
\parity{i}) $ in $S$ that the ``alias'' variable $ a_i $ for the set of
variables $Y_i$ is present in $\VarsOf{\phi}$ and the xor-constraint
$ (a_i \oplus Y_i \equiv
\bot) $ and by the property PT3 the xor-constraint $ (a_i \equiv
    \parity{i}) $ is in $ \phi$.
It holds by Lemma~\ref{Lem:PTableProp}, that $ \phi \wedge (a_1 \wedge
\parity{1}) \wedge \dots \wedge (a_n \equiv \parity{n}) \UPderiv
(a \equiv \parity{}) $.
\end{proof}

\begin{lemma}
\label{Lem:PropTableImpl}
If $ \xorclauses$ is an xor-constraint conjunction and $ Y \subseteq
\VarsOf{\xorclauses}$, then $ \HasPropTable{Y}{\xorclauses \wedge \PropTable{Y,
    \xorclauses, |Y|}}$ .
\end{lemma}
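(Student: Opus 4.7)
The plan is to prove the lemma by inspection of the three blocks of the $\PropTable{Y,\xorclauses,|Y|}$ procedure, matching each block with one of the three properties PT1, PT2, PT3 that define a \UP{}-propagation table. Throughout the proof I will write $\xorclauses'$ for the conjunction built up inside the procedure, so that the conclusion is that $\xorclauses' = \xorclauses \wedge \PropTable{Y,\xorclauses,|Y|}$ satisfies PT1--PT3 for $Y$.

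First I would establish PT1 by looking at the for-loop on lines 1--3. With $k=|Y|$, every non-empty $Y' \subseteq Y$ satisfies $|Y'|\le k$, so the loop visits each such $Y'$; line 3 ensures that, at loop exit, $\xorclauses'$ contains some alias-binding xor-constraint $(a \oplus Y' \equiv \bot)$. Next I would establish PT3 by the conditional on lines 4--5: whenever an xor-constraint of the form $(Y' \equiv \parity{})$ with $Y' \subseteq Y$ and $\parity{} \in \set{\bot,\top}$ is present in $\xorclauses'$, the corresponding alias $a$ already exists by the PT1 step performed in the same iteration, and line 5 adds $(a \equiv \parity{})$. Note that we only need to check xor-constraints of $\xorclauses'$ that are added in lines 1--3 plus the original ones whose variable set is contained in $Y$; the alias-binding constraints themselves are longer than $k$ or involve the fresh alias variable and hence do not fall under PT3.

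For PT2 I would use the already-established PT1. The loop on lines 6--11 iterates over all ordered pairs $(Y_1,Y_2)$ of distinct subsets of $Y$ with $|Y_1|,|Y_2| \le |Y|$, i.e.\ all distinct pairs. For any such pair, $Y_1 \oplus Y_2$ is a non-empty subset of $Y$, so by PT1 the alias $a_3$ for $Y_1 \oplus Y_2$ is in $\VarsOf{\xorclauses'}$ and $(a_3 \oplus (Y_1 \oplus Y_2) \equiv \bot)$ is in $\xorclauses'$; lines 8--9 then retrieve the aliases $a_1,a_2$ for $Y_1,Y_2$ (existing by PT1) and line 11 adds $(a_1 \oplus a_2 \oplus a_3 \equiv \bot)$. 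The degenerate case $Y_1 = Y_2$ of PT2 is handled trivially by choosing $a_1 = a_2$ and reading the constraint as a tautology (after the linear combination the xor-constraint on $a_3$ collapses), so nothing needs to be added for it.

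The only delicate point, and the main obstacle, is to be careful that when lines 4--5 and 6--11 are executed, the aliases they refer to have already been created by the earlier pass of lines 1--3: since the algorithm runs the three blocks sequentially and lines 6--11 only add new constraints using \emph{existing} alias variables (by the conditional on line 7 combined with the existence guarantee from PT1), this ordering is enough. Putting these observations together gives $\HasPropTable{Y}{\xorclauses \wedge \PropTable{Y,\xorclauses,|Y|}}$, as required.
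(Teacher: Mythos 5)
Your proof is correct and follows essentially the same route as the paper's: a direct block-by-block inspection of $\PropTableName$, matching lines 1--3 to PT1, lines 4--5 to PT3, and lines 6--11 to PT2, using $k=|Y|$ to guarantee every non-empty subset of $Y$ is visited. Your extra remarks on the degenerate case $Y_1=Y_2$ and on the ordering of alias creation are edge cases the paper's proof silently glosses over, but they do not change the argument.
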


\begin{proof}
Consider the pseudo code for the algorithm $\PropTableName$ in Fig.~\ref{Fig:PropTable}.
The variable $ Y' $ takes the value of each subset of $ Y $ in the loop in 
lines 1-5, and as the result $ \xorclauses'$ has a variable $a$ for each non-empty
subset $Y'$ of $ Y $ such that $ (a \oplus Y' \equiv \parity{}) $ is in $ \xorclauses'$. The property PT1 is satisfied by the lines 2-3 and the property PT3 by the lines 4-5.

In the loop in lines 6-11 is iterated for every pair of subsets $ Y_1, Y_2
\subset Y $ such that $ Y_1 \not = Y_2$. It holds for the smallest-indexed
variables $a_1,a_2,a_3 \in \VarsOf{\xorclauses}$ such that 
 the xor-constraints $ (a_1 \oplus Y_1 \equiv  \bot) $,
$ (a_2 \oplus Y_2 \equiv \bot)$, and $ (a_3 \oplus (Y_1 \oplus Y_2) \equiv
    \bot)$ that the xor-constraint $ (a_1 \oplus a_2 \oplus a_3 \equiv \bot)$
is in $ \xorclauses' $. This satisfies PT2.
\end{proof}

\begin{lemma}
\label{Lem:GEConjunctions}
Given an xor-constraint conjunction $\phi_0$ 
and an elimination order
$\Tuple{x_1,\dots,x_n}$ for the variables of $\phi_0$ for the algorithm
$\getransname$ where $k=|\VarsOf{\phi_0}|$, it holds that there is a sequence of xor-constraint conjunctions $
\Tuple{\phi_1,\dots,\phi_n}$ in $ \psi = \phi_0 \wedge \getrans{\phi_0} $ and a sequence of sets of variables
$\Tuple{Y_1,\dots,Y_n}$ such that
it holds for each triple $ \Tuple{x_i,Y_i,\phi_i}$:
\begin{itemize}
\item $ Y_i = \VarsOf{\ClausesOf{x_i}{\phi_{i-1}}} \cap \Set{x_i,\dots,x_n} $
\item $ \HasPropTable{Y_i}{\psi}$,
\item $\phi_i = \phi_{i-1} \wedge \PropTable{Y_i,\phi_{i-1}, k}$
\item $\phi_n = \phi_0 \wedge \getrans{\phi_0}$
\end{itemize}
\end{lemma}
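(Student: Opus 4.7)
}
The plan is to mirror the execution of $\getransname$ on $\phi_0$ with the given elimination order $\Tuple{x_1,\dots,x_n}$ and verify each of the four bullets by induction on $i$. First, I would establish the loop invariant that, just before the $i$-th iteration of the while loop of $\getransname$, the working set $V$ equals $\Set{x_i,\dots,x_n}$ and the working conjunction $\xorclauses'$ equals $\phi_{i-1}$, where we \emph{define} $\phi_i = \phi_{i-1} \wedge \PropTable{Y_i,\phi_{i-1},k}$ with $Y_i = \VarsOf{\ClausesOf{x_i}{\phi_{i-1}}} \cap \Set{x_i,\dots,x_n}$. The invariant holds trivially for $i=1$, and the induction step follows because each iteration picks $x_i$ (by the given order), extends $\xorclauses'$ exactly as the definition of $\phi_i$ prescribes in line~4, and removes one variable from $V$. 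After the $n$-th iteration $V$ is empty, the loop terminates, and the returned $\xorclauses'\setminus\xorclauses$ equals $\phi_n\setminus\phi_0$, whence $\phi_n = \phi_0 \wedge \getrans{\phi_0}$ as required.

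Next, I would establish $\HasPropTable{Y_i}{\psi}$. Because $|Y_i| \le |\VarsOf{\phi_0}| = k$, the condition $|Y'|\le k$ in line~1 of $\PropTableName$ is vacuous, so $\PropTable{Y_i,\phi_{i-1},k}$ behaves identically to $\PropTable{Y_i,\phi_{i-1},|Y_i|}$. Lemma~\ref{Lem:PropTableImpl} then yields $\HasPropTable{Y_i}{\phi_i}$ directly. Since $\phi_i \subseteq \psi$, properties PT1 and PT2, which merely assert the existence in the formula of certain alias variables and of alias-binding or linear-combination xor-constraints, carry over verbatim from $\phi_i$ to the larger $\psi$.

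The only nontrivial point, and the main obstacle, is that PT3 is \emph{not} obviously preserved under adding further xor-constraints: adding new $(Y'\equiv p)$ with $Y'\subseteq Y_i$ would create new obligations. The key observation I would use is that $Y_i$ consists solely of original variables of $\phi_0$ (since $V$ starts as $\VarsOf{\phi_0}$ and only ever shrinks), combined with an inspection of $\PropTableName$: every xor-constraint it ever produces is either an alias binder $(a\oplus Y'\equiv\bot)$, a linear-combination constraint $(a_1\oplus a_2\oplus a_3\equiv\bot)$, or a unary binder $(a\equiv p)$, all of which involve at least one freshly introduced alias variable. Hence every xor-constraint in $\psi$ whose variables are \emph{all} original must already lie in $\phi_0$. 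Consequently, any $(Y'\equiv p)\in\psi$ with $Y'\subseteq Y_i$ is in $\phi_0\subseteq\phi_{i-1}$, and lines~4--5 of $\PropTable{Y_i,\phi_{i-1},k}$ at step $i$ therefore added the required alias binder and unary binder, placing them in $\phi_i\subseteq\psi$. This gives PT3 in $\psi$ and completes the proof.
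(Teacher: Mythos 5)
Your proof is correct and follows essentially the same route as the paper's: trace the execution of $\getransname$, identify $V_i=\Set{x_i,\dots,x_n}$ and $\phi_i$ with the working state after iteration $i$, and invoke Lemma~\ref{Lem:PropTableImpl} (noting that $k=|\VarsOf{\phi_0}|\ge|Y_i|$ makes the length bound in $\PropTableName$ vacuous, so $\PropTable{Y_i,\phi_{i-1},k}$ coincides with $\PropTable{Y_i,\phi_{i-1},|Y_i|}$). The one point where you go beyond the paper is the transfer of the propagation-table property from $\phi_i$ to $\psi$: the paper simply asserts that adding xor-constraints cannot break any condition of a \UP{}-propagation table, whereas you correctly note that PT3 is a conditional property that could in principle acquire new obligations from later additions, and you close this gap with the valid observation that every constraint produced by $\PropTableName$ contains a fresh alias variable while $Y_i\subseteq\VarsOf{\phi_0}$, so no new antecedent of PT3 over $Y_i$ ever appears in $\psi$; this is a genuine improvement in rigor over the paper's one-line justification.
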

\begin{proof}
Assume an xor-constraint conjunction $ \phi_0 $ and an elimination
order $ \Tuple{x_1,\dots,x_n}$ for the variables $ \phi_0$
for the algorithm $ \getransname$ where $k=|\VarsOf{\phi_0}|$.
The translation $ \getrans{\phi_0} $ in Figure~\ref{Fig:GETrans} is initialized with $ \xorclauses' = \phi_0
$ and $ V = \VarsOf{\phi_0} $. The loop in lines 1-5 is run $n$ times and $V$
takes the values $ V_1,\dots,V_n$. In the first iteration of the loop, all
variables of $ \phi_0 $ are in the set $ V_1 = V$. Then for each successive
iteration $ i $ it holds that $ V_i = V_{i-1} \setminus \Set{x_{i-1}} $ because
$ x_i $ is removed from the set $ V$ in the line 4. 
We now argue that the xor-constraints in the conjunction  $ \phi_1 \wedge \dots \wedge \phi_n$ are in $ \psi = \phi_0 \wedge \getrans{\phi_0} $. 
After choosing to ``eliminate'' the variable $ x_i$ in the line 3, the
xor-constraint conjunction $ \xorclauses' $ is augmented with $
\PropTable{\VarsOf{\ClausesOf{x_i}{\xorclauses'}} \cap V_i, \xorclauses', k}$,
    so $\phi_i = \phi_{i-1} \wedge \PropTable{Y_i,\phi_{i-1}, k}$.
It is clear that $ V_i = \Set{x_i, \dots, x_n}$, so $ \phi_i $ is identical to
the xor-constraint conjunction $ \xorclauses'$ after the $i$th iteration of the
loop.
Upon $i$th iteration of the loop in the lines 1-5,       
the translation $ \PropTableName $ in Figure~\ref{Fig:PropTable} is initialized
with $Y = Y_i $ and $ \xorclauses' = \phi_{i-1}$. 
After all the $n$ iterations are done it is clear that $ \phi_n = \phi_0
\wedge \getrans{\phi_0}$.
By Lemma~\ref{Lem:PropTableImpl} 
it holds that $ \HasPropTable{Y_i}{\xorclauses'}$
and also $ \HasPropTable{Y_i}{\psi}$.
because adding xor-constraints cannot break any conditions
of the UP-propragation table.
\end{proof}

\begin{lemma}
\label{Lem:GEPropagation}
Given a satisfiable xor-constraint conjunction $ \phi_0' $ in an
xor-constraint conjunction $ \phi_0$ and an elimination
order
$\Tuple{x_1,\dots,x_n}$ for the variables of $\phi_0$ for the algorithm
$\getransname$ where $k=|\VarsOf{\phi_0}|$, it holds that there is a sequence of xor-constraint conjunctions $
\Tuple{\phi_1',\dots,\phi_n'}$ in $ \psi = \phi_0 \wedge \getrans{\phi_0} $
such that for each $ \phi_i'$ in $\Tuple{\phi_0',\dots,\phi_n'}$ it holds that
\begin{itemize}
\item given literals $ \AL_1, \dots, \AL_k, \IL$ such that $ (\SetLC{\phi_i'}) \wedge \AL_1 \wedge \dots \wedge \AL_k \Models \IL $, it holds that
    $ \psi \wedge \AL_1 \wedge \dots \wedge \AL_k \UPderiv \IL $.
\end{itemize}
\end{lemma}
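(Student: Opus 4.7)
The plan is to prove this by induction on $i$, constructing the sequence $\langle\phi_1',\dots,\phi_n'\rangle$ of sub-conjunctions of $\psi$ inductively and verifying the implication-to-\UP-derivability property for each $\phi_i'$ as we go. The guiding invariant to maintain is that $\SetLC{\phi_i'}$ has no variable from the already-processed set $\Set{x_1,\dots,x_i}$ in its support, so that the original variables still appearing in it are confined to some variable set $Y_j$ for which $\HasPropTable{Y_j}{\psi}$ is guaranteed by Lemma~\ref{Lem:GEConjunctions}.

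For the construction, $\phi_0'$ is given. For $i\geq 1$ and $\phi_{i-1}'$ already built, if $x_i\notin\VarsOf{\SetLC{\phi_{i-1}'}}$ we simply set $\phi_i'=\phi_{i-1}'$. Otherwise we augment $\phi_{i-1}'$ with an xor-constraint $C\in\phi_{i-1}\subseteq\psi$ that contains $x_i$ (such a $C$ exists because $x_i\in\VarsOf{\phi_0}$), together with the relevant alias-binding xor-constraints from $\PropTable{Y_i,\phi_{i-1},k}$ that are added by the $i$th iteration of \getransname{}. Treating the conjunction as a multiset at this point, the net effect on the linear combination is $\SetLC{\phi_i'}=\SetLC{\phi_{i-1}'}\LinComb C$, which cancels $x_i$ while re-expressing the remaining original variables through alias variables that live in $\psi$.

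For the inductive step, assume $\SetLC{\phi_i'}\wedge\AL_1\wedge\dots\wedge\AL_k\Models\IL$. By the invariant and the construction, the original-variable support of $\SetLC{\phi_i'}$ is contained in some $Y_j$ with $\HasPropTable{Y_j}{\psi}$, so Lemma~\ref{Lem:PTableProp2} yields a \UP-derivation, from $\psi$ under the assumptions, of the alias literal corresponding to $\IL$, and Lemma~\ref{Lem:PTableProp} then chains this with the alias bindings in $\psi$ to produce $\IL$ itself (inconsistent premises are handled in the same way, yielding the empty clause). The base case $i=0$ is treated analogously, relying on the fact that the first few iterations of the construction expose all alias bindings needed to re-express $\SetLC{\phi_0'}$.

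The main obstacle will be the bookkeeping that verifies, at every step, that the augmenting constraints indeed belong to $\psi$ and that the variables arising in $\SetLC{\phi_i'}$ are simultaneously covered by a single propagation table. This rests on the assumption $k=\Card{\VarsOf{\phi_0}}$, which guarantees that every relevant subset of each $Y_i$ receives an alias in $\psi$, and on the structural fact from Lemma~\ref{Lem:GEConjunctions} that only xor-constraints touched by the $i$th iteration of \getransname{} are needed at step $i$ of our induction.
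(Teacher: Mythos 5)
There is a genuine gap, and it sits at the heart of your inductive step. You claim that ``the original-variable support of $\SetLC{\phi_i'}$ is contained in some $Y_j$ with $\HasPropTable{Y_j}{\psi}$'', and you invoke Lemma~\ref{Lem:PTableProp2} on that single table. But the sets $Y_j$ produced by \getransname{} are only the local neighbourhoods $\VarsOf{\ClausesOf{x_j}{\phi_{j-1}}}\cap\Set{x_j,\dots,x_n}$ of the eliminated variables; even with $k=\Card{\VarsOf{\phi_0}}$ there is no propagation table covering all of $\VarsOf{\phi_0}$ (building one would be exponential and would make the elimination order pointless). The support of $\SetLC{\phi_0'}$ --- and of your augmented $\SetLC{\phi_i'}$, since each augmentation cancels $x_i$ but imports the \emph{other} variables of the chosen constraint $C$ --- can be spread across many such neighbourhoods, so no single application of Lemma~\ref{Lem:PTableProp2} applies. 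This is precisely the case you must handle, because the instance of the property that Theorem~\ref{Thm:GESimulation} actually uses is the one for $\phi_0'$, which you defer to a ``base case treated analogously''.

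The missing idea is the chaining mechanism, and it forces the induction to run in the opposite direction. The paper constructs $\phi_i'$ from $\phi_{i-1}'$ by \emph{removing} the constraints $\ClausesOf{x_i}{\phi_{i-1}'}$ and replacing them by a single alias-binding constraint $(a_i\XX V_i\Equal\cdots)$ summarising their local linear combination; the base case is $\phi_n'$, whose linear combination has empty support, and the induction proceeds from $\phi_i'$ back to $\phi_{i-1}'$. At each backward step the \emph{difference} between the two conjunctions is confined to $Y_i$, so the single table $\HasPropTable{Y_i}{\psi}$ (via Lemma~\ref{Lem:PTableProp2}) suffices to bridge it, with a case split on whether $\VarsOf{\IL}$ lies in the locally surviving variables $X_i$ or not; everything outside $Y_i$ is discharged by the induction hypothesis for $\phi_i'$. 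Your forward construction never relates the property for $\phi_i'$ to that for $\phi_{i-1}'$ --- each step is argued independently from the (false) single-table claim --- so even if the bookkeeping about which constraints lie in $\psi$ were completed, the argument would not go through. A secondary issue: skipping the step when $x_i\notin\VarsOf{\SetLC{\phi_{i-1}'}}$ is not safe, since $x_i$ may still occur in individual constraints of $\phi_{i-1}'$ and must still be eliminated from them; the paper's $V_i$/$X_i$ distinction exists to handle exactly this.
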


\begin{proof}
Assume a satisfiable xor-constraint clause conjunction $ \phi_0' $ in an
xor-constraint conjunction $ \phi_0 $ and an elimination
order $ \Tuple{x_1, \dots, x_n}$ for the variables of $ \phi_0$
for the algorithm $ \getransname $.

By Lemma~\ref{Lem:GEConjunctions}, it holds that
there is a sequence of xor-constraint conjunctions $
\Tuple{\phi_1,\dots,\phi_n}$ in $ \psi = \phi_0 \wedge \getrans{\phi_0} $ and a sequence of sets of variables
$\Tuple{Y_1,\dots,Y_n}$ such that
it holds for each triple $ \Tuple{x_i,Y_i,\phi_i}$:
\begin{itemize}
\item $ Y_i = \VarsOf{\ClausesOf{x_i}{\phi_{i-1}}} \cap \Set{x_i,\dots,x_n} $
\item $ \HasPropTable{Y_i}{\psi}$,
\item $\phi_i = \phi_{i-1} \wedge \PropTable{Y_i,\phi_{i-1},k}$
\end{itemize}

Let $ \LCVarsGone{\phi_{i-1}'} = \VarsOf{\phi_{i-1}'} \setminus
\VarsOf{\SetLC{\phi_{i-1}'}}$ be the set of variables the ``disappear'' in the
normal form of the linear combination of the xor-constraints in $ \phi_{i-1}'$.

We define a corresponding sequence of $n$ tuples $\Tuple{Y_i', X_i, V_i, a_i,
    \phi_i'} $ as follows:
\begin{itemize}
\item $Y_i' = \VarsOf{\ClausesOf{x_i}{\phi_{i-1}'}} \cap \Set{x_i,\dots,x_n} $
, and
\item $X_i = \VarsOf{\SetLC{\phi_{i-1}'}} \cap \VarsOf{\ClausesOf{x_{i}}{\phi_i'}}$
be the set of variables have occurrences in the xor-constraints of the variable $x_i$
and also remain in the normal form of the linear combination of $ \phi_{i-1}'$, and
\item $V_i = \VarsOf{\SetLC{\ClausesOf{x_i}{ \phi_{i-1}'}}} \cap \LCVarsGone{\phi_{i-1}'} $
be the set of variables remain in the normal form of the linear combination of the xor-constraints of the variable $x_i$ that also disappear in the normal form of the linear combination of $ \phi_{i-1}'$, and
\item $a_i$ is a variable such that the xor-constraint $(a_i \oplus V_i \equiv \parity{i})$ is in $ \phi_i \wedge (\bot \equiv \bot) $ (it exists because $V_i \subseteq Y_i$ and $ \HasPropTable{Y_i}{\phi_i}$), and
\item if $ x_i \not \in \VarsOf{\phi_i'} $ or $ V_i = \emptyset $, then $ \phi_i' = \phi_{i-1}' $, otherwise
\begin{itemize}
\item if $ (a_i \oplus V_i \equiv \parity{i} \oplus \parity{i}') $ is in $ \phi_{i-1}'$,
    then $ \phi_i' = \phi_{i-1}' \setminus \ClausesOf{x_i}{\phi_{i-1}'}$, otherwise
\item 
$ \phi_i' = \phi_{i-1}' \setminus \ClausesOf{x_i}{\phi_{i-1}'} \wedge (a_i \oplus V_i \equiv \parity{i} \oplus \parity{i}')$.
\end{itemize}
\end{itemize}

We prove the lemma by induction on the structure of the xor-constraint conjunction
sequence $ \Tuple{\phi_0', \dots, \phi_n'} $.

The induction hypothesis is that the lemma holds for the xor-constraint conjunction
sequence $ \Tuple{\phi_{i}',\dots,\phi_n'} $.

Base case: $i = n$. 
Assume any literals $ \AL_1,\dots,\AL_k $ such that
$\SetLC{\phi_i'} \wedge \AL_1 \wedge \dots \wedge \AL_k \Models \IL $. 
It holds that $ \VarsOf{\phi_i'} = \emptyset $, so $ \VarsOf{\IL} \in \VarsOf{\AL_1, \dots, \AL_k}$. 
It clearly holds that $ \phi_0 \wedge \AL_1 \wedge \dots \wedge
\AL_k \UPderiv \IL $.

Induction step: $ 0 \leq i-1 < n $.
Assume any literals $ \AL_1,\dots,\AL_k $ such that $
\VarsOf{\AL_1,\dots,\AL_k,\IL} \subseteq \VarsOf{\phi_0} $ and $
\SetLC{\phi_{i-1}'} \wedge \AL_1 \wedge \dots \wedge \AL_k \Models \IL $. 
If $ \phi_{i-1}' = \phi_{i}'$, then it holds by the induction hypothesis that $ \phi_0 \wedge \AL_1 \wedge \dots \wedge \AL_k \UPderiv \IL $.

We have two cases to consider:
\begin{itemize}
\item Case I: $\VarsOf{\IL} \in X_i$. 
It holds that $ \VarsOf{\SetLC{\phi_{i}'}} 
  \subseteq \VarsOf{\AL_1,\dots,\AL_k} \cup \Set{a_{i}}$
  and $ \SetLC{\phi_i'} \wedge \AL_1 \dots \wedge \AL_k \Models (V_i \equiv \parity{i}) $,
  so by induction hypothesis it holds that $ \phi_0 \wedge \AL_1 \wedge \dots \wedge \AL_k \UPderiv (a_i \equiv \parity{i}') $. 
It holds that $ \SetLC{\ClausesOf{x_i}{\phi_{i-1}'}} \wedge \AL_1 \wedge \dots \wedge \AL_k \wedge (a_i \equiv \parity{i}') \Models \IL $. 
By Lemma~\ref{Lem:PTableProp2}, it holds that $ \phi_i \wedge \AL_1 \wedge \dots
\AL_k \wedge (a_i \equiv \parity{i}') \UPderiv \IL $.

\item Case II: $\VarsOf{\IL} \not \in X_i$. 
It holds that $ \SetLC{\ClausesOf{x_i}{\phi_{i-1}'}} \wedge \AL_1 \wedge \dots
\wedge \AL_k \Models (V_i \equiv \parity{i}) $.
By Lemma~\ref{Lem:PTableProp2}, it holds that $ \phi_i \wedge \AL_1 \wedge \dots \wedge \AL_k \UPderiv (a_i \equiv \parity{i}') $. 
It holds that $ \VarsOf{\SetLC{\phi_{i}'}} \subseteq \VarsOf{\AL_1, \dots, \AL_k, \IL} \cup \Set{a_i}$, so $\SetLC{\phi_{i}'} \wedge \AL_1 \wedge \dots \wedge \AL_k \wedge (a_i \equiv \parity{i}') \Models \IL$.
It holds by induction hypothesis that $ \phi_0 \wedge \AL_1 \wedge \dots \wedge \AL_k \wedge (a_i \equiv \parity{i}') \UPderiv \IL$.
\end{itemize}
\end{proof}

The following lemma states that $\getransname$ translation refutes any
unsatisfiable xor-constraint conjunctions.
\begin{lemma}
\label{Lem:GEUnsat}
If $\xorclauses$ is an unsatisfiable xor-constraint conjunction, then
$\xorclauses \wedge \getrans{\xorclauses} \UPderiv (\bot \equiv \top) $ where
$k=|\VarsOf{\xorclauses}|$.
\end{lemma}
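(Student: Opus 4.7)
The idea is to adapt the inductive construction of Lemma~\ref{Lem:GEPropagation} to an unsatisfiable subproblem. By Lemma~\ref{Lemma:LinearCombs}, unsatisfiability of $\xorclauses$ yields a minimal subset $S \subseteq \xorclauses$ with $\SetLC{S} = (\bot \equiv \top)$. If $S = \Set{(\bot \equiv \top)}$ the claim is immediate; otherwise pick any $E = (\VarsOf{E} \equiv p_E) \in S$, and by minimality of $S$ the subset $S' = S \setminus \Set{E}$ is satisfiable with $\SetLC{S'} = E \LinComb (\bot \equiv \top)$, hence $\SetLC{S'} \Models (\VarsOf{E} \equiv \neg p_E)$.

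Because $k = |\VarsOf{\xorclauses}|$, $\getransname$ can be made to process the variables of $\VarsOf{E}$ in a contiguous block ending at some iteration $j$. At that point $E \in \phi_{j-1}$ (constraints are only added, never removed, by $\getransname$) and $\VarsOf{E} \subseteq Y_j$, where $Y_j$ is the set from Lemma~\ref{Lem:GEConjunctions}. By Lemma~\ref{Lem:PropTableImpl}, $\HasPropTable{Y_j}{\psi}$ for $\psi = \xorclauses \wedge \getrans{\xorclauses}$, and the properties PT1 and PT3 of a \UP-propagation table yield an alias $a \in \VarsOf{\psi}$ with both $(a \oplus \VarsOf{E} \equiv \bot) \in \psi$ and $(a \equiv p_E) \in \psi$. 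It then suffices to show $\psi \UPderiv (a \equiv \neg p_E)$, since one further step by rule $\unitruleP$/$\unitruleN$ on these two opposite-parity assignments to $a$ produces $(\bot \equiv \top)$.

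To obtain $\psi \UPderiv (a \equiv \neg p_E)$, apply Lemma~\ref{Lem:GEPropagation} with $\phi_0' = S'$ (satisfiable), the elimination order above, empty assumption list, and target literal $\IL = (a \equiv \neg p_E)$. In the induction of that lemma's proof, at iteration $j$ the symbolic rewriting step introduces an alias $a_j$ for the residual subset $V_j$; by the reuse condition of $\PropTableName$ (lines 2--3 of Figure~\ref{Fig:PropTable}), $a_j$ is identified with the same variable $a$ chosen above whenever $V_j$ matches $\VarsOf{E}$, which can be arranged by the choice of elimination order. The implication $\SetLC{\phi_j'} \Models (a \equiv \neg p_E)$ is then inherited from $\SetLC{S'} \Models (\VarsOf{E} \equiv \neg p_E)$ via the constraint $(a_j \oplus V_j \equiv p_j \oplus p_j')$ added to $\phi_j'$, and Lemma~\ref{Lem:GEPropagation} supplies the required \UP-derivation.

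The main obstacle is the bookkeeping that aligns the alias $a$ from PT1 with the alias $a_j$ appearing symbolically in the construction of $\phi_j'$, and verifying that $V_j$ coincides with $\VarsOf{E}$ under the chosen elimination order. If $V_j$ does not exactly match $\VarsOf{E}$, a chain of alias-to-alias derivations via Lemma~\ref{Lem:PTableProp} applied to intermediate subsets of $Y_j$ bridges the gap, using the fact that $\HasPropTable{Y_j}{\psi}$ propagates between the aliases of any two subsets of $Y_j$; in either case the unit-propagation steps underlying Lemmas~\ref{Lem:PTableProp} and~\ref{Lem:PTableProp2} close the argument.
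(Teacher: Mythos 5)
Your proposal follows essentially the same route as the paper's proof: extract a subset $S$ of $\xorclauses$ summing to $(\bot\equiv\top)$, drop one constraint $E$ to obtain a satisfiable remainder whose linear combination entails the negation of $E$, use the propagation-table properties (PT1/PT3, via Lemma~\ref{Lem:GEConjunctions} and Lemma~\ref{Lem:PropTableImpl}) to obtain an alias $a$ with $(a\equiv p_E)$ already present in $\psi$, derive $(a\equiv \neg p_E)$ by invoking Lemma~\ref{Lem:GEPropagation}, and close with a single unit-propagation step on the two contradictory unit facts. The one cosmetic difference is your attempt to choose the elimination order so that $\VarsOf{E}$ lands inside some $Y_j$ — this is both unavailable (the order in $\getrans{\xorclauses}$ is fixed by the greedy heuristic) and unnecessary, since for any order the iteration that eliminates the first variable of $E$ already yields $\VarsOf{E}\subseteq Y_j$, which is exactly what the paper relies on.
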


\begin{proof}
Assume an unsatisfiable xor-constraint conjunction $ \xorclauses$.
By Lemma~\ref{Lemma:LinearCombs}, there is a subset $S = C_1 \wedge \dots \wedge C_m $ of xor-constraints in $\xorclauses$ 
such that $\SetLC{S} = (\bot \equiv \top) $.
Let $ (X_m \equiv \parity{m}) = C_m$.  It holds that $ \SetLC{C_1 \wedge \dots
\wedge C_{m-1}} = (X_m \equiv \parity{m} \oplus \top) $. Thus, $ \psi = C_1
\wedge \dots \wedge C_{m-1} $ is satisfiable. 
Let $\psi = \xorclauses \wedge \getrans{\xorclauses}$.
By
Lemma~\ref{Lem:GEConjunctions}, it holds that:
\begin{itemize}

\item there is a set of variables $ Y \subseteq
\VarsOf{\psi} $ such that $ \VarsOf{C_m}
\subseteq Y $ and $ \HasPropTable{Y}{\psi} $, and

\item there is a variable $ y \in \VarsOf{\psi}$
such that the xor-constraint $ (y \oplus X_m \equiv
    \parity{m} \oplus \parity{m}') $ is in $\psi$, and

\item the xor-constraint $ (y \equiv \parity{m}') $ is in $ \psi$.
\end{itemize}

Because $ \psi \Models (y \equiv \parity{m}' \oplus \top) $, it holds 
by Lemma~\ref{Lem:GEPropagation} that $\psi \UPderiv (y \equiv \parity{m}' \oplus \top) $.
Since $ \psi \UPderiv (y \equiv \parity{m}) $ and $ \psi \wedge
\getrans{\xorclauses} \UPderiv (y \equiv \parity{m} \oplus \top) $, it follows
that $ \psi \UPderiv (\bot \equiv \top) $.
\end{proof}

\begin{lemma} 
\label{Lem:PropTableModels}
The satisfying truth assignments of $\xorclauses$ are exactly the
ones of $ \xorclauses \wedge \PropTable{Y, \xorclauses, k}$ when projected to $
\VarsOf{\xorclauses}$ where $ Y \subseteq \VarsOf{\xorclauses}$.
\end{lemma}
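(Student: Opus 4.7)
\medskip
\noindent\emph{Proof plan.} The plan is to prove the two inclusions separately, exploiting the fact that every xor-constraint added by $\PropTableName{}$ is either (i) a definition of a fresh alias variable in terms of original variables, or (ii) a logical consequence of such definitions together with $\xorclauses$.

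First I would handle the easy direction: if $\TA$ satisfies $\xorclauses \wedge \PropTable{Y,\xorclauses,k}$, then its restriction to $\VarsOf{\xorclauses}$ satisfies $\xorclauses$, simply because $\xorclauses$ is a sub-conjunction of the augmented formula and no xor-constraint of $\xorclauses$ mentions any variable outside $\VarsOf{\xorclauses}$.

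For the other direction, I would take a satisfying truth assignment $\TA$ of $\xorclauses$ and extend it to a truth assignment $\TA'$ on $\VarsOf{\xorclauses \wedge \PropTable{Y,\xorclauses,k}}$ by defining, for each freshly introduced alias variable $a$ bound by $(a \oplus Y' \equiv \bot)$ in line~3 of $\PropTableName{}$, the value $\TA'(a) = \bigoplus_{y \in Y'}\TA(y)$. I then would verify that $\TA'$ satisfies every xor-constraint that $\PropTableName{}$ adds, examining each addition rule in turn: (a) the alias-binding constraints $(a \oplus Y' \equiv \bot)$ of line~3 are satisfied by construction of $\TA'(a)$; (b) each constraint $(a \equiv \parity{})$ added on line~5 is triggered only when $(Y' \equiv \parity{})$ is present in the current $\xorclauses'$, and such a constraint is either in $\xorclauses$ or was itself derived along the way as a linear consequence of previously added constraints and $\xorclauses$ (a simple induction on the order in which line~5 fires shows that $\xorclauses \Models (Y' \equiv \parity{})$ in that case), so $\TA$ satisfies $(Y' \equiv \parity{})$ and hence $\TA'(a) = \parity{}$; and (c) the linear combination constraint $(a_1 \oplus a_2 \oplus a_3 \equiv \bot)$ added on line~11 is satisfied because $\TA'(a_1) \oplus \TA'(a_2) \oplus \TA'(a_3) = \bigoplus_{y \in Y_1}\TA(y) \oplus \bigoplus_{y \in Y_2}\TA(y) \oplus \bigoplus_{y \in Y_1 \oplus Y_2}\TA(y) = \bot$, since every variable in $Y_1 \oplus Y_2$ cancels.

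The main subtlety, and the only step that is not a direct calculation, is point (b) above: the check on line~4 inspects the current conjunction $\xorclauses'$ (which already contains previously added constraints), not $\xorclauses$ alone, so one must argue that whenever $(Y' \equiv \parity{})$ occurs in $\xorclauses'$ with $Y' \subseteq \VarsOf{\xorclauses}$ it is still a logical consequence of $\xorclauses$. This follows by a straightforward induction on the construction of $\xorclauses'$, together with Lemma~\ref{Lemma:LinearCombs} and the observation that any added xor-constraint that mentions only variables of $\VarsOf{\xorclauses}$ must arise as a linear combination of original constraints (the alias variables introduced earlier cannot appear in such a combination because they occur in exactly the constraints that define them, modulo combinations that reintroduce the defining equations).
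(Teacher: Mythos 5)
Your proof is correct and follows essentially the same route as the paper's: the forward direction is immediate because $\xorclauses$ is a sub-conjunction, and the converse extends $\TA$ by assigning each alias variable the parity of its bound subset and then checks each kind of constraint added by $\PropTableName$. Your extra care about line~4 inspecting the evolving $\xorclauses'$ rather than $\xorclauses$ is sound but in fact vacuous within a single call (every added constraint contains a fresh alias variable, so only original constraints can match), which is exactly the assertion the paper makes without elaboration.
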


\begin{proof}
It holds by definition that $ \xorclauses \wedge \PropTable{Y, \xorclauses,k} \Models
\xorclauses$, so it suffices to show that if $\TA$ is a satisfying truth
assignment for $\xorclauses$, it can be extended to a satisfying truth
assignment $\TA'$ for $\PropTable{Y, \xorclauses, k}$. 
Assume that $\TA$ is a truth assignment such that $\TA \Models \xorclauses$.
Let $\TA'$ be a truth assignment identical to $\TA$ except for the following additions. The translation $\PropTable{Y,\phi,k}$ in Figure~\ref{Fig:PropTable} adds four kinds of xor-constraints.
\begin{enumerate}
\item $ (y \oplus Y' \equiv \bot) $ where $ Y' $ is a non-empty subset of $Y$
and $y$ is a new variable. If $ \tau \Models (Y' \equiv \top) $, add $y$ to $\tau'$, otherwise add $\neg y$ to $\tau'$. It is clear that $ \tau' \Models (y \oplus Y' \equiv \bot) $.

\item $ (a_1 \oplus a_2 \oplus a_3 \equiv \parity{1} \oplus \parity{2} \oplus
\parity{3}) $ if the xor-constraints $ (a_1 \oplus Y_1 \equiv \parity{1})
$, $ (a_2 \oplus Y_2 \equiv \parity{2})$, and $ (a_3 \oplus (Y_1 \oplus Y_2)
\equiv \parity{3}) $ are in $\phi$ augmented with xor-constraints from the
previous step.
From the previous step it is clear that $ \tau' \Models (a_1 \oplus Y_1 \equiv
\parity{1}) $, $\tau' \Models (a_2 \oplus Y_2 \equiv \parity{2}) $, and
$\tau' \Models (a_3 \oplus Y_3 \equiv \parity{3}) $. It follows that $\tau'
\Models (a_1 \oplus a_2 \oplus a_3 \equiv \parity{1} \oplus \parity{2} \oplus
\parity{3})$.

\item $ (y \equiv \parity{}') $ where $ y \in \VarsOf{\phi} $ such that
the xor-constraints $ (y \oplus Y' \equiv \parity{} \oplus \parity{}') $ and $ (Y' \equiv \parity{}) $ are in $\xorclauses$ augmented with xor-constraints from the previous step. Since $ (Y' \equiv \parity{}) $ is an original xor-constraint in $\phi$,
    it holds that $\tau' \Models (Y' \equiv \parity{}) $. It follows that $\tau' \Models (y \equiv \parity{}') $.

\item $ (y \oplus y' \equiv \parity{} \oplus \parity{}') $ where $ y, y' \in \VarsOf{\phi} $ and $ \parity{}, \parity{}' \in \Set{\top,\bot}$ such that
the xor-constraints $ (y \oplus Y' \equiv \parity{}) $ and $ (y' \oplus Y' \equiv \parity{}') $ where $Y'$ is a non-empty subset of $Y$
are in $\xorclauses$ augmented with xor-constraints from the previous step. If $ \tau' \Models (Y' \equiv \parity{} \oplus \top) $, then add $ y $ to $ \tau'$, otherwise add $ \neg y $ to $\tau'$. If $ \tau' \Models (Y' \equiv \parity'{} \oplus \top)$, then add $ y' $ to $ \tau'$, otherwise add $ \neg y' $ to $\tau'$.
It follows that $ \tau' \Models (y \oplus y' \equiv \parity{} \oplus \parity{}')$.
\end{enumerate}
\end{proof}

\begin{retheorem}{\ref{Thm:GESimulation}} 
If $\xorclauses$ is an xor-constraint
conjunction, then $\getrans{\xorclauses}$ is a GE-simulation formula for
$\xorclauses$ provided that $k=|\VarsOf{\xorclauses}|$.
\end{retheorem}

\begin{proof}
We first prove that the satisfying truth assignments of $\xorclauses$ are
exactly the ones of $\psi = \xorclauses \wedge \getrans{\xorclauses}$ when
projected to $\VarsOf{\xorclauses}$. 
The translation \getransname{} in Figure~\ref{Fig:GETrans} only adds xor-constraint
conjunctions of the type $\PropTable{Y, \phi, k} $ for some set of variables $
Y \subseteq \VarsOf{\xorclauses}$ and some xor-constraint conjunction $\phi$
and by Lemma~\ref{Lem:PropTableModels} the satisfying assignments
of $ \phi$ are exactly the ones of $ \phi \wedge \PropTable{Y, \phi, k}$ 
when projected to $ \VarsOf{\phi} $.
It follows by induction that the satisfying truth assignment for $\xorclauses$
are exactly to the ones of $ \xorclauses \wedge 
\getrans{\xorclauses}$ when projected to $ \VarsOf{\xorclauses}$.

Next we show that if $\xorclauses$ is satisfiable 
and 
 $\xorclauses \wedge \AL_1 \wedge \dots \wedge \AL_k \Models \IL$, then $\IL$ is $\UP{}$-derivable from $ \xorclauses \wedge \getrans{\xorclauses} \wedge \AL_1 \wedge \dots \wedge \AL_k$. 
By Lemma~\ref{Lemma:LinearCombs}, there is a subset $S$ of xor-constraints in $\phi \wedge \AL_1 \wedge \dots \wedge \AL_k $ such that $ \SetLC{S} = \IL $,
By Lemma~\ref{Lem:GEPropagation}, it holds that 
$\SetLC{S} \UPderiv \IL $, so $ \psi \wedge \AL_1 \wedge \dots \wedge \AL_k \UPderiv \IL $.

It remains to show that if $\xorclauses$ is unsatisfiable,
   then $\psi \UPderiv (\bot \equiv \top)$. Assume that 
   $\xorclauses$ is unsatisfiable. By Lemma~\ref{Lem:GEUnsat}, it holds
   that $\psi \UPderiv (\bot \equiv \top)$.
All the requirements for GE-simulation formula are satisfied, so $\getrans{\xorclauses}$ is a GE-simulation formula for $\xorclauses$.
\end{proof}

\subsection{Proof of Theorem~\ref{Thm:XorSimp}}
\begin{retheorem}{\ref{Thm:XorSimp}}
If $ \Tuple{\phi_a', \phi_b'} $ is the result of applying
one of the simplification rules to $ \Tuple{\phi_a, \phi_b} $ 
and 
$\phi_a \wedge \phi_b \wedge \AL_1 \wedge \dots \wedge
\AL_k \UPderiv \IL $, then $\phi_a' \wedge \phi_b' \wedge
\AL_1 \wedge \dots \wedge \AL_k \UPderiv \IL $.  
\end{retheorem}

\begin{proof}
Let $ \Tuple{\phi_a', \phi_b'} $ be the result of applying one of the
simplification rules to $ \Tuple{\phi_a, \phi_b}$, $ \phi_a \wedge \phi_b
\wedge \AL_1 \wedge \dots \wedge \AL_k \UPderiv \IL $, and $ \IL = (x \equiv \parity{})$.
If S1 was the simplification rule used, then it clearly
holds that $ \phi_a' \wedge \phi_b' \wedge \AL_1 \wedge \dots \wedge \AL_k \UPderiv \IL $.

Otherwise, S2 was used to simplify an xor-constraint $ \XC$ in $\phi_a $ with
an xor-constraint $ \XC' $ in $ \phi_b $ such that $ |\VarsOf{\XC} \cap \VarsOf{\XC'}|
\geq |\VarsOf{\XC'}| - 1$.
It holds that $ \phi_a' = \phi_a \setminus \Set{\XC} \cup \Set{\XC \LC \XC'} $
and $ \phi_b' = \phi_b $.
It must hold that there is an xor-clause $ C = (x \oplus y_1 \oplus \dots
\oplus y_n \equiv \parity{} \oplus \parity{1} \oplus \dots \oplus
\parity{n}) $ in $ \psi $ such that for each $y_i \in \Set{y_1, \dots,
y_n} $ it holds that $ \psi \UPderiv (y_i \equiv \parity{i}) $.
We prove by induction $ \IL $ is $\UP$-derivable from $ \psi' = \phi_a' \wedge
\phi_b' \wedge \AL_1 \wedge \dots \wedge \AL_k $.
The induction hypothesis is that for each $y_i \in \Set{y_1, \dots, y_n} $ it
holds that $ \psi \UPderiv (y_i \equiv \parity{i}') $.

Base case: $C = \IL = (x \equiv \parity{})$. If $ C \not = \XC $, then $ (x \equiv \parity{}) $ is in $ \psi' $ and $ \psi' \UPderiv \IL $. Otherwise, $ C = \XC $.
Since $ |\VarsOf{\XC'} \cap \VarsOf{\XC}| \geq |\VarsOf{\XC'}| - 1$, it holds that
$\VarsOf{\XC \LC \XC'} = \Set{x'} $ for some $x' \in \VarsOf{\psi} $
$ \VarsOf{\XC'} = \Set{x, x'} $.
The xor-constraint $ \XC \LC \XC'$ is in $ \phi_a' $, and the
xor-constraint $\XC'$ $\phi_b' $.
It clearly holds that $ \psi' \UPderiv \IL $.

Induction step: $C \not = \IL$. If $ C \not = \XC $, then $ C $ is in $ \psi'$
and $ \psi' \UPderiv \IL$. Otherwise $ C = \XC $.
We have two cases to consider:
\begin{itemize}
\item Case 1: $ x \in \VarsOf{\XC'} $.
By induction hypothesis it holds for each $ y_i \in \Set{y_1, \dots, y_n} $
that $ \psi' \UPderiv (y_i \equiv \parity{i}) $.
If there is a variable $z \in \VarsOf{\XC'}$ such that $ z \not \in
\VarsOf{\XC} $, then $ (\XC \LC \XC') \wedge (y_1 \equiv \parity{1}) \wedge \dots \wedge (y_n \equiv \parity{n}) \UPderiv (z \equiv \parity{}') $ and then
$ \XC \wedge (y_1 \equiv \parity{1}) \wedge \dots \wedge (y_n \equiv \parity{n}) \wedge (z \equiv \parity{}') \UPderiv (x \equiv \parity{}) $, so $ \psi' \UPderiv \IL $.
Otherwise, it holds $ \VarsOf{\XC'} \subseteq \VarsOf{\XC} $, and it clearly holds that $ \psi' \UPderiv \IL $.

\item Case 2: $ x \not \in \VarsOf{\XC'} $.
By induction hypothesis it holds for each $ y_i \in \Set{y_1, \dots, y_n} $
that $ \psi' \UPderiv (y_i \equiv \parity{i}) $.
If there is a variable $z \in \VarsOf{\XC'}$ such that $ z \not \in
\VarsOf{\XC} $, then $ \XC' \wedge (y_1 \equiv \parity{1}) \wedge \dots \wedge (y_n \equiv \parity{n}) \UPderiv (z \equiv \parity{}') $ and then
$ (\XC \LC \XC') \wedge (y_1 \equiv \parity{1}) \wedge \dots \wedge (y_n \equiv \parity{n}) \wedge (z \equiv \parity{}') \UPderiv (x \equiv \parity{}) $, so $ \psi' \UPderiv \IL $.
Otherwise, it holds $ \VarsOf{\XC'} \subseteq \VarsOf{\XC} $, and it clearly holds that $ \psi' \UPderiv \IL $.
\end{itemize}
\end{proof}

\subsection{Proof of Theorem~\ref{Thm:XCut}}
\newcommand{\CutAp}{V'_\textup{a}}
\newcommand{\CutBp}{V'_\textup{b}}
\newcommand{\xorpartA}{\xorpart^\textup{a}}
\newcommand{\xorpartB}{\xorpart^\textup{b}}
\newcommand{\Iface}{X'}
\newcommand{\Ifaceparity}{\parity{}'}

\begin{retheorem}{\ref{Thm:XCut}}
Let $ (\CutA, \CutB) $ be an $\Vars$-cut partition of $\xorclauses$.
Let $ \xorclausesA = \bigwedge_{D \in \CutA} D$, $\xorclausesB = \bigwedge_{D \in \CutB} D$, and $\AL_1,\dots,\AL_k \in \LitsOf{\xorclauses}$.
 Then
it holds that:
\begin{itemize}
\item If $\xorclauses \wedge \AL_1 \wedge \dots \wedge \AL_k $ is unsatisfiable, then
\begin{enumerate}
\item $ \xorclausesA \wedge \AL_1 \wedge \dots \wedge \AL_k $ 
or $ \xorclausesB \wedge \AL_1 \wedge \dots \wedge \AL_k $ is unsatisfiable; or 
\item $\xorclausesA \wedge \AL_1 \wedge \dots \wedge \AL_k \Models 
(X' \equiv \parity{}')$ and $ \xorclausesB \wedge \AL_1 \wedge \dots
\AL_k \Models (X' \equiv \parity{}' \oplus \top) $ for some $ X' \subseteq X $
and $ \parity{}' \in \set{\top, \bot} $.
\end{enumerate}
\item 
If $\xorclauses \wedge \AL_1 \wedge \dots \wedge \AL_k $ is satisfiable
and $ \xorclauses \wedge \AL_1 \wedge \dots \wedge \AL_k \Models (Y \equiv \parity{}) $ for some $Y \subseteq \VarsOf{\xorclauses^\alpha}$, $Y \cap (\VarsOf{\xorclauses^\beta} \setminus \VarsOf{\xorclauses^\alpha}) = \emptyset$, and $\parity{} \in \Set{\top, \bot}$ where $\alpha \in \Set{\textup{a},\textup{b}}$ and $ \beta \in \Set{\textup{a}, \textup{b}} \setminus \Set{\alpha} $, then
\begin{enumerate}
\item $\xorclausesA \wedge \AL_1 \wedge \dots \wedge \AL_k \Models (Y \equiv \parity{}) $
or $ \xorclausesB \wedge \AL_1 \wedge \dots \wedge \AL_k \Models (Y \equiv \parity{}) $;
or
\item $\xorclauses^\alpha \wedge \AL_1 \wedge \dots \wedge \AL_k \Models 
( X' \equiv \parity{}') $ and $ \xorclauses^\beta \wedge \AL_1 \wedge \dots \wedge \AL_k \wedge (X' \equiv \parity{}') \Models (Y \equiv \parity{}) $ for some $X' \subseteq X$,
$\parity{}' \in \set{\top,\bot}$, $\alpha \in \Set{\textup{a},\textup{b}}$, and $\beta \in \Set{\textup{a},\textup{b}}\setminus\Set{\alpha}$.
\end{enumerate}
\end{itemize}
\end{retheorem}
\begin{proof}
  Let $(\CutAp,\CutBp)$ be an $\Vars$-cut partition of
  $\xorpart \land (\AL_1) \land ... \land (\AL_k)$
  with
  $\VarsOf{\CutAp} = \VarsOf{\CutA}$,
  $\VarsOf{\CutBp} = \VarsOf{\CutB}$,
  $\CutA \subseteq \CutAp$, and
  $\CutB\subseteq \CutBp$.
  Such partition exists because the xor-assumption literals $\AL_i$
  are unit xor-constraints.

  \begin{itemize}
\item  Case I: $\xorpart \land {\AL_1 \land ... \land \AL_k}$ is unsatisfiable.
  By Lemma~\ref{Lemma:LinearCombs},
  there is a subset $S$ of xor-constraints
  in $\xorpart \land (\AL_1) \land ... \land (\AL_k)$
  such that $\BigLinComb_{\XC \in S} \XC = (\F \equiv \T)$.
  Observe that
  $\BigLinComb_{\XC \in S} \XC =
   (\BigLinComb_{\XC \in {\CutAp \cap S}} \XC) \LinComb
   (\BigLinComb_{\XC \in {\CutBp \cap S}} \XC)$.
  If $\BigLinComb_{\XC \in {\CutAp \cap S}} \XC = (\F \equiv \T)$,
  then
  $\xorpartA \land {\AL_1 \land ... \land \AL_k}$ is also unsatisfiable.
  Similarly,
  if $\BigLinComb_{\XC \in {\CutBp \cap S}} \XC = (\F \equiv \T)$,
  then $\xorpartB \land {\AL_1 \land ... \land \AL_k}$ is unsatisfiable.
  Otherwise,
  it must be that
  $\BigLinComb_{\XC \in {\CutAp \cap S}} \XC = (\Iface \equiv \Ifaceparity)$
  and
  $\BigLinComb_{\XC \in {\CutBp \cap S}} \XC = (\Iface \equiv \Ifaceparity \X \T)$
  with $\Ifaceparity \in \Set{\F,\T}$
  because
  $\CutAp \cap \CutBp = \emptyset$,
  ${\VarsOf{\CutAp} \cap \VarsOf{\CutBp}} = \Iface$ and
  $(\BigLinComb_{\XC \in {\CutAp \cap S}} \XC) \LinComb
   (\BigLinComb_{\XC \in {\CutBp \cap S}} \XC) = (\F \equiv \T)$.
  Thus
  $\xorpartA \land {\AL_1 \land ... \land \AL_k} \Models (\Iface \equiv \Ifaceparity)$ and
  $\xorpartB \land {\AL_1 \land ... \land \AL_k} \Models (\Iface \equiv \Ifaceparity \X \T)$.
  
 \item Case II:
  $\xorpart \land \AL_1 \land ... \land \AL_k$ is satisfiable
  and
  $\xorpart \land \AL_1 \land ... \land \AL_k \Models (Y \equiv \parity{})$.
 By Lemma~\ref{Lemma:LinearCombs},
 there is a subset $S$ of xor-constraints
  in $\xorpart \land (\AL_1) \land ... \land (\AL_k)$
  such that ${\BigLinComb_{\XC \in S} \XC} = (Y \equiv \parity{})$.
  Again,
  observe that
  $(\BigLinComb_{\XC \in S} \XC) = 
   (\BigLinComb_{\XC \in {\CutAp \cap S}} \XC) \LinComb
   (\BigLinComb_{\XC \in {\CutBp \cap S}} \XC)$.
Assume that $Y \subseteq \VarsOf{\xorclausesB}$ and $ Y \cap (\VarsOf{\xorclausesA} \setminus \VarsOf{\xorclausesB}) = \emptyset $; the other case is symmetric.
Then we simplify the equation
$(\BigLinComb_{\XC \in S} \XC) = 
(\BigLinComb_{\XC \in {\CutAp \cap S}} \XC) \LinComb
(\BigLinComb_{\XC \in {\CutBp \cap S}} \XC)$ 
by (i) substituting $(\BigLinComb_{\XC \in S} \XC)$ with $ (Y \equiv \parity{})$ and (ii) evaluating $ (\BigLinComb_{\XC \in {\CutAp \cap S}} \XC) $.
This gives two cases:
\begin{enumerate}
\item evaluating $ (\BigLinComb_{\XC \in {\CutAp \cap S}} \XC) $ gives
an empty expression and the simplified equation is then $ (Y \equiv \parity{}) = (\BigLinComb_{\XC \in {\CutBp \cap S}} \XC)$, so it follows that
  $\xorpartB \land \AL_1 \land ... \land \AL_k \Models (Y \equiv \parity{})$.

\item evaluating $ (\BigLinComb_{\XC \in {\CutAp \cap S}} \XC) $ gives
an xor-constraint $ (\Iface \equiv \Ifaceparity{}) $ for some $ \Iface \subseteq X $ and $\Ifaceparity{} \in \Set{\top, \bot}$ because 
  $\CutAp \cap \CutBp = \emptyset$,
  ${\VarsOf{\CutAp} \cap \VarsOf{\CutBp}} = \Iface$ and
  $(\BigLinComb_{\XC \in {\CutAp \cap S}} \XC) \LinComb
   (\BigLinComb_{\XC \in {\CutBp \cap S}} \XC) = (Y \equiv \parity{})$.
The simplified equation is then $ (Y \equiv \parity{}) = (\Iface \equiv \Ifaceparity{}) \LinComb (\BigLinComb_{\XC \in {\CutBp \cap S}} \XC)$,
so it follows that
  $\xorpartB \land {\AL_1 \land ... \land \AL_k \land (\Iface \equiv \Ifaceparity)} \Models (Y \equiv \parity{})$.
\end{enumerate}
\end{itemize}
\end{proof}

\subsection{Proof of Theorem~\ref{Thm:TreeDecomposition}}

\begin{lemma}
\label{Lem:PCProp}
If $\phi$ is a satisfiable conjunction in $\xorclauses \wedge \psi$ such that
$\VarsOf{\phi} \subseteq Y$, $Y \subseteq \VarsOf{\xorclauses}$, $\HasPropTable{Y}{\xorclauses \wedge \psi}$,
and $ \phi \wedge (Y_1 \equiv \parity{1}) \wedge \dots \wedge (Y_n \equiv \parity{n}) \Models (Y' \equiv \parity{}')$
where $Y_1,\dots,Y_n,Y' \subseteq Y $
and $\parity{1},\dots,\parity{n},\parity{}' \in \Set{\top,\bot}$, 
then $\xorclauses \wedge \psi \wedge a_1 \equiv \parity{1} \wedge \dots \wedge a_n \equiv
\parity{n} \UPderiv a' \equiv \parity{}'$
where $ a_1,\dots,a_n,a' $ are the ``alias'' variables for the sets $Y_1,\dots,Y_n,Y'$, respectively.
\end{lemma}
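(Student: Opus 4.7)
The plan is to reduce this conditional propagation claim to the unconditional case already handled by Lemma~\ref{Lem:PTableProp2}, and then combine the resulting alias value with the assumption aliases via Lemma~\ref{Lem:PTableProp}.

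First, I would apply Lemma~\ref{Lemma:LinearCombs} to the entailment $\phi \wedge (Y_1 \equiv \parity{1}) \wedge \dots \wedge (Y_n \equiv \parity{n}) \Models (Y' \equiv \parity{}')$ to obtain a subset $S$ of the xor-constraints on the left whose linear combination equals $(Y' \equiv \parity{}')$. I split $S$ into $S_\phi \subseteq \phi$ and a subset of the assumption constraints, say $\{(Y_{i_1} \equiv \parity{i_1}), \dots, (Y_{i_m} \equiv \parity{i_m})\}$. Writing $\BigLinComb_{\XC \in S_\phi} \XC = (Z \equiv q)$, the set $Z$ is a (possibly empty) subset of $Y$ since $\VarsOf{\phi} \subseteq Y$, and one has $Z = Y' \oplus Y_{i_1} \oplus \dots \oplus Y_{i_m}$ together with $q = \parity{}' \oplus \parity{i_1} \oplus \dots \oplus \parity{i_m}$. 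By construction $\phi \Models (Z \equiv q)$.

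Next, assuming $Z \neq \emptyset$, I would invoke Lemma~\ref{Lem:PTableProp2} with $\phi$ in the role of the satisfiable subconjunction to conclude $\xorclauses \wedge \psi \UPderiv (a_Z \equiv q)$, where $a_Z$ is the alias variable for $Z$ guaranteed by property PT1. Then, applying Lemma~\ref{Lem:PTableProp} to the aliases $a_Z, a_{i_1}, \dots, a_{i_m}$ (whose governing sets xor to $Y'$) gives $\xorclauses \wedge \psi \wedge (a_Z \equiv q) \wedge (a_{i_1} \equiv \parity{i_1}) \wedge \dots \wedge (a_{i_m} \equiv \parity{i_m}) \UPderiv (a' \equiv q \oplus \parity{i_1} \oplus \dots \oplus \parity{i_m})$, whose right-hand side simplifies to $(a' \equiv \parity{}')$. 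Concatenating the two UP-derivations and discarding the unused assumptions $(a_j \equiv \parity{j})$ with $j \notin \{i_1,\dots,i_m\}$ yields the required derivation from $\xorclauses \wedge \psi \wedge (a_1 \equiv \parity{1}) \wedge \dots \wedge (a_n \equiv \parity{n})$. The degenerate case $Z = \emptyset$ forces $q = \bot$ by satisfiability of $\phi$, and Lemma~\ref{Lem:PTableProp} applied directly to the assumption aliases already suffices.

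The hard part will be the careful bookkeeping between three layers---the original variables in $Y$, the set labels $Y_i$ appearing in the assumptions, and the alias variables $a_Z, a_{i_j}, a'$ in $\xorclauses \wedge \psi$---and in particular checking that every alias variable invoked in the chain actually exists. This follows from PT1 applied to the nonempty subsets of $Y$ that appear in the decomposition, and the alias-combining xor-constraints that Lemma~\ref{Lem:PTableProp} consumes during its inductive UP-derivation are precisely those guaranteed by PT2. Once this correspondence is laid out cleanly, the two UP-derivations stitch together with no further work.
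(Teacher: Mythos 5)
Your proof is correct and follows essentially the same route as the paper's: both apply Lemma~\ref{Lemma:LinearCombs} to decompose the entailment into a linear combination of constraints from $\phi$ and of assumption constraints, and then use the propagation-table properties PT1--PT3 together with Lemma~\ref{Lem:PTableProp} to chain the alias values together. The only cosmetic difference is that you first collapse the $\phi$-part into a single constraint $(Z \equiv q)$ and invoke Lemma~\ref{Lem:PTableProp2} for it, whereas the paper reads off each original constraint's alias value directly from PT3 and feeds all the aliases into one application of Lemma~\ref{Lem:PTableProp}.
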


\begin{proof}
By Lemma~\ref{Lemma:LinearCombs}, there is a subset $\phi'$ of xor-constraints in $\phi \wedge (Y_1 \equiv \parity{1}) \wedge \dots \wedge (Y_n \equiv \parity{n}) $ such that $ \SetLC{\phi'} = (Y' \equiv \parity{}') $.
By the property PT1, it holds for each xor-constraint $ (Y'' \equiv \parity{}'') $ in $\phi'$ 
that that the corresponding ``alias'' variable $ a'' $ for the set of
variables $Y''$ is present in $\VarsOf{\phi}$
and by the property PT3 the xor-constraint $ (a'' \equiv
    \parity{}'') $ is in $ \xorclauses \wedge \psi$.
It holds by Lemma~\ref{Lem:PTableProp}, that $ \xorclauses \wedge \psi \wedge (a_1 \wedge \parity{1}) \wedge \dots \wedge (a_n \equiv \parity{n}) \UPderiv
(a' \equiv \parity{}') $.
\end{proof}

\begin{retheorem}{\ref{Thm:TreeDecomposition}}
If $\Set{X_1, \dots, X_n} $ is the family of variable sets in the tree
decomposition of the primal graph of an xor-constraint conjunction $
\xorclauses$ and 
$\phi_0, \dots, \phi_n$ is a sequence of xor-constraint conjunctions
such that $ \phi_0 = \xorclauses $ and $ \phi_i = \phi_{i-1} \wedge \PropTable{X_i, \phi_{i-1}, |X_i|} $ for $i \in \Set{1,\dots,n}$, then $ \phi_n \setminus \xorclauses $ is a GE-simulation formula for
$\xorclauses$ with $O(n {2^{2k}}) + |\xorclauses|$ xor-constraints, where
$k = \max(|X_1|, \dots, |X_n|) $.  
\end{retheorem}

\begin{figure}[ht]
\centering
\begin{tabular}{c@{\qquad}c}
\includegraphics[scale=0.4]{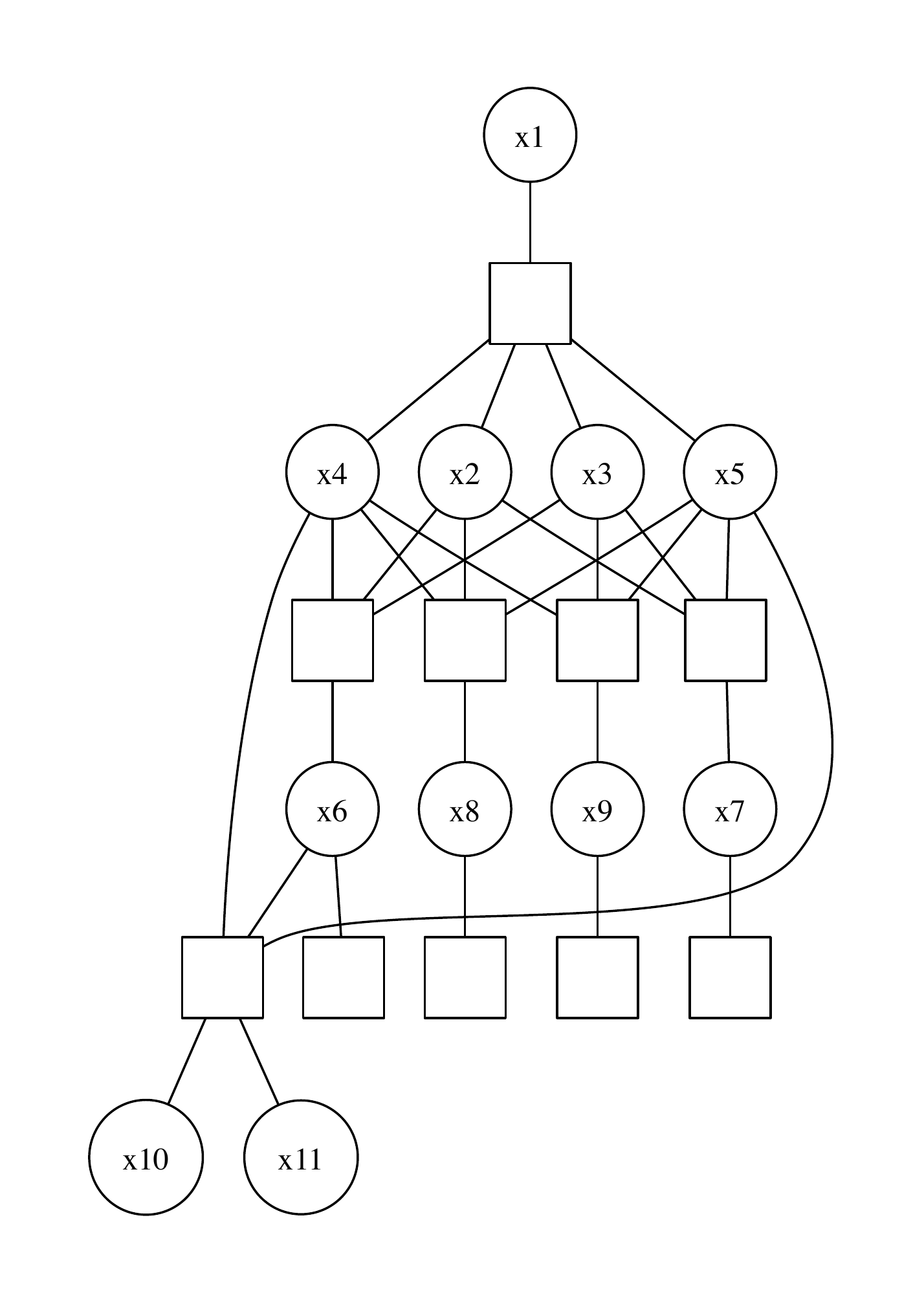}
&
\includegraphics[scale=0.4]{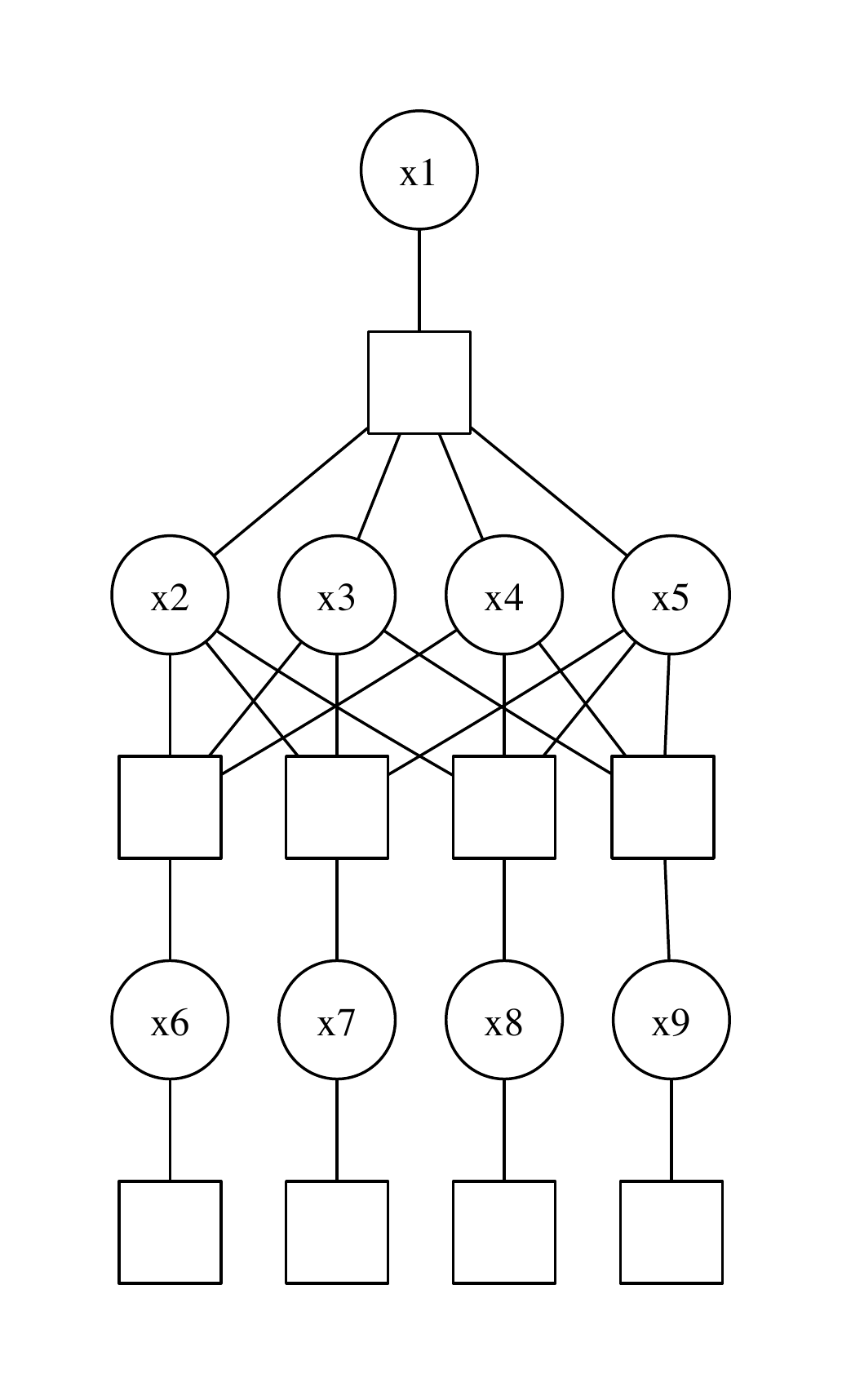}
\\
(a) a constraint graph
&
(b) subgraph of the constraint graph
\\
\\
\end{tabular}
\caption{(a) A constraint graph for an instance $\xorclauses$, (b) subgraph of the constraint graph illustrating that $\xorclauses \Models x_1 \equiv \top$}
\label{Fig:TWEx}
\end{figure}

\begin{figure}[ht]
\centering
\includegraphics[width=0.5\textwidth]{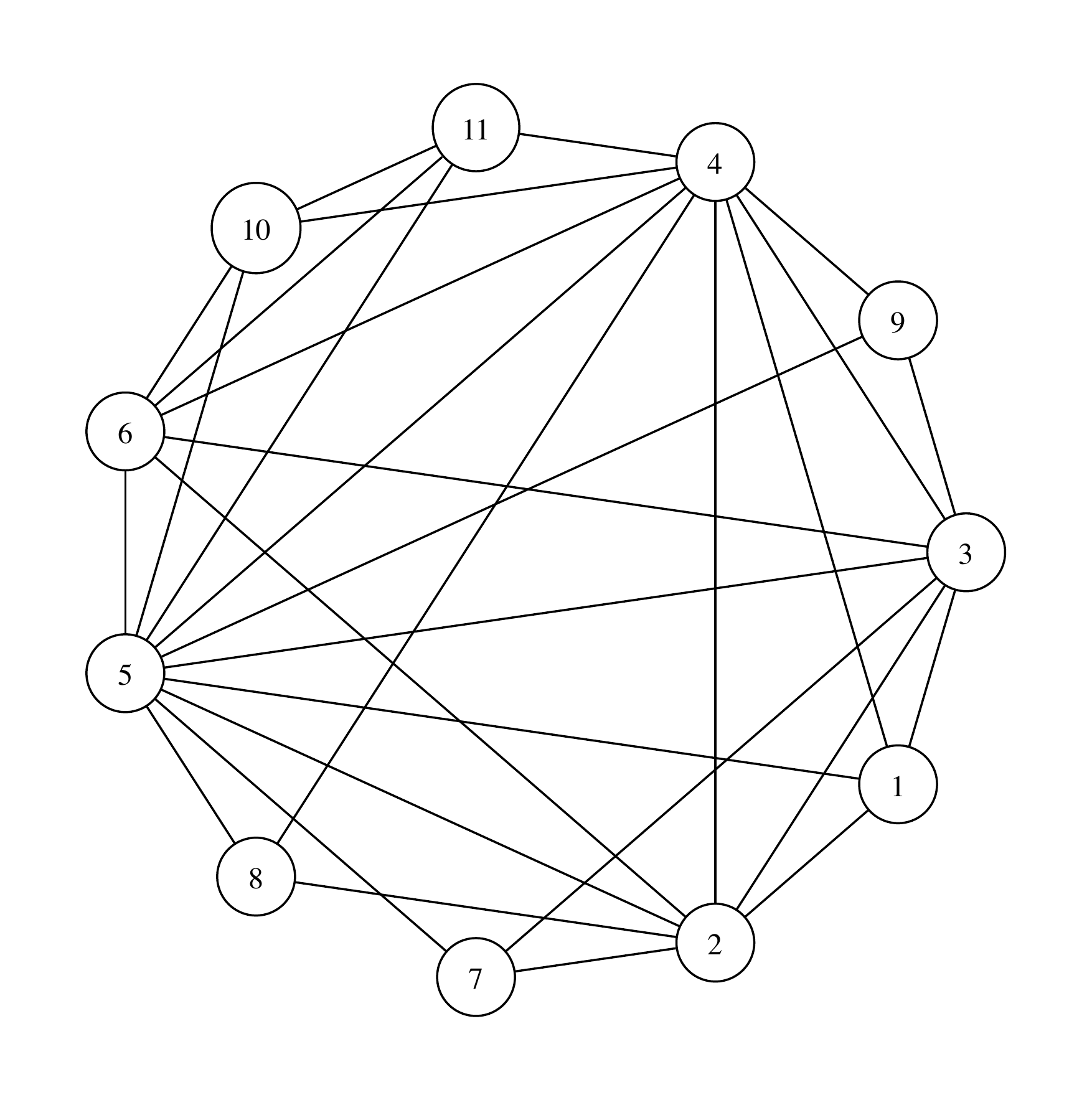}
\caption{Primal graph for the instance whose constraint graph is shown in Fig.~\ref{Fig:TWEx}(a)}
\label{Fig:TWExp}
\end{figure}

\begin{figure}[ht]
\centering
\includegraphics[width=0.5\textwidth]{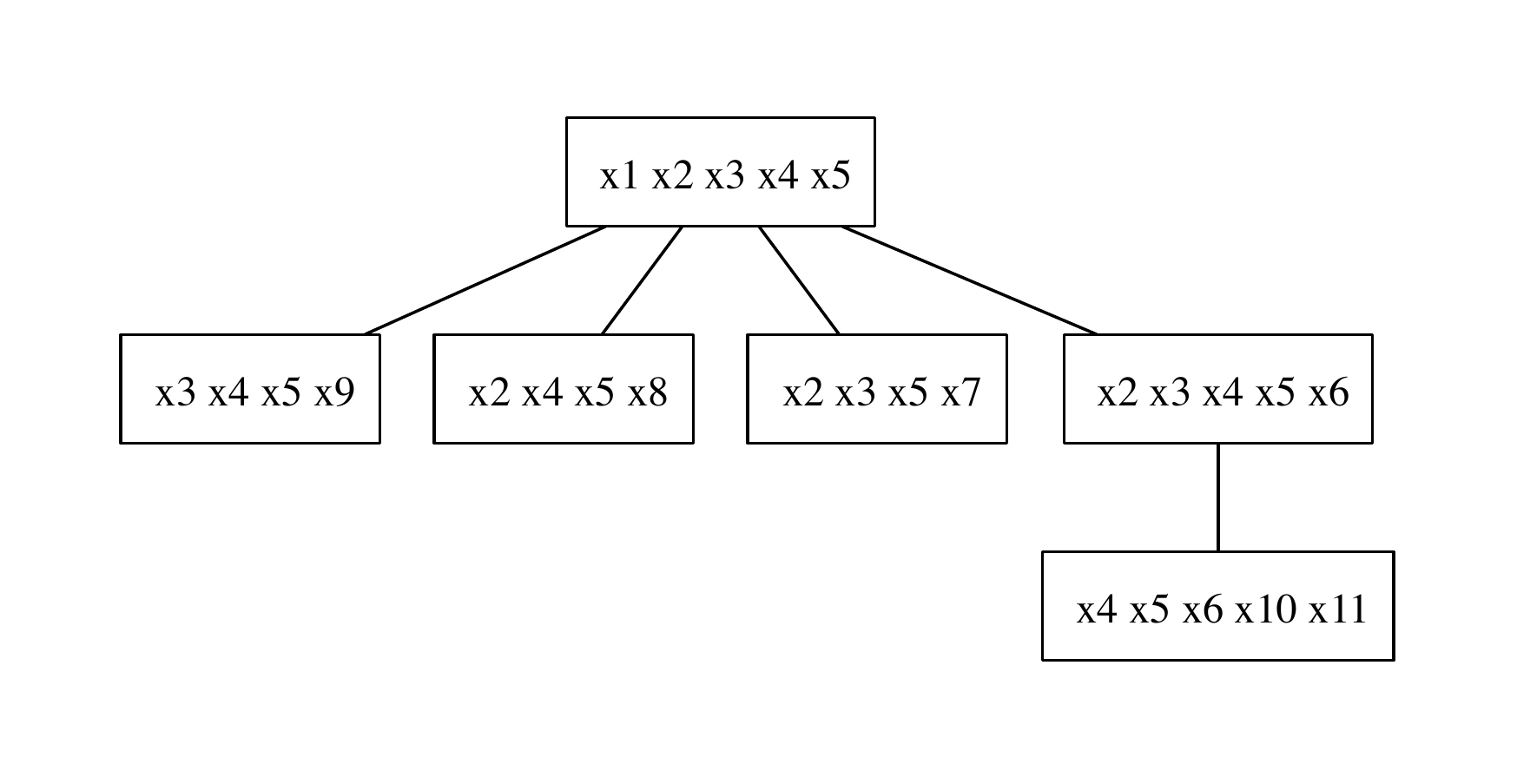}
\caption{Tree decomposition of the primal graph in Fig.~\ref{Fig:TWExp}. Assume
that $X_1 = \Set{x_1,x_2,x_3,x_4,x_5},
     X_2 = \Set{x_3,x_4,x_5,x_9},
     X_3 = \Set{x_2,x_4,x_5,x_8},
     X_4 = \Set{x_2,x_3,x_5,x_7},
     X_5 = \Set{x_2,x_3,x_4,x_5,x_6},
     X_6 = \Set{x_4,x_5,x_6,x_{10},x_{11}},  \phi_0=\xorclauses$, $\phi_1 = \phi_0 \wedge
\PropTable{X_1, \phi_0, |X_1|}$, $ \phi_2 = \phi_1
\wedge \PropTable{X_2, \phi_1, |X_2|}$, $\phi_3 = \phi_2
\wedge \PropTable{X_3, \phi_2, |X_3| }$, $ \phi_4 =
\phi_3 \wedge \PropTable{X_4, \phi_3, |X_4|}$, $\phi_5
= \phi_4 \wedge \PropTable{X_5, \phi_4, |X_5|}$, and
$\phi_6 = \phi_5 \wedge \PropTable{X_6, 
    \phi_5, |X_6|} $, and $ \psi = \phi_6 \setminus \xorclauses$. It holds that
        $\HasPropTable{X_1}{\xorclauses \wedge \psi}, \dots,
    \HasPropTable{X_6}{\xorclauses \wedge \psi}$. The \UP{} system can deduce $
        (x_1 \equiv \top)$, i.e. $ \xorclauses \wedge \psi \UPderiv (x_1 \equiv
                \top)$ by ``propagating'' intermediate linear combinations
        starting from the leaves of the tree decomposition towards the root
        node (the node with the set of variables $X_1$). Since $x_6 \equiv \top$ is in
        $\xorclauses$ it holds that $ \xorclauses \wedge \psi \UPderiv
        a_{2,3,4} \equiv \bot$. And in a similar way because $ x_7 \equiv
        \top$, $x_8\equiv\top$,and $x_9 \equiv\top$ are in $\xorclauses$, then
        $ \xorclauses \wedge \psi \UPderiv a_{2,3,5} \equiv \bot $,
    $\xorclauses \wedge \psi \UPderiv a_{2,4,5} \equiv \bot$, and $\xorclauses
        \wedge \psi \UPderiv a_{3,4,5} \equiv \bot$. By combining these
        intermediate results, it holds that $ \xorclauses \wedge \psi \UPderiv
        a_{2,3,4,5} \equiv \bot$ and finally $ \xorclauses \wedge \psi \UPderiv
        x_1 \equiv \top$}
        \label{Fig:TWTree}
\end{figure}

\begin{proof}
The construction is illustrated in Figures~\ref{Fig:TWEx}, \ref{Fig:TWExp}, \ref{Fig:TWTree}.

Let $ \psi = \phi_n \setminus \xorclauses $.
We first prove that the satisfying truth assignments of $\xorclauses$ are
exactly the ones of $\xorclauses \wedge \psi$ when projected to
$\VarsOf{\xorclauses}$. 
By Lemma~\ref{Lem:PropTableModels}, the satisfying truth assignments of $ \phi$ are exactly the ones of $ \phi \wedge \PropTable{Y, \phi, k} $ when projected
to $\VarsOf{\phi}$, so by induction the satisfying truth assignments of $
\xorclauses$ are exactly the ones of $ \xorclauses \wedge \psi$ when projected
to $ \VarsOf{\xorclauses}$.
The number of xor-constraints in $ \PropTable{Y, \phi, k} $ is $O(2^{2k}) +
|\phi|$, so the number of xor-constraints in $ \psi $ is $ O(n 2^{2k}) +
|\xorclauses| $.

It holds for each $ X_i \in \Set{X_1, \dots, X_n} $ by
Lemma~\ref{Lem:PropTableImpl} that $ \HasPropTable{X_i}{\xorclauses \wedge
\psi}$.
Next we show that if $\xorclauses$ is satisfiable and $\xorclauses \wedge \AL_1
\wedge \dots \wedge \AL_k \Models \IL$, then $\IL$ is $\UP{}$-derivable from $
\xorclauses \wedge \psi \wedge \AL_1 \wedge \dots \wedge
\AL_k$. 
Assume that $ \xorclauses$ is satisfiable and $ \xorclauses \wedge \AL_1 \wedge \dots \wedge \AL_k \Models \IL $.
\newcommand{\subtree}{\ensuremath{T'}}
We prove by induction on the structure of the tree decomposition that the
following property holds for each subtree $\subtree$ of the tree decomposition having the set of variables $X_{T'}$ and the root node of
$\subtree$ with the set of variables $X_r$:
\begin{itemize}
\item If $\phi$ is a satisfiable conjunction in $\xorclauses \wedge \psi \wedge \AL_1 \wedge \dots \wedge \AL_k $ such that
$\VarsOf{\phi} \subseteq X_{T'}$,
and $ \phi \wedge (Y_1 \equiv \parity{1}) \wedge
\dots \wedge (Y_m \equiv \parity{m}) \Models (Y' \equiv \parity{}') $ 
where $Y' \subseteq X_r $ and for each $Y_j \in \Set{Y_1,\dots,Y_n}$ there is a
$k \in \Set{1,\dots,n}$ for which it holds that $Y_j \subseteq X_k $
and
$\parity{1},\dots,\parity{m},\parity{}' \in \Set{\top,\bot}$, 
then $\xorclauses \wedge \psi \wedge (a_1 \equiv \parity{1}) \wedge \dots \wedge (a_n \equiv
        \parity{m}) \UPderiv (a' \equiv \parity{}')$
where $a_1,\dots,a_n,a'$ are the ``alias'' variables for the variable sets
$Y_1,\dots,Y_n,Y'$, respectively.
\end{itemize}
The induction hypothesis is that the property holds for each proper subtree of
$T'$.

Base case: $T'$ has only one node. The property holds by Lemma~\ref{Lem:PCProp}.

Induction step: $T'$ has more than one node.
Let $\phi' = \phi \wedge (Y_1 \equiv \parity{1}) \wedge \dots \wedge (Y_m \equiv \parity{m}) $. 
The idea is to remove xor-constraints involving variables other than in
$X_r$ from $\phi'$ and add additional xor-constraints of the type involving variables only in $X_r$.
This is done by considering each direct child node of the root node of $T'$ at
a time possibly rewriting $\phi'$ by substituting a sub-conjunction of $\phi'$ with at most one xor-constraint having only variables in $X_r$.
Let $T''$ be the subtree induced by one direct child node of the root node
having the set of variables $X_{T''}$. The per-child substitution operation of $\phi'$ is defined as follows.
\newcommand{\phiA}{\phi^{\textup{a}}}
\newcommand{\phiB}{\phi^{\textup{b}}}
Let $\phiA $ be the maximal conjunction of xor-constraints in $\phi'$ such that
$\VarsOf{\phiA} \subseteq X_{T''}$, and $\phiB$ be the conjunction of
xor-constraints in $\phi'$ but not in $\phiA$.
If $\phiA$ is empty, then nothing needs to be removed from $\phi'$.
Otherwise, $\phiA$ is non-empty and there is an $\Vars$-cut partition
$(\CutA, \CutB)$ of $\phi'$ such that $ \xorclausesA = \bigwedge_{D \in \CutA} D$, $\xorclausesB = \bigwedge_{D \in \CutB} D$ and $\VarsOf{\phi^a} \cap \VarsOf{\phi^b} = X \subseteq X_r \cap X_c $.
By Theorem~\ref{Thm:XCut}, it holds that
\begin{enumerate}
\item $\phiA \Models (Y' \equiv \parity{}') $
or $ \phiB \Models (Y' \equiv \parity{}') $;
or
\item $\phi^\alpha \Models 
( X'' \equiv \parity{}'') $ and $ \phi^\beta \wedge (X'' \equiv \parity{}'') \Models (Y' \equiv \parity{}') $ for some $X'' \subseteq X$,
$\parity{}' \in \set{\top,\bot}$, $\alpha \in \Set{\textup{a},\textup{b}}$, and $\beta \in \Set{\textup{a},\textup{b}}\setminus\Set{\alpha}$.
\end{enumerate}

We analyze the cases:
\begin{itemize}
\item[] Case 1: $\phiA \Models (Y' \equiv \parity{}') $
or $ \phiB \Models (Y' \equiv \parity{}') $. Since $Y' \subseteq X_r$, it must be that $\phiB \Models (Y' \equiv \parity{}') $. In this case, set $\phi' \leftarrow \phiB $.
\item[] Case 2: $\phi^\alpha \Models 
( X'' \equiv \parity{}'') $ and $ \phi^\beta \wedge (X'' \equiv \parity{}'') \Models (Y' \equiv \parity{}') $ for some $X'' \subseteq X$,
$\parity{}' \in \set{\top,\bot}$, $\alpha \in \Set{\textup{a},\textup{b}}$, and $\beta \in \Set{\textup{a},\textup{b}}\setminus\Set{\alpha}$. Again since $Y' \subseteq X_r$, it must be that $\alpha = \textup{a}$. In this case, set $ \phi' \leftarrow \phiB \wedge (X'' \equiv \parity{}'') $.
\end{itemize} 

After each child node has been processed in this way, it holds that $\phi'
\Models (Y \equiv \parity{}) $ and $\VarsOf{\phi'} \subseteq X_r $.
It also holds by the induction hypothesis for each xor-constraint $ (X_i'' \equiv \parity{i}'') $ in the
sequence $ (X_1'' \equiv \parity{1}''), \dots, (X_q'' \equiv \parity{q}'') $ of
added xor-constraints that the corresponding ``alias'' variables $a_1'', \dots,
      a_q''$ for $X_1'', \dots, X_q''$, respectively, that,
      since $ \phi \wedge (Y_1 \equiv \parity{1}) \wedge \dots \wedge (Y_m \equiv \parity{m}) \wedge (X_{1}'' \equiv \parity{1}'') \wedge \dots \wedge (X_{i-1}'' \equiv \parity{i-1}'') \Models (X_{i}'' \equiv \parity{i}'')$, then it holds
      that $\xorclauses \wedge \psi \wedge \AL_1 \wedge \dots \wedge \AL_k \wedge (a_1 \equiv \parity{1}) \wedge \dots \wedge (a_m \equiv \parity{m}) \wedge (a_1'' \equiv \parity{1}'') \wedge \dots \wedge (a_{i-1}'' \equiv \parity{i-1}'') \UPderiv a_i \equiv \parity{i}''$. 
Now $\phi' \Models (Y \equiv \parity{}) $, $\VarsOf{\phi'} \subseteq X_r$,
and each xor-constraint $ (X'' \equiv \parity{}'') $ in $\phi'$ has its corresponding ``alias'' variable $ a'' $ implied by unit propagation, that is, $\xorclauses \wedge \psi \wedge \AL_1 \wedge \dots \wedge \AL_k \UPderiv (a' \equiv \parity{}') $.
By induction it follows that $\xorclauses \wedge \psi \wedge \AL_1 \wedge \dots \wedge
\AL_k \UPderiv \IL $.

It remains to show that if $\xorclauses$ is unsatisfiable,
then $\psi \UPderiv (\bot \equiv \top)$. Assume that 
$\xorclauses$ is unsatisfiable. 
By Lemma~\ref{Lemma:LinearCombs}, there is a minimal subset $S$ of
xor-constraints in $ \xorclauses \wedge \AL_1 \wedge \dots \wedge \AL_k $
such that $ \SetLC{S} = (\bot \equiv \top) $. 
Now, let $ S' = S \setminus \XC $
be a subset of $S$ identical to $S$ except that one xor-constraint $\XC = (X' \equiv \parity{}) $ in $S$ is removed.
It clearly holds that $S'$ is satisfiable and $ \SetLC{S'} = (X' \equiv \parity{} \oplus \top) $. 
There is a node in $T'$ that has the variables $X_i$ such that $ X' \subseteq X_i$.
It holds that $ \HasPropTable{X_i}{\xorclauses \wedge \psi}$, so by Lemma~\ref{Lem:PTableProp2} it holds for the ``alias'' variable $a' $ for $X'$ that 
$\xorclauses \wedge \psi \wedge \AL_1 \wedge \dots \wedge \AL_k \UPderiv (a' \equiv \parity{} \oplus \top) $.
Repeat the proof as above for the satisfiable case and for the subset $S'$ showing
that $ \xorclauses \wedge \psi \wedge \AL_1 \wedge \dots \wedge \AL_k \UPderiv 
(a' \equiv \parity{}) $.
It follows that $ 
\xorclauses \wedge \psi \wedge \AL_1 \wedge \dots \AL_k \UPderiv (\bot \equiv \top) $.
All the requirements for GE-simulation formula are satisfied, so 
$\psi$ is a GE-simulation formula for $\xorclauses$.
\end{proof}

\end{document}